\crefname{subsection}{subsection}{subsections}
\newcommand{\cP}{\mathcal{P}}
\newcommand{\ssw}{w^*}
\newcommand{\ssx}{x^*}
\newcommand{\sss}{s^*}
\newcommand{\set}[1]{\{#1\}}
\newcommand{\bpgraph}[1][G]{#1 = (U, V; E)}
\newcommand{\euler}{\mathbf{\mathsf{e}}} %%% Euler's constant
\newcommand{\patience}{\theta}
\newcommand{\pat}{\patience}
\newcommand{\opat}{\overline{\pat}}
\newcommand{\eps}{\varepsilon}
\newcommand{\bI}{\mathbbm{1}}
\newcommand{\bE}{\mathbb{E}}
\newcommand{\ALG}{\mathsf{ALG}}
\newcommand{\OPT}{\mathsf{OPT}}
\newcommand{\LPOPT}{\mathsf{OPT}_{\mathsf{LP}}}
\newcommand{\OPTP}{\mathsf{OPT}_{\mathsf{LPP}}}
\newcommand{\LRG}{I_{\mathrm{LARGE}}}
\newcommand{\SML}{I_{\mathrm{SMALL}}}
\newcommand{\bigger}{\bBigg@{4}}
\newcommand{\biggerr}{\bBigg@{5}}
\begin{document}
%%%%%%%%%%%%%%%%

% Outcomment only when entries are known. Otherwise leave as is and
%   default values will be used.
%\setcounter{page}{1}
%\VOLUME{00}%
%\NO{0}%
%\MONTH{Xxxxx}% (month or a similar seasonal id)
%\YEAR{0000}% e.g., 2005
%\FIRSTPAGE{000}%
%\LASTPAGE{000}%
%\SHORTYEAR{00}% shortened year (two-digit)
%\ISSUE{0000} %
%\LONGFIRSTPAGE{0001} %
%\DOI{10.1287/xxxx.0000.0000}%

%\RUNAUTHOR{}

% \RUNTITLE{}

\TITLE{Online Matching Frameworks under Stochastic Rewards, Product Ranking, and Unknown Patience}

\ARTICLEAUTHORS{
\AUTHOR{Brian Brubach}
\AFF{Computer Science Department, Wellesley College, \EMAIL{bb100@wellesley.edu}, \URL{}}
\AUTHOR{Nathaniel Grammel}
\AFF{Department of Computer Science, University of Maryland, College Park, \EMAIL{ngrammel@umd.edu}, \URL{}}
\AUTHOR{Will Ma}
\AFF{Graduate School of Business, Columbia University, New York, \EMAIL{wm2428@gsb.columbia.edu}, \URL{}}
\AUTHOR{Aravind Srinivasan}
\AFF{Department of Computer Science, University of Maryland, College Park, \EMAIL{asriniv1@umd.edu}, \URL{}}
}

\ABSTRACT{

\SingleSpacedXI We study generalizations of online bipartite matching in which each arriving vertex (customer) views a ranked list of offline vertices (products) and matches to (purchases) the first one they deem acceptable. The number of products that the customer has patience to view can be stochastic and dependent on the products seen. We develop a framework that views the interaction with each customer as an abstract resource consumption process, and derive new results for these online matching problems under the adversarial, non-stationary, and IID arrival models, assuming we can (approximately) solve the product ranking problem for each single customer. To that end, we show new results for product ranking under two cascade-click models: an optimal algorithm when each item has its own hazard rate for making the customer depart, and a 1/2-approximate algorithm when the customer has a general item-independent patience distribution. We also present a constant-factor 0.027-approximate algorithm in a new model where items are not initially available and arrive over time. We complement these positive results by presenting three additional negative results relating to these problems.
% Finally, we present three negative results of interest: one formalizing the notion of a stochasticity gap exhibited by existing approaches to this problem, an example showing the analysis of SimpleGreedy in existing work to be tight, and another one for the single-customer problem in which any constant-factor approximation is impossible when compared to a benchmark that knows the realization of the patience in advance.

}

%\KEYWORDS{}

%\HISTORY{}

\maketitle
%%%%%%%%%%%%%%%%%%%%%%%%%%%%%%%%%%%%%%%%%%%%%%%%%%%%%%%%%%%%%%%%%%%%%%

%\include{response-to-reviewers}

%\clearpage

\section{Introduction}

Online matching is a fundamental problem in e-commerce and online advertising, introduced in the seminal work of \citet{kvv}. While offline matching has a long history in economics and computer science, online matching has exploded in popularity with the ubiquity of the internet and the emergence of online marketplaces. 
A common scenario in e-commerce is the online sale of unique goods due to the ability to reach niche markets via the internet (e.g., eBay); typical products include rare books, trading cards, art, crafts, and memorabilia. 
We will use this as a motivating example to describe our setting. However, the settings we study can also model job search/hiring, crowdsourcing, online advertising, ride-sharing, and other online-matching problems.

In classical online bipartite matching, we start with a known set of \textit{offline vertices} that may represent items for sale or ads to be allocated. Then, an unknown sequence of \textit{online vertices} arrive, which may represent customers, users, or visitors to a webpage.
These online vertices or customers arrive one-by-one, and the decision to match each customer or not (and if so, to which item) must be made irrevocably before the next customer is revealed. In the original formulation, the online vertices are chosen fully adversarially, although models that assume they are drawn from probability distributions have since been studied \citep{bib:Feldman,alaei2012online}.

Many generalizations of online matching have also been proposed, including stochastic rewards and weighted graphs.
Under stochastic rewards, there can be repeated interactions with a customer (recommending an item, and if they do not accept, recommending another item, etc.) before the next one arrives, as we describe subsequently.
Our paper's goal is to provide a framework that decouples the repeated-interaction problem for a single customer from the overall allocation problem over multiple customers, leading to new results and unification of old ones.
Moreover, motivated by product ranking, we derive new results for the repeated-interaction problem with a single customer, including the extension in which the horizon for these interactions is unknown or stochastic.

\textbf{Description of stochastic rewards model, with patience.}
In the \textit{stochastic rewards} model, each edge exists independently according to a known probability; this probability is revealed upon the arrival of its incident online vertex.
This is motivated by online platforms in which only a probabilistic prediction of whether a customer will buy an item is known at the time they arrive.
The algorithm can "probe" edges incident to an online vertex, or equivalently recommend the customer an item, after which if they accept, then the item is sold committedly.
If they otherwise reject, then under the basic stochastic rewards model \citep{mehta2012online} there is no opportunity to offer another item; this is known as the customer having a \textit{patience} of 1.

Other papers \citep{BansalLPCure,AGMesa,brubach2017attenuate} have more generally allowed the customer to have any deterministic patience $\pat$.
This can be interpreted as a \textit{product ranking} problem where $\pat$ different items are listed on a page, and the customer will view them in order, stopping once they see an acceptable item, or reaching the end of the page.
The product ranking problem where $\pat$ is deterministic can be efficiently solved using dynamic programming \citep{purohit2019hiring}.
More generally, if $\pat$ is random, but drawn from a known distribution, then the customer may probabilistically depart after seeing any undesirable item; this is called the \textit{cascade-click} model of product ranking.
We will derive new results for the cascade-click model.

\textbf{Description of edge weights and stochastic arrival models.} In an orthogonal
generalization of online bipartite matching, edges between items and customers
may have a \textit{weight}, which is the reward collected when that edge is matched. This
can represent, e.g., the price at which that item is sold to the customer. When
edges can take on different possible weights, parametric competitive ratios are
known \citep{ma2020algorithms}, but a competitive ratio that is an absolute constant is impossible in the original adversarial arrival model
\citep[see][]{mehtaBook}. Therefore, many papers have focused on the relaxed models of \emph{stochastic arrivals}
or \emph{vertex weights} instead, each of which circumvents this impossibility.

In the stochastic arrival models, the total number of online vertices $T$ is
known, and each online vertex $t=1,\ldots,T$ has a \textit{type}\footnote{This includes everything that is known about the customer at the time of their arrival, including purchase probabilities, patience distribution, edge weights, etc.} drawn independently from a
known distribution. Generally we allow distributions to be \textit{non-stationary} and
vary with $t$, although we also consider the IID special case where these
distributions are identical. Stochastic arrival models are motivated by settings
with sufficient data to estimate the distribution over types.
% ; and although the
% assumption that $T$ is known seems restrictive, in this paper we also consider the setting where $T$ is random.
% we derive a \textit{further impossibility result}
% that shows an absolute-constant competitive ratio is again impossible if $T$ is
% unknown, even under the IID case.
Meanwhile, in the model with vertex weights,
all edges incident to any offline vertex $u$ must have the same weight. This is
motivated by each offline item having its own fixed price that is identical across
customers.

\subsection{Our Contributions}

% Through our decoupling framework, we derive new results for online bipartite
% matching when customers can have stochastic rewards/patience (see \Cref{subsec1} below).
% \red{
We develop a decoupling framework, which we describe in greater detail in
  \Cref{subsec1} below, wherein we first study a simpler, single-customer
  version of various stochastic matching problems, and then use the algorithms
  for these problems to inform decisions during online customer arrivals. This approach allows us to derive new results for online bipartite matching with stochastic rewards and, in many cases, stochastic patience as well.
  % }

We consider both vertex weights and general edge weights in combination with the
adversarial, non-stationary, and IID arrival models. Since our framework
requires the repeated-interaction problems to be solvable for a single customer,
we also make advancements on this front (see \Cref{subsec2}); namely, improving
algorithms for the \textit{cascade-click} model of product ranking, and deriving new
algorithms in a model where \textit{items are arriving} over time. Finally, we derive
several negative results of interest (see \Cref{subsec3}).
% , including one that shows the
% competitive ratio to be 0 in any problem where $\theta$ or $T$ is unknown, if
% comparing against a benchmark that knows the realization of the patience/horizon
% length in advance.

\subsubsection{Framework that decouples online matching from single-customer problems.} \label{subsec1}
First we build a framework that takes as input a subroutine for solving the single-customer problem, and outputs an algorithm for the overall multi-customer online matching problem, under the aforementioned variants.
The competitive ratios guaranteed by our framework are explained in \Cref{tab:matchingresults}, and we would like to highlight a key distinction in our approach.
Existing analyses of stochastic rewards \citep{BansalLPCure,mehta2012online,AGMesa,brubach2017attenuate} all use an LP that is \textit{specific} to the stochastic rewards matching process, which exhibits a stochasticity gap (see \Cref{sec:stochgap}).
By contrast, our framework uses an \textit{abstract} LP, in which:
\begin{itemize}
\item There is a variable $x_v(\pi)$ for each of the (exponentially-many) policies $\pi$ that could be used for interacting with a single customer of type $v$;
\item Each such policy $\pi$ ends up matching each available offline vertex $u$ with probability $p_{uv}(\pi)$;
\item There is a single set of constraints tying together the customers over time, which enforce that each offline vertex $u$ is matched at most once in expectation.
\end{itemize}
Our framework abstracts away the details of the stochastic rewards matching process, deterministic vs.\ stochastic patience, etc., and holds as long as the policies represent different consumption processes\footnote{
Similar ideas have appeared in \citet{cheung2022inventory}, who consider abstract "actions" that have different immediate rewards and different consumption distributions over resources.  However, their focus is on learning these distributions.
} that use up the offline vertices $u$ \textit{independently} over time according to known probabilities.

Our results, however, are predicated on the existence of a subroutine that can (approximately) solve the repeated-interaction problems for each customer.
We will call such a subroutine \textit{$\kappa$-approximate} if given any weights $w_{uv}$, it finds a policy $\pi$ whose immediate expected reward $\sum_u w_{uv}p_{uv}(\pi)$ is at least $\kappa$ times the maximum possible immediate expected reward over all policies, for some $\kappa\in[0,1]$.
Equipped with a $\kappa$-approximate subroutine, our framework provides:
\begin{enumerate}
\item A $\kappa/2$-competitive algorithm for vertex weights and adversarial arrivals \textbf{(\Cref{sec:adversarial})};
\item A $\kappa/2$-competitive algorithm for edge weights and non-stationary arrivals \textbf{(\Cref{sec:edgeprophet})};
\item A $(1-1/\euler)\kappa$-competitive algorithm for edge weights and IID arrivals \textbf{(\Cref{sec:prophetiid})}; and
\item A $(1-1/\euler)\kappa$-competitive algorithm for vertex weights and non-stationary arrivals \textbf{(\Cref{sec:prophetvw})}.
\end{enumerate}
The value $\kappa=1$ is possible when $\theta$ is deterministic \citep{purohit2019hiring}.  We derive new results below showing that $\kappa=1$ is also possible when $\theta$ follows an (item-dependent) hazard rate model, and that $\kappa=1/2$ is possible when $\theta$ follows any (item-independent) distribution.
This, in conjunction with our framework, justifies all of the results in \Cref{tab:matchingresults}.

\begin{table}
\TABLE
{Landscape of Online Matching Results \label{tab:matchingresults}}
{\begin{tabular}{ l l  l  l }
\multicolumn{1}{l}{} \\
\multicolumn{1}{l}{} \\
%%%%% ADVERSARIAL
\textbf{\large Adversarial} 	& \thead{\textbf{Unweighted}} 		& \thead{\textbf{Vertex-weighted}} 	& \thead{\textbf{Edge-weighted}} \\ 
\toprule
\textbf{Non-stochastic} 		& \makecell{$0.632$ (tight)\\ \cite{kvv}} 	& \makecell{$0.632$ (tight)\\ \cite{aggarwalVertex}} 		& \makecell{[must be weight-dependent]\\ \citet{ma2020algorithms}} \\ 
\midrule
\textbf{Stochastic Rewards} 	& \makecell{$0.5$\\ \cite{mehta2012online}} 	&  \makecell{$? \to \textbf{0.5}$}		& \makecell{[must be weight-dependent]\\ \citet{ma2020algorithms}} \\ 
\midrule
\makecell{\textbf{Deterministic Patience/}\\ \textbf{Hazard Rate Model}} 		& \makecell{$? \to \textbf{0.5}$}	& \makecell{$? \to \textbf{0.5}$}		& \makecell{--} \\
\midrule
%\textbf{Constant Hazard Rate} 		& \makecell{$? \to \textbf{0.5}$}	& \makecell{$? \to \textbf{0.5}$}		& \makecell{--} \\
%\midrule
\textbf{Stochastic Patience} 		& \makecell{$? \to \textbf{0.25}$}	& \makecell{$? \to \textbf{0.25}$}		& \makecell{--} \\
\bottomrule
\multicolumn{1}{l}{} \\
\multicolumn{1}{l}{} \\
%%%%% NON-STATIONARY
\textbf{\large Non-stationary} 
& \thead{\textbf{Unweighted}} 		& \thead{\textbf{Vertex-weighted}} 	& \thead{\textbf{Edge-weighted}} \\ 
\toprule
\textbf{Non-stochastic} & \makecell{$0.632$\\ \cite{alaei2012online}} & \makecell{$0.632$\\ \cite{alaei2012online}} & \makecell{$0.5$\\ \cite{alaei2012online}} \\ 
\midrule
\makecell{\textbf{Deterministic Patience/}\\ \textbf{Hazard Rate Model}} 	& \makecell{$? \to \textbf{0.632}$}  & \makecell{$? \to \textbf{0.632}$}	& \makecell{$? \to \textbf{0.5}$}   \\
\midrule
%\textbf{Constant Hazard Rate} 		& \makecell{$? \to \textbf{0.632}$}	& \makecell{$? \to \textbf{0.632}$}		& \makecell{$? \to \textbf{0.5}$} \\
%\midrule
\textbf{Stochastic Patience} 		& \makecell{$? \to \textbf{0.316}$}	& \makecell{$? \to \textbf{0.316}$}		& \makecell{$? \to \textbf{0.25}$} \\
\bottomrule
\multicolumn{1}{l}{} \\
\multicolumn{1}{l}{} \\
%%%%% KNOWN IID
\textbf{\large Known IID} 
& \thead{\textbf{Unweighted}} 		& \thead{\textbf{Vertex-weighted}} 	& \thead{\textbf{Edge-weighted}} \\ 
\toprule
\textbf{Non-stochastic} 	& \makecell{$0.729$\\ \cite{Brubach2016arxiv}}& \makecell{$0.729$\\ \cite{Brubach2016arxiv}}	& \makecell{$0.705$\\ \cite{Brubach2016arxiv}} \\ 
\midrule
\textbf{Stochastic Rewards} 	& \makecell{$0.632$\\ \cite{Brubach2016arxiv}}		& \makecell{$0.632$\\ \cite{Brubach2016arxiv}}		& \makecell{$0.632$\\ \cite{Brubach2016arxiv}} \\ 
\midrule
\makecell{\textbf{Deterministic Patience/}\\ \textbf{Hazard Rate Model}} 	& \makecell{$0.46$ $\to \textbf{0.632}$\\ \cite{brubach2017attenuate}} & \makecell{$0.46$ $\to \textbf{0.632}$\\ \cite{brubach2017attenuate}}	&   \makecell{$0.46$ $\to \textbf{0.632}$\\ \cite{brubach2017attenuate} } \\
\midrule
%\textbf{Constant Hazard Rate} 		& \makecell{$? \to \textbf{0.632}$}	& \makecell{$? \to \textbf{0.632}$}		& \makecell{$? \to \textbf{0.632}$} \\
%\midrule
\textbf{Stochastic Patience} 		& \makecell{$? \to \textbf{0.316}$}	& \makecell{$? \to \textbf{0.316}$}		& \makecell{$? \to \textbf{0.316}$} \\
\bottomrule
\end{tabular}}
{Landscape of online matching results grouped by arrival model, form of edge weights, and including the unknown patience models we introduce: the (item-dependent) hazard rate model, and the arbitrary (item-independent) stochastic patience model. \textbf{Bold} results with arrows show the improvements from this paper, with question marks denoting problems where no prior bound was known.
% If a result follows immediately from the work of a paper, we cite that paper even if the specific result was not mentioned in the paper itself.
}
\end{table}

\subsubsection{New $\kappa$-approximate subroutines for single-customer problems.} \label{subsec2}
As discussed above, it is important for our framework to have $\kappa$-approximate subroutines for the repeated-interaction problems with a single customer.
We make the following advancements on this front:
\begin{enumerate}
\item A 1-approximate (optimal) subroutine, in the model where each item $i$ has a known \textit{hazard rate} $r_i$ and, if seen by the customer and undesired, causes the customer to depart with probability (w.p.)~$r_i$;
\item A 1/2-approximate subroutine, in the model where the customer has an arbitrary known patience distribution (and the probabilities of departing do not depend on the items seen);
\item A 0.027-approximate subroutine, in a new model where the customer has a deterministic patience, but the items are arriving over time according to Bernoulli processes.
\end{enumerate}
The first two models can be motivated by product ranking in e-commerce.
A special case of the first model \textbf{(\Cref{sec:starconsthazard})} is where $r_i$ is equal to some $r$ for all $i$, which represents a patience distribution with \textit{constant} hazard rate $r$, i.e.\ a customer who departs w.p.~$r$ after each position regardless of the item seen.
Meanwhile, our 1/2-approximation for the second model \textbf{(\Cref{sec:arbpatience})} improves the state-of-the-art $1/e$-approximation from \citet{chen2020revenue} for this cascade-click model of product ranking.
Their result also only holds in the special case of increasing hazard rate, while we extend it to general distributions by formulating and rounding a new LP relaxation\footnote{
However, we acknowledge that their $1/\euler$-approximation holds against a stronger benchmark that knows the patience in advance. This is only possible under some special cases of the patience distribution: as we show in \Cref{sec:unknown_pat}, such a result is \textit{impossible} for the general patience distributions we consider, so our LP relaxation (necessarily) does not know the patience in advance.
} for this single-customer problem.
We note that general patience distributions are well-motivated in applications; see e.g.\ \citet{aveklouris2021matching}, who study a matching model where items are also arriving over time.
On that note, our result for the third model \textbf{(\Cref{sec:item_arrivals})}, when plugged into our frameworks, provides constant-factor guarantees in a related model where items (representing contractors in an online labor platform) may not be present at the beginning and need to arrive online after each customer (to acknowledge they can perform the customer's task), and the customer has to then also accept that contractor.
We contrast this new model with other online platform matching models in \Cref{sec:relatedWork}.

\subsubsection{Negative results.} \label{subsec3}
Finally, our work presents three important negative results.
\begin{enumerate}
\item We formalize the notion of a stochasticity gap for LP-based approaches to these problems, and construct a stochastic bipartite graph in which even the offline maximum matching has expected size at most 0.544 times the value of the LP relaxation \textbf{(\Cref{sec:stochgap})}.  This means that the competitive ratio from the existing LP-based approaches cannot be better than 0.544, while our framework yields a $1-1/\euler\approx0.632$-competitive algorithm.
\item We show that the simple family of greedy algorithms introduced in \citet{mehta2012online} cannot be better than 1/2-competitive \textbf{(\Cref{sec:simple_greedy})}.
\item
We show that when offering items to a single customer with random patience, if one compares to a benchmark that knows the realization of the patience in advance, then any constant-factor approximation is impossible \textbf{(\Cref{sec:unknown_pat})}.
Importantly, our counterexample holds even if the customer can be repeatedly offered the same item, which is identical to having an unknown number of opportunities to make a \textit{single} sale (since the customer will buy at most one item).
This is similar in spirit to the negative result derived in \citet{alijani2020predict}.
%Through a well-known equivalence \citep{correa2019pricing}, the same example shows that \textit{any constant-factor competitive ratio is impossible} in an online accept/reject problem with IID values, \textit{if the total number of values to be drawn is unknown}.
\end{enumerate}

\section{Further Related Work} \label{sec:relatedWork}

\textbf{Online matching with stochastic rewards.}
% Online matching was introduced in~\citet{kvv} for unweighted graphs with adversarial
% arrivals. The authors gave a $(1-1/\euler)$-competitive algorithm (called
% Ranking) and proved this was the best possible.
% % Since it is not possible to
% % extend this to the edge-weighted setting while achieving a competitive ratio,
% Subsequent work considered variants that either relax the adversarial arrival
% setting or make added assumptions about the weights. Other than adversarial,
% common arrival models are random order, known IID, and known non-stationary.
% % (also called prophet inequality matching).
% For all but adversarial arrivals,
% arbitrary edge weights may be considered; however, in some cases, improved
% guarantees can still be achieved under vertex weights or uniform weights
% (i.e.~for unweighted graphs).
Online matching represents a large literature, which has been surveyed in \citet{mehtaBook}.  We will describe the portion of this literature that focuses on stochastic rewards, where edges only match probabilistically upon being probed.  This problem has been studied under both adversarial and stochatic arrival models, as well as different variants depending on the assumptions about edge weights/patience.
% For any combination of arrival model and weight assumption, we may consider
% deterministic edges (as in the classical matching problem) or stochastic edges. In the
% stochastic edges setting, online vertices may be assumed to have a patience of
% one (the "stochastic rewards" setting), or we may consider arbitrary patience
% values on each online vertex.

Online matching with stochastic edges was introduced in~\citet{BansalLPCure}\footnote{Their paper focuses on the \textit{offline} matching with stochastic edges problem, which we do not consider  in this literature review.} as stochastic matching with
timeouts (patience), where the authors showed a ratio of $0.12$ for known IID
arrivals and arbitrary edge weights.
This was later improved to $0.46$ in \citet{brubach2017attenuate}\footnote{The techniques in~\citet{brubach2017attenuate} also involved solving a star graph problem with a black box. However, that work first solved an LP for a bipartite graph, and then used a black box probing algorithm to essentially round and probe the LP solution on the induced star graphs of arriving vertices. This differs from our work which uses algorithms for stochastic matching on star graphs as black boxes to solve a more sophisticated LP, then use that LP solution to guide the online algorithm.}
and to 0.51 in \citet{fata2019multi} for some cases.
We improve these results by establishing a competitive ratio of $1/2$ for non-stationary arrivals and $1-1/\euler$ for IID arrivals.
We note that
\citet{borodin2021prophet} concurrently prove these results, differing in three ways: i) they allow for more general constraints on which edges can be probed, beyond a simple patience constraint (although, they do not consider stochastic patience); ii) they compare against a more powerful offline benchmark that can switch back-and-forth between probing different online vertices; iii) they show that $1-1/\euler$ holds in the more general model of non-identical independent draws arriving in a uniformly random order.
The same authors have also studied online matching with stochastic edges under the "secretary" model of random-order arrival \citep[see][]{borodin2021secretary}.

For adversarial arrivals, most work has focused on the unweighted case,
initially studied by~\citet{mehta2012online} in the special case where patience $\pat_v$ equals $1$
for all $v$. Under the further restriction of uniform vanishing edge
probabilities, they showed that a competitive ratio of $0.53$ is possible. This
was extended to a ratio of $0.534$ for unequal, but still vanishingly small
probabilities~\citep{mehtaonline}. These results were also recently improved to $0.576$ and
$0.572$ respectively by \citet{huang2020online} and then to $0.596$ for both models by \citet{goyalUdwani};\@
however, all these results focus on the case of vanishingly small probabilities,
do not consider patience values greater than 1, and do not consider vertex
weights. For arbitrary edge probabilities, general deterministic patience values, and vertex
weights, our guarantee of 0.5 is the best-known. 
%\red{[ NPG: Is this misleading?
%  For adversarial arrivals and vertex weights, we get $0.5$, but for general
%  patience we get another factor of $0.5$, so all of these together is really
%  $0.25$, not $0.5$. ]}\wnote{I added the qualifier "deterministic".} 
There is also a hardness result in~\citet{mehta2012online} which shows that
no algorithm for stochastic rewards with adversarial arrivals can achieve a
competitive ratio greater than $0.62$. This quantity is strictly less than
$1-1/\euler$, although we argue that this difference is artificially caused by
the stochasticity gap, as we explain in \Cref{sec:stochgap}.

\citet{GNR} study another model of stochastic rewards, in which when a vertex $v$ (viewed as a customer) arrives online, an online algorithm chooses a \emph{set} $S$ of potential matches for $v$ (viewed as an offering of products to the customer). Each customer (online vertex) has a \emph{general choice model} which specifies the probability of the customer purchasing each item when offered each possible set of product assortments $S$. We contrast this model in more detail in \Cref{sec:relation_to_assortment}, but note that in this setting, a set of potential matches is chosen \emph{all at once} rather than probed sequentially, with the outcome being determined by full set $S$ (the offered product assortment).

\textbf{Large starting capacities.} We do not study how our guarantees improve if there are at least $k$ copies of every offline vertex, although we believe our frameworks could be expanded to do so.  The state-of-the-art for these $k$-dependent guarantees in online matching can be found for: adversarial arrivals \citep{ma2020algorithms}, unweighted adversarial arrivals \citep{kalyanasundaram2000optimal}, non-stationary arrivals with vertex weights \citep{alaei2012online}, general non-stationary arrivals \citep{jiang2022tight}, and IID arrivals \citep{ma2021dynamic}.

\textbf{Cascade-click models in product ranking.}
We turn our literature review to papers that study the repeated-interaction/product ranking problems for a single customer.
% We now explain the correspondence between our patience-constrained problem for a single customer and cascade-click models in the product ranking literature.
% We interpret each offline vertex $i$ as a product, $p_i$ as the probability of $i$ being purchased when viewed, and $w_i$ as the fixed revenue garnered from selling product $i$.
% The decision is a ranking of the products.
% Then, a single customer arrives, viewing the products in order and purchasing each viewed product $i$ according to an independent coin flip of probability $p_i$.
% We emphasize that the customer does not look forward in the ranking, purchasing the first product whose coin realizes positively; this is also described as the customer following a "satisficing" choice model \citep[see][]{gallego2017attention}.
% The customer exits after either purchasing a product, or running out of patience to view more products, with the patience drawn independently from a known distribution.
% \citet{chen2020revenue} establish a $1/\euler$-approximation for this problem under the additional assumption of the patience distribution having increasing hazard rate.
% Our result from \Cref{sec:arbpatience} is directly a 1/2-approximation for this problem without any assumptions on the patience distribution, based on the correspondence described above.
Our result from \Cref{sec:starconsthazard} shows how to optimally solve this problem under constant hazard rate, a special case of interest in \citet{chen2020revenue}.
Our result in \Cref{sec:arbpatience} improves their guarantee and holds for general patience distributions.
We should note that our results do not directly apply to more general cascade-click models \citep[see][]{kempe2008cascade} where the probability of the customer running out of patience depends on the specific item shown, but we believe that our simple LP-based technique in \Cref{sec:arbpatience} could be useful for these generalized models.
Other generalized ranking problems involving choice models are studied in \citet{derakhshan2018product}.

%\textbf{Sequential assortment problem.}
In the related \textit{sequential assortment} problem, multiple products can be shown to the customer at a time.
The customer chooses between them according to a Multi-Nomial Logit (MNL) choice model (instead of independent click probabilities), and alternatively the customer could choose the option of viewing the next assortment, never to return.
There is a constraint that the same product cannot be shown in different assortments.
This problem is typically studied when the number of stages is deterministic and known (see \citet{feldman2019improved} and the references therein); however, the \textit{stage-dependent} coefficients in \citet{feldman2019improved} can be used to capture our notion of a stochastic patience.
Nonetheless, the PTAS derived in \citet{feldman2019improved} does not subsume our 1/2-approximation because: in the sequential assortment problem there is no constraint on the number of products offered at once, hence it does not capture our problem; also, in a PTAS, to get a $(1-\eps)$-approximation the runtime needs to be exponential in $1/\eps$, whereas our LP-based technique has polynomial runtime independent of any error parameter.

\textbf{Online matching where items arrive over time.}
Motivated by online platforms, many models where items arrive over time have been recently studied, with constant-factor approximations \citep{aouad2022dynamic,kessel2022stationary} and optimal algorithms \citep{aveklouris2021matching,kerimov2021optimality} known under certain regimes.
These papers focus on steady-state behavior, which is possible because items are arriving indefinitely.
Our paper contrasts these models because there is still a finite supply of items; they merely need to "arrive" to acknowledge each customer, and we provide a constant-factor approximation for any finite time horizon and market size.

\section{Problem Definition and Notation}
\label{sec:prelim}

We use $\bpgraph$ to denote a bipartite graph with vertex set $U\cup V$ and edge
set $E\subseteq U\times V$. Let $U=\set{u_{1},\dots,u_{m}}$ represent offline vertices and
$V=\set{v_{1},\dots,v_{n}}$ represent online vertices. For an edge $e=(u, v)$, we
denote the \emph{weight} of edge $e$ by $w_{e}$ or $w_{u,v}$; for the special
case of vertex weights, each offline vertex $u_{i}$ has a weight denoted by
$w_{i}$, and $w_{u_{i},v} = w_{i}$ for all $v\in V$.
We will generally consider \textit{stochastic edges}, which means that
for each edge $(u_{i},v_{j})\in U\times V$, there is a known probability $p_{i,j}$ with which that edge with independently exist when probed.

When considering the online matching problem for a single online vertex (customer) $v$, we will refer to it as a \textit{star graph}.  In this case, we simplify notation and write $p_{i}$ to denote
the probability of edge $(u_{i},v)$. We also use $p_{u,v}$ for the given
probability of edge $(u, v)$ when indices $i$ and $j$ are not required. Without loss of generality, we may assume that $p_{u,v}$ is defined even for $(u,v)\notin E$, since in this case we can simply let $p_{u,v} = 0$.

We are further given a \emph{patience} value $\patience_{v}$ for each online vertex
in $V$ (we may also write $\patience_{j}$ for the patience of vertex
$v_{j}\in V$) that signifies the number of times we are allowed to probe
different edges incident on $v$ when it arrives. Each edge may be probed at most
once and if it exists, we must match it and stop probing (probe-commit model).

We consider the online vertices arriving at positive integer \emph{times}. In the
adversarial arrival model, the vertices of $V = \set{v_{1},v_{2},\dots,v_{n}}$ are
fixed and the order of their arrival is set by an adversary so as to minimize
the expected matching weight. We assume (wlog) that the vertices arrive in the
order $v_{1},v_{2},\dots,v_{n}$. When we consider the stochastic arrival models, $V$ instead specifies a set of vertex \emph{types}, and at time
$t$, a vertex of an independently randomly chosen type from a known distribution arrives.
Generally these distributions can vary across time, which we call the \textit{prophet arrival}\footnote{This is because it was the arrival model of original focus in prophet inequality papers \citep{krengel1977semiamarts}.} model; we also consider the special case where these distributions are identical, which we refer to as \textit{IID arrivals}.
In these models, we will let $T$ denote the length of the time horizon which is assumed to be known (otherwise the problem is impossible; see \Cref{sec:unknown_pat}).

When an online vertex $v$ arrives at time $t$, we attempt to match it to an
available offline vertex. We are allowed to probe edges incident to $v_{t}$
one-by-one, stopping as soon as an edge $(u_{i},v_{t})$ is found to exist, at
which point the edge is included in the matching and we receive a reward of
$w_{i}$. We are allowed to probe a maximum of $\patience_{t}$ edges (in the
stochastic patience models, $\patience_{t}$ is not known a priori and is
discovered only after $\patience_{t}$ failed probes); if $\patience_{t}$ edges
are probed and none of the edges exist, then vertex $v_{t}$ remains unmatched
and we receive no reward. If we successfully match $v_{t}$ to $u_{i}$, we say
that $w_{i}$ is the \emph{value} or \emph{reward} of $v_{t}$'s match; if $v_{t}$ remains
unmatched, we say it has a value or reward of $0$. The next online vertex
$v_{t+1}$ does not arrive until we have finished attempting to match $v_{t}$
(either by exhausting the patience constraint, or by successfully matching
$v_{t}$). Thus, there is only ever one online vertex available for matching at
any one time.

We use $G$ to denote an instance, which includes the graph, weights, edge probabilities, and any arrival distributions including patience.
Given any instance $G$ one can consider an optimal \textit{offline} algorithm, which knows in advance the online vertices that will arrive.
For the adversarial arrival model, this means knowing the sequence  $v_{1},v_{2},\dots,v_{n}$;
for the stochastic arrival models, this means knowing the sequence of $T$ types that will be realized.
We let $\OPT(G)$ denote the expected reward collected by the best sequential probing algorithm on $G$ that has access to this offline information, noting that: i) $\OPT(G)$ does not know the realizations of the stochastic edges in advance either; ii) computing $\OPT(G)$ is difficult but unnecessary; iii) for stochastic arrival models, this expectation is also over the realizations of the $T$ types; and iv) we assume that the offline algorithm must finish\footnote{We note that \citet{borodin2021prophet} derive results against a stronger benchmark, which can switch back-and-forth between online vertices.} the interactions with one online vertex before moving to the next.
Meanwhile, we let $\ALG(G)$ denote the expected reward collected by a fixed online algorithm on $G$, again taking a realization over types in the stochastic arrival models, and any potential randomness in the algorithm.

With this understanding, we say that a fixed (potentially randomized) online algorithm is \textit{$c$-competitive} if $\ALG(G)/\OPT(G)\ge c$ for all instances $G$, where $c$ is a constant in [0,1].  We are interested in the maximum value of $c$ for which an algorithm can be $c$-competitive, which is referred to as the \textit{competitive ratio}.

\subsection{Outline for the Rest of the Paper}
Our main algorithms and results for online matching with stochastic edges are presented in
\Cref{sec:lpalgs}. In that section, we first present an algorithm for the
vertex-weighted case, under adversarial arrivals, and show that it is
$1/2$-competitive. To our knowledge, this is the first result for this setting.
In addition, we provide an algorithm for the \emph{edge-weighted} case, under
prophet arrivals. Here too, we are able to show the algorithm is
$1/2$-competitive; we further show that a slight modification can improve the
competitive ratio to $1-1/\euler$ when either the edge-weighted assumption is
relaxed to vertex weights or the non-stationary assumption is relaxed to known
IID arrivals.

% Before we can fully describe the algorithms and results for online matching,
% however, we must solve a simpler problem which can then be used as a subroutine
% in our algorithms. Thus, in \Cref{sec:stargraphs}, we consider the problem of a
% star graph, which corresponds to the case of a single online customer. For this
% problem, when the patience of the customer is known, there is an optimal
% algorithm due to~\citet{purohit2019hiring}.

All of the algorithms of \Cref{sec:lpalgs} rely on utilizing, as a black box, an algorithm for
the simpler problem of a star graph, which corresponds to a single online
customer. For this problem, when the patience of the customer is known, there is
an optimal algorithm based on dynamic programming due
to~\citet{purohit2019hiring}. However, the results in \Cref{sec:lpalgs} are
stated in an abstract general manner, which allows us to swap out the algorithm
of~\citet{purohit2019hiring} for algorithms solving the star graph problem under
different settings of patience. In \Cref{sec:stargraphs}, we introduce new,
stochastic, models for the patience of the customer, and give algorithms for
these new settings. These new star graph algorithms can then be used as black
boxes in the algorithms of \Cref{sec:lpalgs}, giving results for online matching under these new patience models.

Finally, in \Cref{sec:neg_results}, we present our negative results following the order described earlier.

\section{Algorithms for Online Matching with Stochastic Edges}
\label{sec:lpalgs}%
In this section, we present our results for online matching with stochastic
  edges. Recall that in this problem, the items $U$ are known in advance while
  the customers arrive one by one in an online fashion. When a customer of type $v$
  arrives, we learn the probability $p_{u,v}$, for each $u\in U$, that the customer
  will purchase item $u$ if offered; if purchased, we gain some reward specified
  by $w_{u,v}$, and if not, we may proceed to offer another item up to a total of
  $\pat_{v}$ offers. In the simplest setting, $\pat_{v}$ is known to the
  algorithm, although our framework can also handle  settings where only a
  probability distribution over $\pat_{v}$ is known (see \Cref{sec:stargraphs}). An online algorithm must
  make offer items sequentially to the customer, and must make all its offers
  before the next one arrives. The goal is to maximize the expected total reward
  across all customers.

\subsection{Vertex-Weighted, Adversarial Arrivals} \label{sec:adversarial}%
We present a greedy algorithm \texttt{AdvGreedy} which achieves a $0.5$-approximation for
online matching with vertex weights, stochastic rewards, and patience
constraints in the adversarial arrival model. Recall that in this setting,
  each offline item $u_{i} \in U$ has a weight $w_{i}$ such that
  $w_{u_{i},v} = w_{i}$ for all customers $v\in V$, modeling the situation where
  items have a fixed price that is the same for all customers. If vertex $v$ is
the $t^\text{th}$ vertex to arrive online, we say $v$ \emph{arrives at time $t$}.
Recall that the goal of the problem is to offer items to customer $v_{t}$ one by
one, until either the patience runs out or an item is successfully sold; and the
entirety of this process is carried before the $(t+1)^{\text{st}}$ arrival,
$v_{t+1}$, arrives. The goal is to maximize the total revenue across all
arrivals. Our algorithm makes use of a black box subroutine, \textsc{StarBB}, which takes
as input a single online vertex, along with the probabilities and weights of its
incident edges; $\textsc{StarBB}(v, \mathbf{p}, \mathbf{w})$ simply probes edges incident
to online vertex $v$ in some order, until either $v$'s patience is exhausted or
a match is successful. Our results hold as long as \textsc{StarBB} is optimal, or within
a constant factor $\kappa$ of optimal. When the patience is deterministic and known
to the algorithm, we can use the dynamic programming-based algorithm of~\citet{purohit2019hiring} for
\textsc{StarBB}; the dynamic program gives a sequence of $\pat_{v}$ vertices in $U$ to
probe in this order. This is optimal for star graphs, so $\kappa=1$. In \Cref{sec:stargraphs}, we
extend this online stochastic matching result to settings where the patience is
unknown and stochastic, by giving star graph algorithms for these settings and
proving constant factor approximations for them.

\begin{algorithm}
\caption{Use a Star Graph Black Box to greedily match arriving vertices}
\label{alg:dpgreedy}
\DontPrintSemicolon
\SetKwFunction{AdvG}{AdvGreedy}
\SetKwProg{Fn}{Function}{:}{}
\Fn{\AdvG{$U$, $V$, $\mathbf{p}$, $\mathbf{w}$}} {
\For{Arriving vertex $v\in V$
}{
$%(i^{*}_{1},\dots,i^{*}_{m})\gets
\Call{StarBB}{v,
\mathbf{p}, \mathbf{w}}$ \;
% \For{$\pat:=1$ to $m$} {
% Probe edge $(u_{i^{*}_{\pat}},v)$ \;
% }
}
}
\end{algorithm}

Let $\ALG(G)$ denote the expected size of the matching produced by this
algorithm on the graph $G$. Let $\OPT(G)$ denote the expected size of the
matching produced by an optimal \emph{offline} algorithm. Our main result here is:
\begin{theorem} \label{thm:halfcr}
Given a $\kappa$-approximate black box for solving star graphs, \Cref{alg:dpgreedy} achieves a competitive ratio of $0.5\kappa$; that is, for any bipartite graph $G$, $\frac{\ALG(G)}{\OPT(G)}\geq 0.5\kappa$.
\end{theorem}
To prove this, we first present an LP which provides an upper bound on the offline optimal.

%%%---NEW LP BOUND---%%%

\subsubsection{An LP upper bound on $\OPT(G)$.} \label{sec:advlp}
We formulate a new LP for our problem by adding a new constraint to, \eqref{eq:lpadv-dpopt}, to the standard LP relaxation of the problem. This new LP gives a tighter upper bound on the offline optimal solution, $\OPT(G)$. Note that our algorithm does not need to solve this new LP, as it is only used in our analysis.

\begin{subequations} \label{lp:adv}
\begin{alignat}{3}
\LPOPT := \max &\sum_{u\in U} \sum_{v\in V}
x_{u,v}p_{u,v} w_{u} \tag{\ref{lp:adv}} \\[2ex]
\text{subject to }
& \sum_{v\in V} x_{u,v}p_{u,v} \le 1 & &\hspace{2em}\forall u\in U \label{eq:lpadv-1matchu} \\
& \sum_{u\in U} x_{u,v} p_{u,v} \le 1 & &\hspace{2em}\forall v\in V \label{eq:lpadv-1matchv} \\
& \sum_{u\in U} x_{u,v} \le \bE[\pat_{v}] & &\hspace{2em}\forall v\in V \label{eq:lpadv-tprobes} \\
&\sum_{u\in U'} x_{u,v}p_{u,v}w_{u} \le \OPT(U',v) & &\hspace{2em} \forall U'\subseteq U, v\in V \label{eq:lpadv-dpopt} \\
& 0 \le x_{u,v} \le 1 & &\hspace{2em}\forall u\in U, v\in V \notag
\end{alignat}
\end{subequations}
In this LP, we slightly abuse notation and write $\OPT(U',v)$ to denote $\OPT(G')$ for a star graph $G' = (U', \set{v}, U'\times\set{v})$. Recall that $OPT(U',v)$ can be computed by a black box.

We first show in \Cref{sec:deferred_proofs} that our strengthened LP is a valid upper bound.
\begin{lemma} \label{lem:lpopt}
For any bipartite graph $G$, $\LPOPT(G) \geq \OPT(G)$.
\end{lemma}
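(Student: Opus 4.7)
The plan is to exhibit a feasible solution to LP~\eqref{lp:adv} whose objective value equals $\OPT(G)$. Consider the offline optimal probing strategy for $G$; for each pair $(u,v) \in U \times V$, set $x_{u,v}$ to be the probability (over all internal randomness of OPT and of the instance) that OPT actually probes edge $(u,v)$. Since edge existence is an independent Bernoulli with parameter $p_{u,v}$ resolved at the moment of probing, we have $\Pr[\text{OPT matches } u \text{ to } v] = x_{u,v}\,p_{u,v}$, so summing the weighted matching probabilities gives LP objective exactly $\OPT(G)$.

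Three of the LP constraints are immediate. Constraints \eqref{eq:lpadv-1matchu} and \eqref{eq:lpadv-1matchv} express that $u$ and $v$ are each matched with probability at most one, which is automatic for a valid matching. For \eqref{eq:lpadv-tprobes}, the number of probes OPT makes to $v$ is pathwise bounded by the realized patience $\pat_v$, so taking expectations gives $\sum_u x_{u,v}\le \bE[\pat_v]$.

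The main step is verifying the new constraint \eqref{eq:lpadv-dpopt}, namely $\sum_{u \in U'} x_{u,v} p_{u,v} w_u \le \OPT(U', v)$ for every $U' \subseteq U$ and $v \in V$. I will do this by constructing a randomized star-graph strategy $\ALG'$ on $(U', v)$ whose expected reward equals the LHS, so that optimality of $\OPT(U',v)$ closes the bound. The strategy $\ALG'$ internally simulates the full offline OPT on $G$, generating all randomness not observable in the star graph (arrivals of other online vertices, their patience, and realizations of edges $(u,v)$ with $u \notin U'$) via fresh independent coins, while coupling $v$'s patience to the star graph's patience realization and coupling the outcomes of edges $(u,v)$ with $u \in U'$ to the outcomes of $\ALG'$'s real probes. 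Whenever the simulated OPT decides to probe some $(u,v)$ with $u \in U'$, $\ALG'$ submits that probe to the star graph; all other operations of the simulated OPT are carried out internally. Because $\ALG'$'s real probes form a subsequence of the simulated OPT's probes to $v$ and share its patience realization, $\ALG'$ is a valid randomized star-graph strategy.

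On each sample path, $\ALG'$'s reward equals the reward the simulated OPT earns from matches $(u,v)$ with $u \in U'$: if OPT matches $v$ to some $u^* \in U'$, then all prior $U'$-probes failed in both worlds by the coupling, and $\ALG'$'s last probe is exactly the successful probe of $u^*$, so $\ALG'$ also matches $v$ to $u^*$; otherwise neither accrues $U'$-reward. Taking expectations, $\bE[\text{reward of }\ALG'] = \sum_{u \in U'} x_{u,v}p_{u,v}w_u$, hence $\OPT(U',v) \ge \sum_{u \in U'} x_{u,v}p_{u,v}w_u$, establishing \eqref{eq:lpadv-dpopt}. The main obstacle is the bookkeeping of this coupling: one must ensure $\ALG'$'s decisions only depend on information legitimately available to a star-graph strategy, which is precisely why the non-$U' \times \set{v}$ randomness is generated via internal fresh coins rather than by attempting to observe OPT's non-$U'$ probe outcomes. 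Once \eqref{eq:lpadv-dpopt} is established, all LP constraints hold and $\LPOPT(G) \ge \OPT(G)$ follows.
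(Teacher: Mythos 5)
Your proof is correct and follows essentially the same route as the paper's: set $x_{u,v}$ to the probability that the offline optimum probes $(u,v)$, check the first three constraints directly, and verify constraint~\eqref{eq:lpadv-dpopt} by simulating the offline optimum as a star-graph strategy that only issues the probes in $U'\times\{v\}$. The paper phrases this last step as a contradiction and leaves the coupling implicit, whereas you spell out the coupling and prove the inequality directly, but the underlying argument is identical.
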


\subsubsection{Proof of $0.5$-competitiveness.}
We can then bound the performance of our greedy algorithm relative to the
solution of Linear Program~\eqref{lp:adv}.
%\Cref{lem:lpopt} then implies that this bounds the competitive ratio. In particular, the following lemma, along with \Cref{lem:lpopt}, implies \Cref{thm:halfcr}.

\begin{lemma} \label{lem:halflp} If $\textsc{StarBB}$ is a $\kappa$-approximate
algorithm for star graphs, then for any bipartite graph $G$,
$\ALG(G) \ge 0.5\kappa\LPOPT(G)$.
\end{lemma}
By \Cref{lem:lpopt,lem:halflp}, we have $ \ALG(G) \ge 0.5\kappa\LPOPT(G) \ge 0.5\kappa\OPT(G), $ which implies our
main result of a $\frac{\kappa}{2}$-competitive algorithm. As previously mentioned,
using the probing order given by the dynamic program of \citet{purohit2019hiring} as \textsc{StarBB} gives
$\kappa=1$, so we have a $\frac{1}{2}$-approximation. In \Cref{sec:stargraphs}, we present star graphs
for stochastic patience settings, where $\kappa$ is not necessarily $1$.
%\nnote{TODO: Remove the following sentence, and instead add discussion of these bounds in the star graph section.}
%Note that $\kappa=1$ if patience is known or follows the constant hazard rate model, and $\kappa=1/2$ for an arbitrary stochastic patience.  Thus, we can achieve a competitive ratio of 1/2 in the former case and a competitive ratio of 1/4 in the latter case.
%The same argument can be used to show that for unknown patience, a $\kappa$-approximate solution to the star graph problem yields a $\kappa/2$-competitive matching in the online setting with adversarial arrivals. Thus, we can achieve a $1/2$ competitive ratio for constant hazard rates and $1/4$ for arbitrary patience distributions.

\subsection{Prophet Arrival Setting} \label{sec:edgeprophet}

If we wish to allow for arbitrary edge weights, we must consider a different
arrival model. A popular arrival model in the literature is the known IID
setting, as discussed in~\Cref{sec:relatedWork}; here, we consider a
generalization of the known IID setting, which we call the \emph{prophet arrival
model}. In this model, $V$ specifies a set of possible arrival \emph{types};
each arrival takes on one of these types randomly, according to a known
distribution. The probabilities at each arrival are independent of previous
arrivals, and the distribution over possible types can be different at each
arrival.

For $t=1,2,\dots,T$ and $v\in V$, denote by $q_{tv}$ the probability that the vertex arriving at time $t$ will be of type $v$. For convenience, we denote by $q_{v} = \sum_{t=1}^{T} q_{tv}$ the expected number of arrivals of a vertex of type $v$.

We employ a new exponential-sized LP relaxation.
In this LP, the variables correspond to \emph{policies} for probing an arriving online vertex.
A deterministic policy $\pi$
for matching any online vertex type $v$ is characterized by a permutation of some \emph{subset of $U$}.
The policy specifies the strategy of attempting to match $v$ to vertices of $U$ in the order given by $\pi$, until either a probe is successful, all vertices in $\pi$ are attempted, or the patience of $v$ is exhausted. Let $\cP$ denote the set of all deterministic policies.

We present our LP in~\eqref{lp:prophet} below.  We let $p_{uv}(\pi)$ denote the probability that an online arrival of type $v$ is matched to offline vertex $u$ when following policy $\pi$, assuming that all vertices of $\pi$ are still unmatched. The decision variables in the LP are given by $x_v(\pi)$, and we let $\OPTP(G)$ denote the true optimal objective value of the exponential-sized LP.
\begin{subequations} \label{lp:prophet}
\begin{alignat}{3}
\OPTP=\max
& \sum_{v\in V}\sum_{\pi\in \cP}x_{v}(\pi) \sum_{u\in U} p_{uv}(\pi) w_{uv} \tag{\ref{lp:prophet}} \\[2ex]
\text{subject to }
& \sum_{v\in V}\sum_{\pi\in\cP}x_{v}(\pi)p_{uv}(\pi) \le 1& &\hspace{0em}\forall u\in U \label{eq:prophetlp-offline} \\
& \sum_{\pi\in\cP}x_{v}(\pi) = q_{v} & &\hspace{0em}\forall v\in V \label{eq:prophetlp-polnum} \\
& x_{v}(\pi) \ge 0 & &\hspace{0em}\forall v\in V, \pi\in\cP
\end{alignat}
\end{subequations}

We can interpret $x_{v}(\pi)$ as the expected number of times policy $\pi$ will be applied to an online vertex of type $v$.
Constraint~\eqref{eq:prophetlp-offline} says that in expectation each offline vertex $u$ can be matched at most once.
Constraint~\eqref{eq:prophetlp-polnum} comes from the fact that exactly one policy (possibly the policy that makes zero probes) must be applied to each arriving vertex of type $v$.
It follows from standard techniques \citep[see e.g.][Lemma~9]{BansalLPCure} that even for a clairvoyant, who knows the realized types of arrivals in advance, if we let $x_v(\pi)$ denote the expected number of times it applies policy $\pi$ on an online vertex of type $v$, then this forms a feasible solution to the LP with objective value equal to the clairvoyant's expected weight matched.
Therefore, if we can bound the algorithm's expected reward relative to $\OPTP(G)$, then this would yield a competitive ratio guarantee.

\subsubsection{Solving the exponential-sized LP.} \label{sec:solvingExpLP}
The Linear Program~\eqref{lp:prophet} has a variable $x_{v}(\pi)$ for every possible online vertex type $v$ and policy $\pi\in\cP$. If $\cP$ has polynomial size, this LP therefore has polynomial size. This happens, for example, when all online vertices have a small constant patience. For instance, when the patience of all vertices is $1$, a policy $\pi\in\cP$ is characterized by a single offline vertex and so $|\cP| = |U|$.

However, for general patience values the LP has exponentially many variables.
Nonetheless, since there are only a polynomial number of constraints, a sparse solution which is polynomially-sized still exists.
To help in solving the LP, we consider the dual.
\begin{subequations} \label{lp:prophetdual}
\begin{alignat}{3}
\text{minimize }
& \sum_{u\in U}\alpha_{u} + \sum_{v\in V} q_{v} \beta_{v} \tag{\ref{lp:prophetdual}} \\[4ex]
\text{s.t. }
& \sum_{u\in U} p_{uv}(\pi)\alpha_{u} + \beta_{v} \ge \sum_{u\in U} p_{uv}(\pi) w_{uv} & &\hspace{6em}\forall v\in V, \pi\in\cP \label{eq:prophetdual-main}\\
& \alpha_{u} \ge 0 & &\hspace{6em}\forall u\in U
\end{alignat}
\end{subequations}

Note that the exponential family of constraints~\eqref{eq:prophetdual-main} can be re-written as
\begin{align} \label{eq:prophetdualmax}
\max_{\pi\in\cP}\sum_{u\in U} p_{uv}(\pi)(w_{uv} - \alpha_{u}) \le \beta_v, && \forall v\in V
\end{align}
which is equivalent to, for every online type $v\in V$, solving the probing problem for a star graph consisting of a single online vertex of type $v$ and adjusted edge weights $w'_{uv}=w_{uv}-\alpha_u$.

%%%% INPUT
Recall that if for every online type $v\in V$, we have a black box which can solve the maximization problem in~\eqref{eq:prophetdualmax} to optimality, then we have a separation oracle for the dual LP which can either establish dual feasibility or find a violating constraint given by $v$ and $\pi$.  By the equivalence of separation and optimization, this allows us to solve both the dual LP~\eqref{lp:prophet} and in turn the primal LP~\eqref{lp:prophetdual} using the ellipsoid method, as formalized in the proposition below.
\begin{proposition} \label{prop:ellipsoid}
Given black box star graph algorithms for every online vertex type which can verify~\eqref{eq:prophetdualmax} in polynomial time, the exponential-sized LP~\eqref{lp:prophet} can be solved in polynomial time.
\end{proposition}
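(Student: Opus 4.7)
The plan is to invoke the equivalence between separation and optimization on the dual LP~\eqref{lp:prophetdual}, which has only $|U|+|V|$ variables but exponentially many constraints. Since the primal LP~\eqref{lp:prophet} has polynomially many constraints and exponentially many variables, the dual is the natural side on which to run the ellipsoid method.

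First, I would build a polynomial-time separation oracle for~\eqref{lp:prophetdual}. The nonnegativity constraints $\alpha_u\ge 0$ are trivial. For each $v\in V$, checking the family~\eqref{eq:prophetdual-main} reduces exactly to verifying~\eqref{eq:prophetdualmax},
\[
\max_{\pi\in\cP}\sum_{u\in U} p_{uv}(\pi)(w_{uv} - \alpha_u) \le \beta_v.
\]
The inner maximization is precisely a star graph probing problem for an online vertex of type $v$ under modified edge weights $w'_{uv}:=w_{uv}-\alpha_u$; offline vertices with $w'_{uv}<0$ may simply be dropped, consistent with policies in $\cP$ corresponding to subsets of $U$. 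By hypothesis, the star graph black box handles this in polynomial time, so comparing its output to $\beta_v$ either certifies the constraint or returns a witnessing policy $\pi$ as the violated constraint to feed back to the ellipsoid method.

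With this oracle in hand, the ellipsoid method solves~\eqref{lp:prophetdual} in polynomial time, touching only a polynomial number of constraints from~\eqref{eq:prophetdual-main}; collect the corresponding pairs $(v,\pi)$ into a polynomial-size set $\cP'$. To recover an optimal primal solution, restrict to the reduced LPs: the restricted dual keeps only the $|\cP'|$ encountered constraints (and is certified optimal with the same value, since the ellipsoid's trajectory is unchanged), while the restricted primal keeps only the variables $\{x_v(\pi):(v,\pi)\in\cP'\}$ and sets all others to zero. The restricted primal and restricted dual are dual to one another, so by LP duality they share the same optimal value as the original pair, which in particular equals $\OPTP$. The restricted primal has polynomially many variables and constraints and can thus be solved directly in polynomial time, yielding a sparse optimal solution to~\eqref{lp:prophet}.

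The only mildly subtle step is this last one: arguing that the polynomially many dual constraints queried during the ellipsoid run genuinely determine the dual optimum, and then using LP duality on the restricted pair to transfer optimality back to the exponential-sized primal. This is standard bookkeeping from the equivalence of separation and optimization and requires no additional combinatorial insight beyond the star graph black box itself, which is what supplies the separation oracle.
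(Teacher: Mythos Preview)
Your proposal is correct and follows exactly the standard equivalence-of-separation-and-optimization argument that the paper invokes; indeed, the paper does not give its own proof but simply cites Chapter~14 of \citet{schrijver1998theory} for precisely this reasoning. Your sketch spells out the mechanics (separation oracle on the dual via the star graph black box, ellipsoid on~\eqref{lp:prophetdual}, then restricting to the polynomially many encountered constraints to recover a sparse optimal primal), which is the canonical route and requires no further justification.
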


An explicit statement which directly establishes \Cref{prop:ellipsoid} can be found in Chapter~14 of %Schrijver
\citet{schrijver1998theory}.
We should note that the ellipsoid method is only used to establish poly-time solvability in theory, and that a column generation method which adds primal variables $x_v(\pi)$ as needed on-the-fly is usually more efficient in practice.

%In either case, we only have a black box for verifying~\eqref{eq:prophetdualmax} if every online type $v\in V$ has a patience which is deterministic \citep{purohit2019hiring} or stochastic with constant hazard rate (\Cref{sec:starconsthazard}), or has a small number of neighbors.
In either case, while the dynamic programming approach of
\citet{purohit2019hiring} gives us an optimal algorithm for
verifying~\eqref{eq:prophetdualmax} in polynomial time, for the setting of
general stochastic patience described in \Cref{sec:arbpatience}, the algorithm
we present is only a $1/2$-approximation for the star graph problem.
Nonetheless, one can still use the structure of the dual LP, along with this
approximate separation, to compute a $(1/2-\epsilon)$-approximate solution to the
primal LP.\@ This is formalized in the \namecref{prop:potentialBased} below.

\begin{proposition} \label{prop:potentialBased}
Suppose for every type $v\in V$, we are given a black box which is a $\kappa$-approximation algorithm for the maximization problem in~\eqref{eq:prophetdualmax}.
Then a solution to LP~\eqref{lp:prophet} with objective value at least $(\kappa-\epsilon)\OPTP(G)$ can be computed, in time polynomial in the problem parameters and $1/\epsilon$.
\end{proposition}

\Cref{prop:potentialBased} follows from the \emph{potential-based framework} of \cite{CSL16}, which is
elaborated on in \Cref{sec:potential_based}. Armed with
\Cref{prop:potentialBased,prop:ellipsoid}, we can use any optimal or
approximately optimal algorithm for star graphs to obtain an (approximately)
optimal solution to LP~\eqref{lp:prophet}, which we can then use to solve the
problem of online matching under prophet arrivals.

\subsubsection{Algorithm and analysis based on exponential-sized LP.}
We now show how to use the LP~\eqref{lp:prophet} to design a $\kappa/2$-competitive online
algorithm, given a feasible LP solution $\ssx_v(\pi)$ which is at least $\kappa\cdot\OPTP(G)$,
for some $\kappa\le1$. For each vertex $u\in U$, let
$\ssw_u=\sum_{v\in V}w_{uv}\sum_{\pi\in\cP}p_{uv}(\pi)\ssx_v(\pi)$ denote the expected reward of
matching $u$ according to the assignment $\ssx$. Notice that the objective value
of the given solution is $\sum_{u\in U} \ssw_{u}$, which is at least $\kappa\cdot\OPTP(G)$. Using
this notation, our algorithm is given in \Cref{alg:prophet}. By LP constraint~\eqref{eq:prophetlp-polnum}, we
have $\sum_{\pi\in\cP} \ssx_v(\pi)/q_v = 1$, so the probability distribution over
policies defined in \Cref{alg:prophet} is proper. We also note that since at most
polynomially many variables $\ssx_v(\pi)$ will be nonzero, this distribution has
polynomial-sized support and can be sampled from in polynomial time. The
algorithm itself is fairly simple, selecting a policy $\pi$ at random with
probability proportional to LP solution $x^{*}_{v}(\pi)$ upon the arrival of a
vertex $v_{t}$ of type $v$. Then, policy $\pi$ is followed in order to attempt to
match the online vertex, but with two quirks: first, we skip probing any edge
$(\pi_{i},v_{t})$ for which the weight of the edge is too small (specifically, if
$w_{\pi_{i},v} < \ssw_{\pi_{i}}/2$), and second, if $\pi$ tells us to probe an edge
$(\pi_{i},v_{t})$ for which offline vertex $\pi_{i}$ is unavailable, we "simulate"
the probing instead, terminating with no reward if the simulated probe is
successful. This simulation technique was used in \citet{brubach2017attenuate}
and is also used by our star graph algorithm in \Cref{sec:arbpatience}.

\begin{algorithm}
\caption{Random Arrivals from Known Distributions (Prophet Arrivals)}
\label{alg:prophet}
\DontPrintSemicolon
\SetKwFunction{OM}{OnlineMatch}
\SetKwProg{Fn}{Function}{:}{}
\Fn{\OM{$U$, $V$, $\mathbf{p}$, $\mathbf{w}$}}{
\For{$t=1$ to $T$}{
Online vertex $v_t$, of type $v\in V$, arrives \;
Choose a policy $\pi = (\pi_1, \pi_2, \dots, \pi_{\ell})$ with probability $\ssx_v(\pi)/q_v$ \;
\For{$i:=1$ to $\ell$} {
\If{$w_{\pi_i,v} < \ssw_{\pi_i}/2$}{
Skip to next $i$ \;
}
\ElseIf{$\pi_i$ is unmatched}{
Probe edge $(\pi_i,v_t)$ and match if successful for reward $w_{\pi_i,v}$ \;
}
\Else{
Simulate probing $(\pi_i,v_t)$. If successful, move to next arrival without matching $v_t$. \;
}
}
}
}
\end{algorithm}

\begin{theorem} \label{thm:prophethalf}
Under general edge weights and known arrival distributions, \Cref{alg:prophet} is $\kappa/2$-competitive for the online bipartite matching with patience problem,
assuming we are given a solution to LP~\eqref{lp:prophet} with objective value at least $\kappa\cdot\OPTP(G)$.
\end{theorem}

The proof of \Cref{thm:prophethalf} is deferred to \Cref{sec:deferred_proofs}. In the case of deterministic,
known patience, the algorithm of~\citet{purohit2019hiring} can be used as the
black box star graph algorithms to allow verifying~\eqref{eq:prophetdualmax},
and thus solving LP~\eqref{lp:prophet} exactly. In this case, $\kappa=1$ in
\Cref{thm:prophethalf}, and we have a $1/2$-competitive algorithm for online matching with patience, under prophet arrivals and arbitrary edge
weights. In \Cref{sec:stargraphs}, we extend the classical online stochastic
matching problem to consider stochastic patience, and give star graph algorithms
with constant factor approximation guarantees, allowing us to apply
\Cref{thm:prophethalf} to these settings as well.

\subsection{Improvement for IID Arrivals} \label{sec:prophetiid}

In the case of IID arrivals, i.e., where $q_{1v} = q_{2v} = \dots = q_{Tv}$ for all vertex types $v\in V$, a slight modification to \Cref{alg:prophet} yields a competitive ratio of $(1-1/e)\kappa$ when given a feasible solution to LP~\eqref{lp:prophet} that is at least $\kappa\cdot \OPTP(G)$ of optimal.

The only change to the algorithm from the previous non-IID setting is that for IID arrivals, we do not skip probing vertices $u\in U$ when $w_{uv} < \ssw_u / 2$.
The full pseudocode is given in \Cref{alg:prophetiidvw}.

\begin{algorithm}
\caption{Random Arrivals from Known Identical Distributions (IID Arrivals)}
\label{alg:prophetiidvw}
\DontPrintSemicolon
\SetKwFunction{OM}{OnlineMatch}
\SetKwProg{Fn}{Function}{:}{}
\Fn{\OM{$U$, $V$, $\mathbf{p}$, $\mathbf{w}$}}{
\For{$t=1$ to $T$}{
Online vertex $v_t$, of type $v\in V$, arrives \;
Choose a policy $\pi = (\pi_1, \pi_2, \dots, \pi_{\ell})$ with probability $\ssx_v(\pi)/q_v$ \;
\For{$i:=1$ to $\ell$}{
\If{$\pi_i$ is unmatched}{
Probe edge $(\pi_i,v_t)$ and match if successful for reward $w_{\pi_i,v_t}$ \;
}
\Else{
Simulate probing $(\pi_i,v_t)$. If successful, move to next arrival without matching $v_t$. \;
}
}
}
}
\end{algorithm}

\begin{theorem} \label{thm:prophetiid}
Under general edge weights and known IID arrivals, \Cref{alg:prophetiidvw} is $\kappa(1-1/\euler)$-competitive for the online bipartite matching with patience problem,
assuming we are given a solution to LP~\eqref{lp:prophet} with objective value at least $\kappa\cdot\OPTP(G)$.
\end{theorem}

The proof of \Cref{thm:prophetiid} is deferred to \Cref{sec:deferred_proofs}. As with \Cref{thm:prophethalf},
the dynamic program of \citet{purohit2019hiring} can be used to get an optimal
LP solution, so that \Cref{thm:prophetiid} gives a competitiveness of
$1-1/\euler$ for \Cref{alg:prophetiidvw} under the classical deterministic
patience setting. \Cref{sec:stargraphs} extends the result to online stochastic
matching with stochastic patience, by providing black box algorithms for those
settings which can then be used to find approximately optimal solutions to
LP~\eqref{lp:prophet} as per \Cref{prop:potentialBased}.

\subsection{Improvement for Vertex Weights}\label{sec:prophetvw}
With a new analysis, we can show that
\Cref{alg:prophetiidvw} still achieves a competitive ratio of $1-1/\euler$ in the prophet (non-stationary) setting in the case of \emph{vertex weights}.

\begin{theorem} \label{thm:prophetvw}
Under vertex-weights and known arrival distributions, \Cref{alg:prophetiidvw} is $\kappa(1-1/e)$-competitive for the online bipartite matching with patience problem,
assuming we are given a solution to LP~\eqref{lp:prophet} with objective value at least $\kappa\cdot\OPTP(G)$.
\end{theorem}

The proof of \Cref{thm:prophetvw} is deferred to \Cref{sec:deferred_proofs}. It has identical
implications as \Cref{thm:prophetiid}, with the improvement beyond the 1/2-guarantee of \Cref{thm:prophethalf} coming
from the vertex-weighted assumption instead of the IID assumption.

\section{Algorithms for Star Graphs} \label{sec:stargraphs} %
All of the algorithms in \Cref{sec:lpalgs} make use of a \emph{black box algorithm} for solving the
special case of a single customer (called the "star graph" problem, since it
corresponds to a star graph with the customer as the center vertex): Under
adversarial arrivals, the black box is used for each arriving vertex, while for
prophet arrivals it is used as a separation oracle for the dual LP.\@ For the
classic problem of online matching with stochastic edges and finite patience, we
may use the dynamic program of~\citet{purohit2019hiring} as this black box. Since this dynamic program
is optimal for star graphs, $\kappa=1$ in all of the results of \Cref{sec:lpalgs}. In this
\namecref{sec:stargraphs}, we consider the problem where the patience of the
customer is not known, but is instead determined by some stochastic process.
These algorithms can be used as black boxes for the algorithms in \Cref{sec:lpalgs}, giving
competitive ratios for online matching with stochastic edges (with multiple
customers) under these new stochastic patience settings. Finally, we also give
an algorithm for an alternate setting where the customer has a deterministic
patience, but items arrive over time according to Bernoulli processes.
Throughout, we use $m$ to denote the number of items, as in \cref{sec:prelim}.

%Note that the edge-weighted and vertex-weighted problems are equivalent on star graphs.

%\subsection{Known Patience} \label{sec:starknown}
%
%
%An algorithm to solve the known patience case via dynamic program was given in~\citet{purohit2019hiring} (see section~2.1 of that paper, i.e.\ the $k=1$ case of the hiring problem they study). We present this briefly here but do not discuss it in detail.
%
%Recall that we assume wlog that $w_{1} \ge w_{2} \ge \dots \ge w_{m}$. To simplify notation, we write $p_i$ for $p_{u_i, v}$. We then define $f(i, \pat)$ to be the maximum possible expected value if we may only match $v$ to one of the neighbors in $\set{ v_{i}, \dots, v_{m}}$ (i.e., neighbors $i$ through $m$).
%\begin{equation} \label{eq:advdp}
%\begin{split}
%f(i,\pat) =
%\max \{&p_{i} w_i + (1 - p_{i})f(i+1, \pat-1), \\
%&f(i+1, \pat)\}
%\end{split} 
%\end{equation}
%with $f(i,0) = 0$.
%
%
%
%This dynamic program immediately implies the following fact about the known patience setting:
%\begin{theorem}
%\label{thm:optstar}
%There exists an algorithm for the  matching with known patience problem on vertex-weighted star graphs that finds the optimal probing strategy in polynomial time.
%\end{theorem}

\subsection{Constant Hazard Rate} \label{sec:starconsthazard}

In this setting, the patience is
random and unknown, with a constant "hazard rate" $r_{i}$ for each $i\in[m]$.\footnote{Throughout this paper, we use the notation $[m] := \{1, 2, \dots, m\}$}
When an attempted match with $u_{i}$ is unsuccessful, the customer $v$
runs out of patience with probability $r_{i}$ and remains available for
another match attempt with probability $1-r_{i}$.
% The "hazard rate" $1-r$ is the probability that $v$ becomes unavailable after a failed probe. Equivalently, the patience is drawn from a negative binomial distribution, $\pat_v\sim\mathrm{NB}(1,1-r)$.
When the hazard rate $r_{i}$ is the same for all $i\in[m]$ (i.e., $r_{i}=r$ for some $r$ and every $i$), this is equivalent to the patience having a constant hazard rate of $r$.

\begin{theorem} \label{thm:constantHazard}%
For maximizing the expected weight of the matched item in the (item-dependent)
Constant Hazard Rate patience model, it is optimal to probe items in decreasing order of
\[
\frac{w_{i}p_{i}}{p_{i} + (1-p_{i})r_{i}}.
\]
\end{theorem}
\Cref{thm:constantHazard} tells us that, in the special case where the patience
distribution can be modeled with individual per-item hazard rates, we can
achieve an optimal star graph probing strategy. As such, we can apply all of the algorithms and results from \Cref{sec:lpalgs} with $\kappa=1$.

\subsection{Arbitrary Patience Distributions} \label{sec:arbpatience}

We address the case where the patience is stochastic (and unknown) and can
follow an arbitrary known distribution. Without loss of generality, we can
assume the patience distribution has finite support (more specifically, that the
patience $\pat \in
[m]$); %\footnote{We use the notation $[m]$ to denote the set $\{1, 2, \dots, m\}$.}
%\textcolor{red}{[ NPG: Removed a footnote about the notation $[m]$ since it's
%  used before this point. Should we define it earlier, or not bother since it's
%  somewhat standard anyway? ]} \wnote{I would define it at the first occurrence.}); 
this is because a patience greater than $m$ is
equivalent to a patience of $m$ (additionally, a patience of $0$ indicates a
vertex which can never be matched by any algorithm, since no probes can be made,
so we can simply ignore such vertices). We use an LP-based approach for this
problem. We denote by $q_{\pat}$ the probability that the patience of the online
vertex $v$ is \emph{at least} $\pat$. Notice that $1=q_{1} \ge q_{2} \ge \ldots \ge q_{n}$. Our
approach utilizes the LP~\eqref{lp:arbpatience} described in the next paragraph. The variables are
$x_{j\pat}$, which correspond to the probability of \emph{attempting to match} with $j$
on the $\pat^{\text{th}}$ attempt. The value $s_{\pat}$ represents the
probability that the online vertex is available for a $\pat^{\text{th}}$ match
attempt, meaning its patience is at least $\pat$ and all previous match attempts
were unsuccessful. This can be calculated from the $x_{j\pat}$, $q_{\pat}$, and
$p_{j}$ values.
\begin{subequations} \label{lp:arbpatience}
\begin{alignat}{3}
\max
& \sum_{j=1}^{m}w_{j}p_{j}\sum_{\pat=1}^{m}x_{j\pat} \tag{\ref{lp:arbpatience}} \\[2ex]
\text{subject to }
& \sum_{\pat'=\pat}^{m}x_{j\pat'} \le s_{\pat},& &\hspace{2em}\forall j\in\set{1,2,\dots,m}, \pat\in\set{1,2,\dots,m} \label{eq:arblp-1pernode} \\
& \sum_{j=1}^{m}x_{j\pat} \le s_{\pat},& &\hspace{2em}\forall \pat\in\set{1,2,\dots,m} \label{eq:arblp-avail} \\
& x_{j\pat}\ge 0,& &\hspace{2em}\forall j\in\set{1,2,\dots,m},\pat\in\set{1,2,\dots,m} \notag \\[1ex]
& s_{1} = 1, \label{eq:sUpdateOne} \\
& s_{\pat} = \frac{q_{\pat}}{q_{\pat-1}} \left( s_{\pat-1} -
\sum_{j=1}^{m} p_{j}x_{j,\pat-1} \right) & & \hspace{2em}\forall \pat\in\{2,\ldots,m\} \label{eq:sUpdateTwo}
\end{alignat}
\end{subequations}

% \nnote{I tried to rewrite the explanations for the constraints, since at least
%   one reviewer said they were confused by the LP.\@ I'm not sure how much of an
%   improvement this is over what we had originally. Any further edits or
%   improvements to this are welcome. (The old description is commented out in the
%   tex source, for reference)}%
As an example, consider the e-commerce application, where we wish to offer items
to a customer one by one. The variable $x_{j\pat}$ indicates the probability
that the customer is offered item $j$ after $\pat-1$ other items have already
been offered (and rejected). This can only happen if the customer's patience is
at least $\pat$ \emph{and} the customer has not already purchased an item (prior to the
$\pat^{\text{th}}$ offer), since otherwise no offers can be made at this point.
The value $s_{\pat}$ denotes precisely this probability, that is the probability
that the customer's patience is at least $\pat$ and the customer has rejected
all previous offers. This suggests a simple constraint: $x_{j\pat} \le s_{\pat}$.
However, two stronger conditions can be given: first, no valid strategy can
offer item $j$ at any point on or after the $\pat^{\text{th}}$ attempt, if the
customer has patience less than $\pat$ or purchases an item offered before the
$\pat^{\text{th}}$. Summing over all offers after $\pat-1$,
$\sum_{\pat'=\pat} x_{j\pat'}$ is the probability that item $j$ is offered at or
after the $\pat^{\text{th}}$ step, and the reasoning above gives us the
constraints~\eqref{eq:arblp-1pernode}. Further, if the customer is unavailable (due to having
purchased an item or running out of patience) for the $\pat^{\text{th}}$
attempt, then no item can be offered on the $\pat^{\text{th}}$ attempt; thus,
the probability of offering an item on attempt number $\pat$ cannot exceed the
probability that the customer is available for the $\pat^{\text{th}}$ attempt.
This gives constraints~\eqref{eq:arblp-avail}. We note that the family of constraints~\eqref{eq:arblp-1pernode} differs
from similar time-indexed LP's in the literature~\citep{ma2014improvements}, in that there is a
constraint for the sum starting at every attempt $\pat$ instead of a single
constraint where $\pat=1$.

Constraints~\eqref{eq:sUpdateOne} and~\eqref{eq:sUpdateTwo} simply give a closed
form expression for the quantities $s_{\pat}$, where
$\frac{q_{\pat}}{q_{\pat-1}}$ is understood to be $0$ if both $q_{\pat}$ and
$q_{\pat-1}$ are $0$. Finally, since $x_{j\pat}$ corresponds to a probability,
we require $x_{j\pat} \in [0,1]$ (we do not explicitly write the constraint
$x_{j\pat} \le 1$ in the LP above, since it is redundant, being implied
by~\eqref{eq:arblp-avail} for $\pat=1$, since $s_{1}=1$). We can see that this
LP upper bounds the optimal algorithm, since taking $x_{j\pat}$ to be the
probability of the algorithm probing $j$ on attempt $\pat$ for all $j$ and
$\pat$, we get a feasible solution to the LP with objective value equal to the
algorithm's expected weight matched.

% Constraint~\eqref{eq:arblp-1pernode} says that each offline vertex $j$
% can be probed at most once across attempts numbered
% $\pat', \pat'+1,\dots,n$ when conditioned on still being available at
% attempt $\pat'$.
% Note that this family of constraints is new in comparison to similar time-indexed LP's which have appeared in the literature \citep{ma2014improvements}, in that there is a constraint for the sum starting at every attempt $\theta'$, instead of just a single constraint where $\theta'=1$.
% Constraint~\eqref{eq:arblp-avail} says that we
% may only attempt the $\pat^{\text{th}}$ match if the online vertex is
% still available for a $\pat^{\text{th}}$ attempt.
% Constraints~\eqref{eq:sUpdateOne}--\eqref{eq:sUpdateTwo} ensure that the values of $s_\pat$ are updated correctly, where the fraction $\frac{q_\pat}{q_{\pat-1}}$ is understood to be 0 if both $q_\pat$ and $q_{\pat-1}$ are 0.
% We can see that this LP upper bounds the optimal algorithm, since taking $x_{j\pat}$ to be the probability of the algorithm probing $j$ on attempt $\pat$ for all $j$ and $\pat$, we get a feasible solution to the LP with objective value equal to the algorithm's expected weight matched.

Our algorithm is simple: we solve LP~\eqref{lp:arbpatience} to
get an optimal solution $x^{*}$, along with the values $s^{*}$. Then,
when making the $\pat^{\text{th}}$ probe, choose each offline vertex $j=1,\ldots,n$
with probability $x^{*}_{j\pat} / s^{*}_{\pat}$ (note that if $s^{*}_{\pat}=0$ then $x^{*}_{j\pat}=0$),
which defines a proper probability distribution by~\eqref{eq:arblp-avail}.
If a vertex $j$ is
chosen to be probed, but has already been unsuccessfully probed in a
previous attempt, we "simulate" probing $j$ instead, and \emph{terminate}
with no reward if the simulated probe is successful.
This simulation technique is important in ensuring that the probability of surviving to the $\theta$'th attempt is consistent with the LP value $\sss_{\pat}$.

\begin{example}
We provide an example to illustrate our LP and algorithm.  Let $m=2$, and suppose there are two items with weights $w_1=1,w_2=2$ and probabilities $p_1=3/4,p_2=1/4$.

First suppose $q_1=q_2=1$, i.e.\ the patience is deterministically 2.  Then the non-zero $x$-values in the optimal LP solution are $x_{2,1}=1,x_{1,2}=0.75$.  This corresponds to probing item 2 (with the higher reward if it succeeds) first, and if the probe fails (occurring w.p.~$3/4$), probing item 1 afterward.  The expected reward is $w_2p_2+(1-p_2)w_1p_1=2/4+3/4\cdot3/4=17/16=1.0625$, which is also the optimal objective value of the LP.

Now consider a more interesting example where $q_1=1,q_2=1/3$.  Then the non-zero $x$-values in the optimal LP solution are $x_{1,1}=0.9,x_{1,2}=0.1,x_{2,1}=0.1$.
% ; this is feasible because $s_2=0.1$.
Our algorithm in this case will probe item 1 first w.p.~0.9, and otherwise (w.p.~0.1) probe item 2 first.
If it survives to the 2nd probe, which occurs w.p.~$s_2=0.1$, it will always probe item 1.
However, note that this will be a "simulated" probe (which generates no reward) unless item 2 was probed on the 1st probe, making item 1 still available for the 2nd probe.
The probability of this occurring is $0.1\cdot3/4\cdot1/3=0.025$.
Therefore, the expected reward of our algorithm is $(0.9+0.025)\cdot3/4+0.1\cdot2/4\approx0.743$.
Meanwhile, the LP optimal value is 0.8.

We note that in this case, the best algorithm is to probe item 1 first (which is the "safe bet", given that the customer has a 2/3 chance of departing after the 1st probe) followed by item 2, which would have expected reward $19/24\approx0.791$, worse than the LP value.
It would have been better than our algorithm though.
However, note that the optimal ordering given many items and an arbitrary patience distribution is generally non-trivial to solve (even in our examples, the optimal ordering switched from 2,1 to 1,2 depending on the patience distribution), and to our knowledge the best approximation algorithm is the 1/2-approximation provided by our randomized algorithm.
\end{example}

\begin{theorem} \label{thm:arbpat}
The online algorithm based on LP~\eqref{lp:arbpatience} is a $1/2$-approximation for the star graph probing problem, for an arbitrary patience distribution which is given explicitly.
\end{theorem}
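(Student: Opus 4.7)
The plan is to show that the algorithm's expected reward is at least half the objective value of LP~\eqref{lp:arbpatience}; since this LP upper bounds the optimum, this yields the claimed $1/2$-approximation. Let $A_\theta$ denote the event that the algorithm reaches attempt $\theta$ (has not matched or terminated, and patience is at least $\theta$); let $E_{j\theta}$ denote the event that $A_\theta$ holds and the algorithm chooses $j$ at attempt $\theta$, whether as a real probe or as a simulated one; and let $N_j = \sum_\theta \bI[E_{j\theta}]$ count the total number of choices of $j$.

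First I would establish, by induction on $\theta$, the invariant $\Pr[A_\theta] = s^*_\theta$. The base case $\Pr[A_1] = 1 = s^*_1$ is immediate. For the inductive step, conditional on $A_\theta$ the algorithm terminates at this attempt if either the chosen vertex's coin flip comes up heads---probability $\sum_j p_j x^*_{j\theta}/s^*_\theta$, regardless of whether the probe is real or simulated---or patience runs out, an additional factor of $q_{\theta+1}/q_\theta$. This exactly reproduces recurrence~\eqref{eq:sUpdateTwo}. Because simulated and fresh probes yield identical termination probabilities, the availability dynamics are \emph{history-free}: the same induction also gives $\Pr[A_\tau \mid A_{\theta+1}] = s^*_\tau/s^*_{\theta+1}$ for any $\tau \geq \theta+1$, irrespective of which vertices have been probed previously. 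Two immediate consequences follow: $\Pr[E_{j\theta}] = x^*_{j\theta}$ and thus $\bE[N_j] = \sum_\theta x^*_{j\theta}$; and $j$ is matched if and only if the Bernoulli$(p_j)$ flip at the first choice of $j$ (the only real probe of $j$) comes up heads, so $\Pr[\text{match } j] = p_j\cdot \Pr[N_j \geq 1]$.

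The crux is to prove $\Pr[N_j \geq 1] \geq \frac{1}{2}\sum_\theta x^*_{j\theta}$, which I would derive from the bound $\bE[N_j \mid N_j \geq 1] \leq 2$ together with the identity $\bE[N_j] = \Pr[N_j \geq 1]\cdot \bE[N_j \mid N_j \geq 1]$. To bound the conditional expectation, condition on the first attempt $\theta$ at which $j$ is chosen; the expected number of subsequent choices of $j$ given $A_{\theta+1}$ equals $\sum_{\tau \geq \theta+1} \Pr[E_{j\tau}\mid A_{\theta+1}] = \sum_{\tau \geq \theta+1} x^*_{j\tau}/s^*_{\theta+1}$ using the history-free relation above. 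This is where the \emph{new} family of LP constraints~\eqref{eq:arblp-1pernode}---one per starting index $\theta'$ rather than only $\theta' = 1$---is essential: applied at $\theta' = \theta+1$ it directly gives $\sum_{\tau \geq \theta+1} x^*_{j\tau} \leq s^*_{\theta+1}$, so the expected number of future choices of $j$ is at most $1$, yielding $\bE[N_j \mid \text{first choice at }\theta] \leq 1 + 1 = 2$. Averaging over the first-choice attempt $\theta$ gives $\bE[N_j \mid N_j \geq 1] \leq 2$.

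The main obstacle is the careful bookkeeping in the first step, ensuring that the simulation rule makes the availability dynamics genuinely history-free; without this, $\Pr[A_\tau \mid A_{\theta+1}] = s^*_\tau/s^*_{\theta+1}$ could fail when conditioned on $j$ having been probed previously, and constraint~\eqref{eq:arblp-1pernode} could not be applied cleanly at $\theta' = \theta+1$ to bound future choices of $j$. Assembling the pieces, the algorithm's expected reward is $\sum_j w_j p_j \Pr[N_j \geq 1] \geq \frac{1}{2}\sum_j w_j p_j \sum_\theta x^*_{j\theta}$, which is half the LP objective and hence half the optimum.
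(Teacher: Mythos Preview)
Your proof is correct and takes a genuinely different route from the paper's. Both arguments rest on the same two ingredients---the simulation rule making the availability process history-free so that $\Pr[A_\tau\mid A_{\theta+1}]=s^*_\tau/s^*_{\theta+1}$ regardless of past choices, and the family of constraints~\eqref{eq:arblp-1pernode} indexed by every starting attempt $\theta'$---but they assemble them differently. The paper expands $\bE\bigl[X_{j\theta}\prod_{\theta'<\theta}(1-X_{j\theta'})\bigr]$ factor by factor via conditional independence, lower-bounds each factor by $1-x^*_{j\theta'}/s^*_{\theta'+1}$, substitutes the tail-sum bound from~\eqref{eq:arblp-1pernode}, and then invokes a separate technical inequality (Lemma~\ref{lem:arbPatTechnical}, proved by a somewhat delicate induction in the appendix) to extract the factor $1/2$. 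Your argument short-circuits all of this with the identity $\Pr[N_j\ge1]=\bE[N_j]/\bE[N_j\mid N_j\ge1]$ and a single application of~\eqref{eq:arblp-1pernode} at $\theta'=\theta+1$ to show $\bE[N_j\mid N_j\ge1]\le2$. This is cleaner and entirely avoids Lemma~\ref{lem:arbPatTechnical}; what the paper's longer expansion buys is an explicit term-by-term lower bound on the probe probabilities, which could in principle be sharpened, whereas your ratio argument is inherently a $1/2$ bound. One minor presentational point: when you write ``the expected number of subsequent choices of $j$ given $A_{\theta+1}$,'' you are implicitly using that if $A_{\theta+1}$ fails there are zero subsequent choices, so the expectation conditioned only on the first choice being at $\theta$ is at most the expectation conditioned additionally on $A_{\theta+1}$; making that step explicit would tighten the exposition.
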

The proof of \Cref{thm:arbpat} is in \Cref{sec:deferred_proofs}; we note that
the analysis in the proof is tight.

We further note that the result of \Cref{thm:arbpat} compares to a benchmark (LP~\eqref{lp:arbpatience}) that
does \emph{not} know the full realization of the patience values in advance. This is
necessary, since \Cref{thm:unknownpatience-badcr} states that comparing to a benchmark which knows the
patience in advance leads to arbitrarily bad competitive ratios.

\Cref{thm:arbpat} allows for us to solve online matching problems
when the patience of each customer is stochastic and follows an \emph{arbitrary}
distribution that is known to the algorithm. We simply use our star graph
algorithm as a black box for our algorithms in \Cref{sec:lpalgs}, with $\kappa=1/2$.
This gives us $\frac{1}{4}$-competitive algorithms for vertex-weighted adversarial
arrivals and edge-weighted prophet arrivals; as per
\Cref{thm:prophetiid,thm:prophetvw}, we have improved competitive ratios of
$\frac{1}{2}(1-1/\euler)$ under known IID arrivals (even with arbitrary edge weights)
and vertex-weighted prophet arrivals (even when the distributions are not
identical).

\subsection{Item Arrivals} \label{sec:item_arrivals}
Next we consider a different setting in which after a customer arrives, the "items" (interpreted as contractors in an online platform)
are initially unavailable, and only show up (to acknowledge that they can do the customer's job) following Bernoulli processes.
More specifically, each item $i \in [m]$ has two given
probabilities: the matching probability $p_{i}$ and an arrival probability $q_{i}$.
The customer has a known deterministic patience $\pat$, and the process unfolds in
discrete time steps; at time $t\in\{1, \dots, \pat\}$, the algorithm must choose
at most one item that has arrived to offer to the customer. After time $t=\pat$, if no
item has been purchased, the customer runs out of patience and becomes
permanently unavailable for matches. In contrast to the other patience settings,
in this setting the patience corresponds to the \emph{amount of time} the
customer is willing to wait, rather than the \emph{number of items} they may be
offered. Thus, if the algorithm makes no offer at time $t$ (because it is waiting for items to become available), we still move one
step closer to exhausting the customer's patience.

When the customer first arrives (immediately prior to time $t=1$), no items are
available to be offered. However, at each time step, each item $i$ becomes
available independently with probability $q_{i}$. Once an item $i$ becomes available to the customer, say initially at time $t$, then it can be offered to the customer at most once, at any time step $t' \ge t$. When item
$i$ is offered, the customer purchases it with probability $p_{i}$, in which
case a weight of $w_{i}$ is achieved and the process terminates; with
probability $1-p_{i}$, the item is not purchased and the process immediately
proceeds to the next time step. As with all star graph problems, our goal is to
develop an algorithm which maximizes the expected weight of the item sold to the
customer (achieving a weight of $0$ if no item is sold).

We begin with a linear programming relaxation of the problem.
\begin{subequations} \label{lp:itemarrivals}
\begin{alignat}{3}
\mathrm{LP} :=\max
& \sum_{i=1}^{m} w_{i}x_{i}p_{i} \tag{\ref{lp:itemarrivals}} \\[2ex]
\text{subject to }
& \sum_{i=1}^{m}x_{i}p_{i} \le 1 & & \label{eq:lp-1match} \\
& \sum_{i=1}^{m} x_{i} \le \pat & & \label{eq:lp-tprobes} \\
& x_{i} \le 1-(1 - q_{i})^{\pat} & &\hspace{3em}\forall i\in\{1,2,\dots,m\} \label{eq:lp-xibound} \\[2ex]
& x_{i} \ge 0 & &\hspace{3em}\forall i\in \{1,2,\dots,m\}
\end{alignat}
\end{subequations}

We call an item "large" if $q_{i} \ge c / \pat$. Otherwise, if
$q_{i} < c / \pat$, we say that item $u_{i}$ is "small." Let
$\LRG = \{i \in [m] \mid x_{i} \ge c/\pat \}$ denote the set of large items, and
$\SML = \{i \in [m] \mid x_{i} < c/\pat \}$ denote the set of small items. Our algorithm
first solves the LP~\eqref{lp:itemarrivals} to obtain an optimal solution $(x_{i})_{i\in[m]}$; then, it
makes use of one of two different strategies, choosing between the two depending
on relative contribution of large vs small items to the LP objective. The
motivation here is as follows: intuitively, we wish to choose to offer an item
$i$ with some probability proportional to $x_{i}$. However, if most of the
contribution to the objective value in the LP comes from small items, there may
be a high probability of no items arriving in any one time step; in this
case, we may be better off simply offering any item that arrives in a time step
where we are lucky enough to have an arrival.

\textbf{The \texttt{LARGE} strategy.}
First, let $\rho\in (0,\frac{1}{2})$ be a fixed parameter. Our strategy
$\pi_{\textsc{Large}}$ does the following: at each time step, we select an item
at random. When selecting a random item, we choose item $i$ with probability
$x_{i}/\theta$. With probability $1 - \sum_{i=1}^{m} x_{i}/\pat$, we select no item. It
follows from constraint~\eqref{eq:lp-tprobes} that this forms a valid
distribution.

The algorithm selects this item at random, and if the item has arrived and has
not yet been offered, we offer it to the buyer with probability $\rho$ (and with
probability $1-\rho$ we make no offer).

\textbf{The \texttt{SMALL} strategy.}
Our strategy $\pi_{\textsc{Small}}$ does the following: At each time step, if at
least one small item arrives in that step, choose one of the small arrivals at
random and offer it to the buyer. We ignore large items. Any small item which
arrives and is not chosen is permanently discarded (i.e., it will never be
offered to the buyer).

\textbf{The full algorithm.}
We fix a parameter $\varphi\in (0,1)$. First, if $\pat=1$, we simply take the optimal
choice, offering the item with the highest expected reward among items that
arrived. Then, for $\pat\ge 2$, if
$\sum_{i \in \LRG} w_{i}x_{i}p_{i} \ge (1-\varphi)\mathrm{LP}$, we use strategy
$\pi_{\textsc{Large}}$ at every time step. Otherwise,
$\sum_{i \in \SML} w_{i}x_{i}p_{i} > \varphi\mathrm{LP}$, and we use $\pi_{\textsc{Small}}$
at every time step.

We show that this algorithm is a $0.027$-approximation for the problem. This is
done by considering the two cases (corresponding to using the \texttt{LARGE} and \texttt{SMALL}
strategies) separately.

\begin{lemma} \label{lem:itemarrivals-large}
If $\sum_{i \in \LRG} w_{i}x_{i}p_{i} \ge (1-\varphi)\mathrm{LP}$, then
$\pi_{\textsc{Large}}$ achieves an expected matching weight of
$(1-\varphi)c\rho(1-2\rho)\mathrm{LP}$.
\end{lemma}

\begin{lemma}\label{lem:itemarrivals-small}
If $\sum_{i\in \LRG} w_{i}x_{i}p_{i} < (1-\varphi) \mathrm{LP}$, then
$\pi_{\textsc{Small}}$ achieves an expected matching weight of at least
\begin{equation}
\varphi \left(1 - \frac{c}{2}\right)^{\frac{2}{1 - (1-c/2)^{2}}} \left(\frac{1}{c} - \frac{e^{-c}}{1 - e^{-c}} \right)\mathrm{LP}
\end{equation}
\end{lemma}

Using the bounds for both the \texttt{LARGE} and \texttt{SMALL} strategies, we
can now give our final result.
\begin{theorem}\label{thm:itemarrivals}
For an appropriate choice of parameters $\varphi, \rho, c$, our algorithm is a
$0.027$-approximation.
\end{theorem}

\Cref{lem:itemarrivals-large,lem:itemarrivals-small} and \Cref{thm:itemarrivals} are proved in \Cref{sec:deferred_proofs}.
Our result for this setting can be used as a black box for a new kind of online stochastic
matching problem with two-sided arrivals, where after the arrival of each customer, all
items (contractors in an online labor platform) are initially unavailable, and arrive over time following Bernoulli processes (when they "discover" the customer's task). These acknowledgments "reset" after each customer, who presents a new job, and we note that that each (contractor, customer)-pair can have a different rate for its Bernoulli process of the contractor arriving, as well as a different probability for the customer accepting that contractor. A contractor, once matched, spends the time horizon (e.g.\ one day) doing that task and hence never returns.  For such an online
matching problem, we may use our strategy as a black box for the algorithms of
\Cref{sec:lpalgs}, where we have $\kappa=0.027$, to get a constant-factor competitive ratio for any finite market size and time horizon.

\section{Negative Results} \label{sec:neg_results}

\subsection{Stochasticity Gap} \label{sec:stochgap}
The \emph{stochasticity gap} is a fundamental gap in Linear Programming relaxations for stochastic problems which replace probabilities with deterministic fractional weights. The notion was first discussed informally in~\citet{brubach2017attenuate}, and was later also observed by~\citet{purohit2019hiring} (where they referred to it as a "probing gap"). When these LP relaxations are used as upper bounds on the offline optimal solution, or as a benchmark for the competitive ratio, the stochasticity gap represents a barrier to the best achievable competitive ratio. One interpretation of such a result is that better competitive ratios are not possible. However, one may alternatively view it as a result showing the limitations of using a particular LP as a benchmark for competitive ratios.

We present a stochasticity gap for a common LP relaxation of the online matching
problem with stochastic rewards. Recall from \Cref{sec:advlp} that LP~\eqref{lp:adv} with
the constraints~\eqref{eq:lpadv-1matchu}--\eqref{eq:lpadv-tprobes} (and excluding our additional family of
constraints~\eqref{eq:lpadv-dpopt}) is a standard LP relaxation for bipartite matching with (known)
patience constraints and adversarial arrivals. This is essentially an extension
of the "Budgeted Allocation" LP from~\citet{mehta2012online} to include the patience constraints.
For convenience, we reproduce this standard LP below in the vertex-weighted
setting.
% \begin{subequations} \label{lp:stochgap}
\begin{alignat*}{3}
\max &\sum_{u\in U} \sum_{v\in V}
x_{u,v}p_{u,v} w_{u} \\[2ex] %\tag{\ref{lp:stochgap}} \\[2ex]
\text{subject to }
& \sum_{v\in V} x_{u,v}p_{u,v} \le 1 & &\hspace{2em}\forall u\in U \\
& \sum_{u\in U} x_{u,v} p_{u,v} \le 1 & &\hspace{2em}\forall v\in V \\
& \sum_{u\in U} x_{u,v} \le \bE[\pat_{v}] & &\hspace{2em}\forall v\in V \\
& 0 \le x_{u,v} \le 1 & &\hspace{2em}\forall u\in U, v\in V \notag
\end{alignat*}
% \end{subequations}

Simple LP formulations like this, while useful, can give too large of an upper
bound on the performance of any offline algorithm, and thus make it difficult to
get larger competitive ratios. As such, more complex (and, often,
exponentially-sized) LPs have been used in recent work (see, e.g.,~\citet{svenssonLP}) to
achieve better results. Our LP-based techniques in \Cref{sec:lpalgs} use different
exponential-sized LPs to overcome the limitations of stochasticity gaps.

We start with a simple example demonstrating the notion of stochasticity gap,
where the bipartite graph has a single offline vertex $u$, and $n$ online
vertices arriving in any order. Suppose $p_{uv} = 1/n$ and $w_{uv}=1$ for all
online vertices $v\in V$. The LP given by~\eqref{eq:lpadv-1matchu}--\eqref{eq:lpadv-tprobes} can assign $x_{uv}=1$ for all
edges, achieving an objective value of $1$. However, the best any online
algorithm can do is probe the single edge $(u,v)$ whenever vertex $v$ arrives
online, which matches the single offline vertex $u$ with probability $1-1/e$.
Thus, it is impossible for any online algorithm to guarantee a matching of
expected weight better than $(1-1/e)$ times the LP value. This establishes a
stochasticity gap of $1-1/e$ for this formulation, and suggests that if we wish
to beat the $1-1/e$ barrier, we must use a different LP benchmark. However, the
stochasticity gap for the LP of~\eqref{eq:lpadv-1matchu}--\eqref{eq:lpadv-tprobes} is even worse. To establish this,
  we consider a complete bipartite graph with $n$ vertices on each side, and
  edge probabilities $1/n$; a result on random graphs
  then implies~\Cref{thm:mehtalpstochgap}, whose proof is in
  \Cref{sec:deferred_proofs}.
% . first cite a result about random graphs due
% to~\citet{bb95randgraph}.
% \begin{lemma}[Theorem~14 of~\citet{bb95randgraph}]  \label{lem:randgraph}
% Let $G$ be a random bipartite graph with both partitions of size $n$ and where each edge exists independently with probability $p=1/n$. Let $\gamma$ be the solution to the equation $\gamma = e^{-\gamma}$. Then, the largest independent set of $G$ has size $n(2\gamma+\gamma^{2}))[1+o(1)]$ with probability $1-o(1)$.
% \end{lemma}

% This result allows us to derive the following stochasticity gap, whose proof is in \Cref{sec:deferred_proofs}.
\begin{theorem} \label{thm:mehtalpstochgap}
The LP given by the objective function~\eqref{lp:adv} and constraints~\eqref{eq:lpadv-1matchu}--\eqref{eq:lpadv-tprobes} has a stochasticity gap of at most $\approx 0.544$.
\end{theorem}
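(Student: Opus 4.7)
{Proof Proposal for \Cref{thm:mehtalpstochgap}.}
The plan is to exhibit an unweighted stochastic bipartite graph for which LP~\eqref{lp:stochgap} attains a large objective value while even the \emph{offline} expected optimal matching (which upper-bounds every probe-commit algorithm) is comparatively small, and then invoke \Cref{lem:randgraph} to quantify the gap.

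Concretely, I would take $|U|=|V|=n$, set $w_u=1$ for every $u\in U$, $p_{u,v}=1/n$ for every edge, and patience $\pat_v=n$ (so~\eqref{eq:sgap-tprobes} is vacuous). First I would verify that assigning $x_{u,v}=1$ for all pairs is LP-feasible: each constraint of the form~\eqref{eq:sgap-1matchu} and~\eqref{eq:sgap-1matchv} reduces to $\sum (1/n) = 1$, and~\eqref{eq:sgap-tprobes} becomes $n \le n$. This gives LP objective value $n \cdot n \cdot (1/n) = n$, so $\LPOPT \ge n$.

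Next I would upper-bound $\OPT$ by the expected size of a \emph{maximum matching} in the realized random bipartite graph $G_{n,n,1/n}$, since no probe-commit algorithm can do better than an offline oracle that sees all edge realizations. To translate \Cref{lem:randgraph} (which concerns the maximum independent set $\alpha(G)$) into a matching bound, I would use K\"onig's theorem together with the standard identity $\alpha(G) + \tau(G) = 2n$ for bipartite $G$ on $2n$ vertices, where $\tau(G)$ is the minimum vertex cover; K\"onig's theorem gives $\tau(G) = \nu(G)$, so $\nu(G) = 2n - \alpha(G)$. Combining this with \Cref{lem:randgraph} yields
\[
\bE[\nu(G_{n,n,1/n})] \le n\bigl(2 - 2\gamma - \gamma^2\bigr)(1+o(1))
\]
where $\gamma = e^{-\gamma}$. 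Numerically $\gamma \approx 0.5671$, giving $2 - 2\gamma - \gamma^2 \approx 0.5442$. Therefore $\OPT/\LPOPT \le 0.544(1+o(1))$, establishing the claimed stochasticity gap.

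The main (minor) subtlety I anticipate is handling the high-probability qualifier in \Cref{lem:randgraph}: the statement is about $\alpha(G)$ with probability $1-o(1)$, whereas I need a bound in expectation. Since $\alpha(G)\le 2n$ deterministically, the $o(1)$-probability bad event contributes at most $o(n)$ to the expectation, which can be absorbed into the $(1+o(1))$ factor; this observation would be the one line requiring care. Everything else is routine LP feasibility checking and the König--Bollob\'as plug-in.
\Halmos
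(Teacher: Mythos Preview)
Your proposal is correct and follows essentially the same route as the paper: the same instance ($|U|=|V|=n$, $p_{u,v}=1/n$, all-ones LP solution with value $n$), the same independent-set-to-vertex-cover conversion via \Cref{lem:randgraph}, and the same high-probability-to-expectation patch using the trivial bound $\nu(G)\le n$. If anything, you spell out the LP feasibility check and the K\"onig/expectation steps more carefully than the paper does.
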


We should note that \cite{fata2019multi} establishes a smaller upper bound of $1-\ln(2-1/e)\approx0.51$ relative to this LP, but they restrict to online probing algorithms.  Our higher upper bound holds even for the offline optimal matching, hence reflecting a true "stochasticity gap".

\subsection{$0.5$ Upper Bound for SimpleGreedy} \label{sec:simple_greedy}
As defined in~\citet{mehta2012online}, an \emph{opportunistic}
algorithm for the \emph{Stochastic Rewards} setting is
one which always attempts to probe an edge incident to an
online arriving vertex $v\in V$ if one exists.
The work of~\citet{mehta2012online} showed that in the
unweighted \emph{Stochastic Rewards} ($\pat_v=1$ for all online vertices $v\in V$) problem, any
opportunistic algorithm achieves a competitive ratio of
$1/2$.
The simplest opportunistic algorithm is the one which, when
$v\in V$ arrives online, chooses a neighbor $u\in U$ of $v$
arbitrarily and probes the edge $(u,v)$. We call this algorithm
"SimpleGreedy". Since SimpleGreedy is opportunistic, the result of~\citet{mehta2012online} shows that
SimpleGreedy achieves a competitive ratio of at least $1/2$; \Cref{thm:halfub} shows that this is tight even when restricted to small, uniform $p$.
\begin{theorem} \label{thm:halfub}
There exists a family of unweighted graphs under stochastic rewards and adversarial arrivals for which \textup{SimpleGreedy} achieves a competitive ratio of at most $1/2$ even when all edges have uniform probability $p = O(1/n)$.
\end{theorem}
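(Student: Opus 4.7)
The plan is to construct, for each $n$, an unweighted bipartite instance $G_n$ with uniform edge probability $p = \Theta(1/n)$, together with an adversarial arrival order and an adversarial tie-breaking rule, under which SimpleGreedy's expected matching is at most $(1/2 + o(1))\OPT(G_n)$. Since opportunistic algorithms are already known to be $1/2$-competitive, it suffices to exhibit a matching asymptotic upper bound.

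The inspiration is the classical tight example for deterministic greedy matching: $U=\{u_1,u_2\}$, online arrivals $v_1,v_2$ with $v_1$ adjacent to both and $v_2$ adjacent only to $u_1$. Forcing greedy to match $v_1\to u_1$ gives exactly $1/2$ of the optimum at $p=1$. Per-gadget analysis at $p=O(1/n)$ gives ratio $1-p/2 \to 1$, so small disjoint copies do not work. To scale up, I would stretch the gadget by letting a first batch of $n$ ``bridge'' arrivals $v_1,\ldots,v_n$ (each adjacent to both $u_1$ and $u_2$, with $p=1/n$) precede a second batch of $m$ ``dedicated'' arrivals $w_1,\ldots,w_m$ (each adjacent only to $u_1$, same $p$). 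The adversarial tie-breaking forces every $v_i$ to probe $u_1$ while $u_1$ is still unmatched; once $u_1$ becomes matched at some random time $T$, subsequent $v_i$'s are diverted to $u_2$; the $w_j$'s probe $u_1$ only if $u_1$ remains unmatched when they arrive.

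The analysis has three main steps. First, I would compute $\ALG(G_n)$ by conditioning on whether $u_1$ is matched in the first batch, exploiting the truncated-geometric distribution of $T$ (parameter $1/n$) to evaluate the expected number of post-$T$ matches on $u_2$ and the expected second-batch match on $u_1$; the key conditional moment is $\bE[(1-1/n)^{-T}\mid T\le n]\to \euler/(\euler-1)$. Second, I would lower-bound $\OPT(G_n)$ by the clairvoyant strategy ``$v_i\to u_2$ for all $i$, $w_j\to u_1$ for all $j$,'' whose expected matching equals $(1-1/\euler) + (1-(1-1/n)^m)$ and thus approaches $2$ as $m\to\infty$. Third, I would take the ratio and choose $m=m(n)\to\infty$ so that the uniform bound $p=\Theta(1/n)$ is preserved.

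The main obstacle is that the simple two-vertex construction above converges only to a ratio of roughly $(1+1/\euler)/2 \approx 0.684$, not to $1/2$: even after the adversary wastes probes on $u_1$, SimpleGreedy still matches $u_2$ with nontrivial probability from the post-$T$ diversions, so it recovers a full match beyond the one-$u_1$-match that OPT ``sacrifices.'' To push the ratio down to $1/2$ I would iterate the construction with multiple ``pivotal'' offline vertices $u_1^{(1)},u_1^{(2)},\ldots$ that the adversary forces SimpleGreedy to exhaust one after another, so that each level absorbs a constant fraction of the probes and the wastage compounds across levels. The delicate part will be balancing the number of pivotal vertices against the batch sizes so that $p$ remains $\Theta(1/n)$ while $\ALG(G_n)/\OPT(G_n)$ hits $1/2$ in the limit.
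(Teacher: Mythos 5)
Your overall strategy --- force SimpleGreedy to waste the probes of one batch of arrivals on shared ``distractor'' offline vertices, then show that a later batch cannot recover the loss --- is the right one, and you correctly diagnose that your two-offline-vertex gadget bottoms out near $(1+1/\euler)/2$ rather than $1/2$. But the proposal stops exactly where the proof has to happen: the multi-level ``pivotal vertices with compounding wastage'' construction is never specified, and you explicitly defer the parameter balancing that would make it work. That is a genuine gap, because the entire content of the theorem is the quantitative mechanism that drives the residual recovery down to $o(1)$ of the optimum.

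The paper closes this gap with a flat, not hierarchical, construction. Take $k$ distractor vertices $U_0$ ($k$ a large constant) adjacent to every online vertex, with uniform $p=k/n$. Give each of $n$ bridge arrivals $v_i$ its own \emph{private} partner $u_i$ (a perfect matching $E_n$), and add $kn^2$ backup arrivals adjacent only to $U_0$. OPT matches all of $U_0$ via the backups (value $k$) and probes every private edge (value $n\cdot k/n=k$), totalling $2k$; note that giving each $v_i$ a private partner, rather than one shared $u_2$ as in your gadget, is what removes the $1-1/\euler$ collision loss on OPT's side. SimpleGreedy, steered onto $U_0$, still gets exactly $k$ in expectation from the bridge arrivals (each makes a single probe succeeding w.p.\ $k/n$), but the backup batch can only recover the leftover distractors, whose expected number is $\sum_{l=0}^{k-1}\binom{n}{l}(k/n)^l(1-k/n)^{n-l}(k-l)\to k\cdot e^{-k}k^k/k!=\Theta(\sqrt{k})=o(k)$ by Stirling. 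Hence the ratio is $(k+o(k))/(2k)\to 1/2$ as $k\to\infty$, with $p=k/n=O(1/n)$ for each fixed $k$. Your $k=1$ gadget is precisely the case where this leftover, $(1-1/n)^n\to 1/\euler$, is a constant fraction of $k$; the fix is not to iterate levels sequentially but to take many distractors in parallel and invoke binomial concentration.
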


We present our construction here. Let $k$ be a fixed positive integer constant. Let
$U=U_{0}\cup U_{n}$, where $U_{0}$ and
$U_{n} = \{u_{1},\dots, u_{n}\}$ are disjoint, and $|U_{0}|=k$. Let
$V = V_{0}\cup V_{n}$ where $V_{0}$ and
$V_{n}=\{v_{1},\dots,v_{n}\}$ are disjoint, and
$|V_{0}|=kn^{2}$. Let $E=E_{0}\cup E_{n}$ where
$E_{0}= U_{0}\times V$ and
$E_{n} = \{(u_{i},v_{i}) \mid i=1,\ldots,n\}$.
Let $p=k/n$.

For the bipartite graph $G(U,V; E)$, an offline algorithm can
achieve a matching of expected size at least $2k$ by first probing
edges $(u,v)\in U_{0}\times V_{0}$ until all edges are probed or the
maximum possible successful matches, $k$, is achieved. This strategy
achieves $k$ successful matches among these edges in
expectation. Then, the offline optimal will probe all edges of
$E_{n}$ in any order, achieving an expected number of successful
matches of $k$. The total expected size of the achieved matching is
then $2k$.
We complete the proof of \Cref{thm:halfub} by showing that an online algorithm cannot earn more than $k+o(k)$, in \Cref{sec:deferred_proofs}.

\subsection{Hardness of Unknown Patience} \label{sec:unknown_pat}
We now show that when offering items to a single customer with random patience, one should not be comparing to a benchmark that knows the realization of the patience in advance, or else the competitive ratio will be 0.
The same counterexample shows for single-item IID-valued online accept/reject problems that the competitive ratio will be 0 if the number of arrivals is unknown, recovering the result of \citet{alijani2020predict}.

\begin{theorem} \label{thm:unknownpatience-badcr}
For the star graph probing problem with patience $\pat$ drawn from an arbitrary distribution\footnote{See \Cref{sec:arbpatience} for the exact problem statement.  There we compared to an LP that also did not know the realization of the patience in advance, the importance of which is fully justified by the present \namecref{thm:unknownpatience-badcr}.}, the reward of an online algorithm relative to a clairvoyant who sees the realization of $\pat$ in advance must be 0, even if there are infinite copies of every offline vertex.
\end{theorem}

\Cref{thm:unknownpatience-badcr} is proved in \Cref{sec:deferred_proofs}, and its construction is presented below.
The significance of allowing infinite copies of offline vertices is the following.
Essentially, we are left with a pricing problem where there are an unknown $\pat$ number of opportunities to make a single sale to a customer; the different offline vertices' weights correspond to different prices that can be tried, and after each trial we get an independent realization (due to the infinite copies) whose probability depend on the price.
One can further transform such an instance into an online accept/reject problem facing a stream of $\pat$ IID draws, where the pricing decisions correspond to acceptance thresholds.
Therefore, our hardness result implies the following.  Although already known to \citet[Appendix~A.1]{alijani2020predict}, we re-derive it using our construction to articulate the connection, which we believe is instructive.

\begin{corollary} \label{cor:unknownpatience}
Consider the simple optimal stopping problem where an online algorithm can accept at most one of $\theta$ values that are drawn IID from a known distribution and presented one-by-one.
If $\theta$ is unknown, then the competitive ratio is 0.
\end{corollary}

% The optimal stopping problem described in \Cref{cor:unknownpatience} is equivalent to the \textit{IID prophet}
% problem \citep{correa2017posted}, and the \textit{single-leg revenue management} problem \citep{gallego1994optimal} with one unit of
% inventory. In those problems, the number of buyers or horizon length is always
% assumed to be known, leading to constant-factor guarantees. In
% stark contrast, if the number of arrivals is unknown, then a
%   constant factor is impossible.

\textbf{Our construction.}
Fix a positive integer $k$ and let $m$ be another positive integer that we will drive to $\infty$.
Consider a star graph, i.e.\ a bipartite graph with many offline vertices $u\in U$
and a single online vertex $v$. Consider the following distribution over the
patience of $v$:
\begin{equation}
\label{eq:badpatdist}
\tilde{\pat}_v =
\begin{cases}
m^{2i} & w.p.~m^{-i}-m^{-i-1} \qquad \forall i=0,\ldots,k-1; \\
m^{2k} & w.p.~m^{-k}.  \\
\end{cases}
\end{equation}
In our construction, there are $m^{2k}$ identical offline vertices for each $i=0,\ldots,k$, with weight $m^i$ and probability $m^{-2i}$.  We note that $m^{2k}$ is greater than the largest possible realization of $\tilde{\pat}_v$, so the constraints on the availability of offline vertices are never binding.  That is, \textit{our construction applies even in the more restrictive setting} where there are infinite copies of every offline vertex.

\textbf{Intuition behind hardness.}
In our construction, there are essentially an unknown number of opportunities to sell a single item.
During each opportunity, one must choose a consumption option $i=0,\ldots,k$, which has a $m^{-2i}$ probability of selling the item at price $m^i$.
The immediate reward from consumption option $i$ is $m^i\cdot m^{-2i}=m^{-i}$, which is decreasing in $i$, but smaller indices of $i$ also has a higher chance of closing the sale and eliminating future opportunities.
Therefore, there is a tradeoff between offering "longshot" prices with a high index of $i$ (desirable if a large number of opportunities remain), vs.\ the "safe" option $i=0$ which makes a sale w.p.~$m^{-0}=1$ (desirable on the final opportunity).
Our proof of \Cref{thm:unknownpatience-badcr} in \Cref{sec:deferred_proofs} shows that a clairvoyant who tries only option $i$ when they know the patience will be $m^{2i}$, for all $i=0,\ldots,k$, can earn $\approx(1-1/e)(k+1)$.
Meanwhile, any online algorithm is best off using the "safe" option $i=0$ on the first try and finishing, since the customer only has a small chance of having patience greater than 1 (we prove this through backwards induction on the optimal dynamic program).
This establishes an unbounded separation when $k$ is taken to be large in our construction.

\textbf{Transformed hard instance to establish \Cref{cor:unknownpatience}.}
For concreteness, we show how to transform our construction to the accept/reject problem.
The IID draws from the distribution should take one of $k+1$ possible values, indexed by $i=0,\ldots,k$.
The decision each period is to set a threshold on the minimum acceptable value, where each option $i=0,\ldots,k$ should correspond to a "consumption option" that has probability $m^{-2i}$ of accepting.
Therefore, the probability of the IID draw taking value index $i$ for all $i=0,\ldots,k-1$ should be $m^{-2i}-m^{-2(i+1)}$ and the probability of value index $i=k$ should be $m^{-2k}$, so that accepting all levels with index at least $i$ has probability $$(m^{-2i}-m^{-2(i+1)})+(m^{-2(i+1)}-m^{-2(i+2)})+\cdots+m^{-2k}=m^{-2i}$$ of making an acceptance.
Now, the exact values for each index $i$ must be calibrated so that the immediate reward from each consumption option $i$ is $m^{-i}$.
Using backwards induction over $i=k,\ldots,1$, we can solve that the value for index $i=k$ should be $m^k$ and that the value for each index $i=0,\ldots,k-1$ should be
$$\frac{m^{-i}-m^{-(i+1)}}{m^{-2i}-m^{-2(i+1)}}=\frac{m^{i+2}-m^{i+1}}{m^2-1}=\frac{m^{i+1}}{m+1}.$$
This completes the construction of our transformed instance for the accept/reject problem.

\textbf{Applying Yao's minimax principle.}
Finally, we explain why in \Cref{cor:unknownpatience}, there is no difference between $\theta$ being completely unknown vs.\ $\theta$ being drawn from a known distribution (but competing against a clairvoyant who knows its realization in advance).
Formally, for any patience $\pat$ and any (deterministic) non-clairvoyant algorithm $\psi$, let $\ALG(\psi,\pat)$ denote the algorithm's expected reward when the patience realizes to $\pat$.  Meanwhile, let $\OPT(\pat)$ denote the clairvoyant's expected reward when the patience is known to be $\pat$.
Let $D$ denote a distribution over patiences $\pat$, and $\Psi$ denote a distribution over algorithms $\psi$.
Yao's minimax principle says that
\begin{align*}
\sup_\Psi \inf_\pat \frac{\bE_{\psi\sim\Psi}[\ALG(\psi,\pat)]}{\OPT(\pat)}
= \inf_D\sup_{\psi}\bE_{\pat\sim D}\left[\frac{\ALG(\psi,\pat)}{\OPT(\pat)}\right]
=\inf_D\frac{\sup_{\psi}\bE_{\pat\sim D}[\ALG(\psi,\pat)]}{\bE_{\pat\sim D}[\OPT(\pat)]}
\end{align*}
(where the second equality holds via rescaling worst-case distributions $D$ by $\OPT(\theta)$).
The existence of our family of distributions in~\eqref{eq:badpatdist} shows that the RHS expression, and hence all of these expressions, equal 0.
The LHS expression equaling 0 implies that for any fixed (randomized) online algorithm that does not know the value of $\theta$ in advance, an adversary can always set a horizon length $\theta$ for which the algorithm performs unboundedly worse relative to $\OPT(\theta)$.

% \section{Conclusion and Future Directions}
% Our results provide improved bounds for several well-known online matching problems, specifically ones that involve stochastic rewards and patience parameters. We further introduce a new repeated-interaction model for a single customer who has a patience that is randomly determined and initially unknown, which can open up many avenues for future work.

\ACKNOWLEDGMENT{
Nathaniel Grammel was supported in part by NSF award CCF-1918749, and by research awards from Amazon and Google.

Will Ma would like to thank Jake Feldman and Danny Segev for instructive discussions relevant to this paper.

Aravind Srinivasan was supported in part by NSF awards CCF-1422569, CCF-1749864, and CCF-1918749, as well as research awards from Adobe, Amazon, and Google.
}

\bibliographystyle{informs2014} % outcomment this and next line in Case 1
\bibliography{bibliography} % if more than one, comma separated

% FOR SUBMISSION
% \ECSwitch
% \ECDisclaimer
% \ECHead{E-Companion}

% FOR SSRN
\clearpage

% Appendix here
% Options are (1) APPENDIX (with or without general title) or
%             (2) APPENDICES (if it has more than one unrelated sections)
% Outcomment the appropriate case if necessary
%
% \begin{APPENDIX}{<Title of the Appendix>}
% \end{APPENDIX}
%
%   or
%

\begin{APPENDICES}
  \crefalias{section}{appendix}

\section{Deferred Proofs} \label{sec:deferred_proofs}

\begin{proof}{Proof of \Cref{lem:lpopt}.}
Consider an adaptive offline algorithm which is optimal. Let $x_{u,v}$ be the probability that this strategy probes edge $(u,v)$. For any vertex $u\in U$, the probability that $u$ is successfully matched is at most $\sum_{v\in V}x_{u,v}p_{u,v}\le 1$, and similarly for the probability of successfully matching any online vertex $v\in V$. Thus, this assignment satisfies constraints~\eqref{eq:lpadv-1matchu} and~\eqref{eq:lpadv-1matchv}. By the definition of $\OPT(G)$, we cannot probe more than $\pat_{v}$ edges incident on an online vertex $v$. So constraint~\eqref{eq:lpadv-tprobes} is satisfied.

Finally, we argue that the new constraint~\eqref{eq:lpadv-dpopt} is satisfied by this assignment.
Suppose instead there is some vertex $v'\in V$ and some $U'\subseteq U$ for which
$\sum_{u\in U'}x_{u,v'}p_{u,v'}w_{u} > \OPT(U',v')$. Then, we can define a new
offline probing strategy on the star graph $(U',\set{v'}, U'\times\set{v'})$ which
simply simulates our original algorithm on $G$ and probes only those edges which
are in $U'\times\set{v'}$. This achieves an expected matching weight on the star
graph of at least $\sum_{u\in U'}x_{u,v'}p_{u,v'}w_{u} > \OPT(U',v')$, but this
contradicts the fact that $\OPT(U',v')$ is the optimal expected matching weight
for the star graph. Thus, this assignment must satisfy constraint~\eqref{eq:lpadv-dpopt}. It
follows that LP~\eqref{lp:adv} must have objective value at least as large as
the expected matching weight of the optimal offline algorithm. \Halmos
\end{proof}

\begin{proof}{Proof of \Cref{lem:halflp}.}
For the sake of analysis, suppose we have solved LP~\eqref{lp:adv} on the graph $G$. Let
\[c(x) = \sum_{u\in U}\sum_{v\in V}x_{u,v}p_{u,v}w_{u}\]
be the value of the objective function for $x$ and let
\[c_v(x) = \sum_{u \in U} x_{u,v} p_{u,v} w_u\]
be the value ``achieved'' by a given online vertex $v$ with $c(x) = \sum_{v \in V} c_v(x)$. Let $x^{*}$ be the optimal assignment given by LP~\eqref{lp:adv}, for the graph $G$. So $c(x^{*}) = \sum_{v \in V} c_v(x^*) = \LPOPT(G)$.

We will make the following charging argument. Imagine that when a vertex $v$ is matched to some $u \in U$, we assign $0.5 w_u$ to $v$ and for all $v' \in V$ (including $v$ itself) we assign $0.5 x_{u,v'} p_{u,v'} w_u$ to $v'$. Note we have assigned at most $w_u$ weight in total since $\sum_{v' \in V} 0.5 x_{u,v'} p_{u,v'} w_u \leq 0.5 w_u$ due to LP constraint~\eqref{eq:lpadv-1matchu}.

Let $w_{v}$ for online vertex $v \in V$ be equal to the weight $w_u$ of the offline vertex $u$ which is matched to $v$ or $0$ if $v$ is unmatched at the end of the arrivals. Let $U_m \subseteq U$ be the set of offline vertices which are matched at the end of the arrivals. We define
\[c_v(\ALG) = 0.5 w_{v} + 0.5 \sum_{u \in U_m} x_{u,v} p_{u,v} w_u\]
as the weight assigned to $v$ in our imaginary assignment. By the linearity of expectation
\[
\ALG(G) = \bE\left[\sum_{v \in V} c_v(\ALG)\right] = \sum_{v \in V} \bE[c_v(\ALG)]
\]
Thus, to complete the proof, we must show that
$$\sum_{v \in V} \bE[c_v(\ALG)] \geq \frac{1}{2}\LPOPT(G)$$

Consider an online vertex $v$ arriving at time $k$. Let $U_{v}\subseteq U$ be the set of vertices available (unmatched) when $v$ arrives and $U_{-v} = U\setminus U_{v}$ be the set of vertices which are already matched when $v$ arrives. Note that when $v$ arrives, it has already been assigned a value of $0.5 \sum_{u \in U_{-v}} x_{u,v} p_{u,v} w_u$.
After using a $\kappa$-approximate ($\kappa \le 1$) black box for matching $v$ to $U_v$, we have assigned an expected value to $v$ of at least
$0.5\kappa \OPT(U_v, v) + 0.5 \sum_{u \in U_{-v}} x_{u,v} p_{u,v} w_u$.

Thus, we have
\begin{align*}
\sum_{v \in V} \bE[c_v(\ALG)]
&\geq \sum_{v \in V}
\sum_{U_v \subseteq U} \Pr[U_v]
\left( 0.5 \kappa\OPT(U_v, v) + 0.5 \sum_{u \in U_{-v}} x_{u,v} p_{u,v} w_u \right) \\
&\geq \sum_{v \in V}
\sum_{U_v \subseteq U} \Pr[U_v]
\left( 0.5 \kappa \sum_{u\in U_v} x_{u,v}p_{u,v}w_{u} + 0.5 \kappa \sum_{u \in U_{-v}} x_{u,v} p_{u,v} w_u \right)
\quad \because~\eqref{eq:lpadv-dpopt} \\
&\geq 0.5 \kappa \sum_{v \in V}
\sum_{U_v \subseteq U} \Pr[U_v]
\sum_{u\in U} x_{u,v}p_{u,v}w_{u} \\
&= 0.5 \kappa \sum_{v \in V} \sum_{u\in U} x_{u,v}p_{u,v}w_{u} \\
&=  \frac{\kappa}{2}\LPOPT(G)
\end{align*}
which completes the proof.\Halmos
\end{proof}
%%% Local Variables:
%%% mode: latex
%%% TeX-master: "follow_your_star_OR"
%%% End:

\begin{proof}{Proof of \Cref{thm:prophethalf}.}
Given a feasible solution $\ssx_v(\pi)$ to LP~\eqref{lp:prophet} whose objective value 
is at least $\kappa\cdot\OPTP(G)$, we show that an online algorithm can form a matching whose weight is in expectation at least $\frac{\kappa}{2}\cdot\OPTP(G)$.

First, we establish some notation. Let $Q^t_v\in\{0,1\}$ be the indicator random variable for the event that the arrival at time $t$ has type $v\in V$. 
Let $Y^t_{uv}\in\{0,1\}$ be the indicator random variable for $u$ being matched to a vertex of type $v$ at time $t$. Let $N^t_u$ be the indicator random variable for offline vertex $u$ becoming matched by the \textit{end} of time $t$. And, let $\Pi_t$ be denote the policy chosen by \Cref{alg:prophet} for time $t$. We may now write the online algorithm's weight matched as
\begin{equation} \label{eq:prophetALG}
\begin{aligned}
\sum_{t,u,v}w_{uv}Y^t_{uv}
&=\sum_{t,u,v}(w_{uv}-\frac{\ssw_u}{2})Y^t_{uv}+\sum_u\frac{\ssw_u}{2}\sum_{t,v}Y^t_{uv} \\
&=\sum_{u,v:w_{uv}\ge\ssw_u/2}(w_{uv}-\frac{\ssw_u}{2})\sum_tY^t_{uv}+\sum_u\frac{\ssw_u}{2}N^T_u
\end{aligned}
\end{equation}

Now we wish to analyze the value of $\bE[Y^t_{uv}]$. For a pair $u,v$ such that $w_{uv}\ge\ssw_u/2$ (i.e.\ edge $uv$ is not skipped over when executing policies $\pi\in\cP$ during the execution of \Cref{alg:prophet}), we have
\begin{equation} \label{eq:expYprophet}
\begin{aligned}
\bE[Y^t_{uv}]
&=\bE[Y^t_{uv}|(Q^t_v=1)\cap(N^{t-1}_u=0)]\Pr[(Q^t_v=1)\cap(N^{t-1}_u=0)] \\
&=q_{tv}(1-\bE[N^{t-1}_u])\sum_{\pi\in\cP}\bE[Y^t_{uv}|(Q^t_v=1)\cap(N^{t-1}_u=0)\cap(\Pi_t=\pi)]\Pr[\Pi_t=\pi|Q^t_v=1]
\end{aligned}
\end{equation}
Next, notice that $\bE[N^{t-1}_u]\le\bE[N^T_u]$ since the probability of offline vertices becoming matched only increases over time. Notice also that since $w_{uv}\ge\ssw_u/2$, no policy for a vertex of type $v$ will skip over probing $u$. Further, skipping over other vertices can only increase the probability that $u$ is probed. Hence, we can conclude that $\bE[Y^t_{uv}|(Q^t_v=1)\cap(N^{t-1}_u=0)\cap(\Pi_t=\pi)] \ge p_{uv}(\pi)$. Finally, we observe that $\Pr[\Pi_t = \pi | Q^t_v=1] = \ssx_v(\pi)/q_v$.

Applying the above to~\eqref{eq:expYprophet}, we have
\begin{align*}
\bE[Y^t_{uv}]
&\ge q_{tv}(1-\bE[N^T_u])\sum_{\pi\in\cP}p_{uv}(\pi)\frac{\ssx_v(\pi)}{q_v}
\end{align*}
and combining with~\eqref{eq:prophetALG} we get can bound the algorithm's expected weight as
\begin{align*}
\bE\left[\sum_{t,u,v}w_{uv}Y^t_{uv}\right]
&\ge\sum_{u,v:w_{uv}\ge\ssw_u/2}(w_{uv}-\frac{\ssw_u}{2})\underbrace{\sum_tq_{tv}}_\text{$=q_v$ by definition}(1-\bE[N^T_u])\sum_{\pi\in\cP}p_{uv}(\pi)\frac{\ssx_v(\pi)}{q_v}+\sum_u\frac{\ssw_u}{2}\bE[N^T_u] \\
&\ge\sum_u(1-\bE[N^T_u])\sum_v(w_{uv}-\frac{\ssw_u}{2})\sum_{\pi\in\cP}p_{uv}(\pi)\ssx_v(\pi)+\sum_u\frac{\ssw_u}{2}\bE[N^T_u] \\
&=\sum_u(1-\bE[N^T_u])\left(\ssw_u-\frac{\ssw_u}{2}\underbrace{\sum_v\sum_{\pi\in\cP}p_{uv}(\pi)\ssx_v(\pi)}_\text{$\le1$ by LP constraint~\eqref{eq:prophetlp-offline}}\right)+\sum_u\frac{\ssw_u}{2}\bE[N^T_u]
\end{align*}
which is at least $\sum_u\ssw_u/2$.  By assumption on the LP solution, we have $\sum_u\ssw_u/2\ge\kappa\cdot\OPTP(G)/2$, completing the proof.
\Halmos\end{proof}

\begin{proof}{Proof of \Cref{thm:prophetiid}.}
Given a feasible solution $\ssx_v(\pi)$ to LP~\eqref{lp:prophet} whose objective value 
is at least $\kappa\cdot\OPTP(G)$, we show that an online algorithm can form a matching whose weight is in expectation at least $\kappa(1-1/e)\cdot\OPTP(G)$.

We let $Q^t_v$, $Y^t_{uv}$, $N^t_u$, and $\Pi_t$ be defined as in the proof of \Cref{thm:prophethalf}. Further, let $Z^t_{uv}\in\{0,1\}$ be the indicator random variable for $(u,v)$ being \emph{successfully} probed (real or simulated) at time $t$. Then, let $Z^t := \sum_{v\in V} Z^t_{uv}$. Note that $Z^t\in\set{0,1}$ since $Z^t_{uv}$ can only be $1$ for the single $v\in V$ that corresponds to the vertex type arriving at time $t$.

Next, since $u$ is available to be matched at time $t$ if and only if $Z^1_u = Z^2_u = \dots = Z^{t-1}_u = 0$, we can write $Y^t_{uv} = Z^t_{uv} \cdot \bI[Z^1_u = Z^2_u = \dots = Z^{t-1}_u = 0]$.

Using linearity of expectation and independence, we can decompose the algorithm's weight as:
\begin{align} \label{eqn::original}
\bE\left[\sum_{t,u,v}w_{uv}Y^t_{uv}\right]
&=\sum_{u,t}\bE\left[\sum_vw_{uv}Z^t_{uv}\right]\bE[\bI(Z^1_u=\cdots=Z^{t-1}_u=0)]
\end{align}
Under IID assumption, we know $q_{tv}=q_v/T$ for all $t$, and we can derive the following:
\begin{align}
\bE[Z^t_{uv}]
&=\Pr[Q^t_v=1]\sum_{\pi\in\cP}\bE[Z^t_{uv}|(Q^t_v=1)\cap(\Pi_t=\pi)]\Pr[\Pi_t=\pi|Q^t_v=1] \nonumber \\
&=\frac{q_v}{T}\sum_{\pi\in\cP}p_{uv}(\pi)\frac{\ssx_v(\pi)}{q_v} \nonumber \\
&=\frac{1}{T}\sum_{\pi\in\cP}p_{uv}(\pi)\ssx_v(\pi) \label{eqn::intermediate}
\end{align}
Using~\eqref{eqn::intermediate}, we can derive the following for any offline vertex $u$:
\begin{align}
\bE\Big[\sum_vw_{uv}Z^t_{uv}\Big|Z^t_u=1\Big]
&=\frac{\sum_vw_{uv}\bE[Z^t_{uv}]}{\Pr[Z^t_u=1]} \nonumber \\
&=\frac{\sum_vw_{uv}\bE[Z^t_{uv}]}{\sum_v\bE[Z^t_{uv}]} \nonumber \\
&=\frac{\sum_{v,\pi}w_{uv}p_{uv}(\pi)\ssx_v(\pi)}{\sum_{v,\pi}p_{uv}(\pi)\ssx_v(\pi)} \label{eqn::toUse}
\end{align}
Substituting~\eqref{eqn::toUse} back into~\eqref{eqn::original}, we get
\begin{align}
\bE\left[\sum_{t,u,v}w_{uv}Y^t_{uv}\right]
&=\sum_{u,t}\left(\frac{\sum_{v,\pi}w_{uv}p_{uv}(\pi)\ssx_v(\pi)}{\sum_{v,\pi}p_{uv}(\pi)\ssx_v(\pi)}\right)\Pr[Z^t_u=1]\bE[\bI(Z^1_u=\cdots=Z^{t-1}_u=0)] \nonumber \\
&=\sum_u\left(\frac{\sum_{v,\pi}w_{uv}p_{uv}(\pi)\ssx_v(\pi)}{\sum_{v,\pi}p_{uv}(\pi)\ssx_v(\pi)}\right)\sum_t\bE[\bI(Z^t_u=1)\bI(Z^1_u=\cdots=Z^{t-1}_u=0)] \label{eqn::finish}
\end{align}
For any $u$, let's analyze $\bE[\sum_t\bI(Z^t_u=1)\bI(Z^1_u=\cdots=Z^{t-1}_u=0)]$.  The expression inside the $\bE$ is 0 if all of the $Z^t_u$ are 0, and 1 otherwise.  Therefore, by independence, the expectation equals
\begin{align*}
1-\prod_t\Pr[Z^t_u=0]=1-\prod_t(1-\bE[\sum_vZ^t_{uv}]).
\end{align*}
Substituting in~\eqref{eqn::intermediate}, we get this is equal to
\begin{align*}
1-\prod_t(1-\sum_v\frac{1}{T}\sum_{\pi\in\cP}p_{uv}(\pi)\ssx_v(\pi))
&\ge1-\exp(-\sum_t\frac{1}{T}\sum_{v,\pi}p_{uv}(\pi)\ssx_v(\pi)) \\
&=1-\exp(-\sum_{v,\pi}p_{uv}(\pi)\ssx_v(\pi)) \\
\end{align*}
Finally, substituting back into~\eqref{eqn::finish}, we get that the algorithm's expected weight is at least
\begin{align*}
\sum_u\sum_{v,\pi}w_{uv}p_{uv}(\pi)\ssx_v(\pi)\frac{1-\exp(-\sum_{v,\pi}p_{uv}(\pi)\ssx_v(\pi))}{\sum_{v,\pi}p_{uv}(\pi)\ssx_v(\pi)}
\end{align*}
The function $\frac{1-\exp(-z)}{z}$ is decreasing in $z$ and the expression $\sum_{v,\pi}p_{uv}(\pi)\ssx_v(\pi)$ is at most 1 by LP constraint~\eqref{eq:prophetlp-offline}, hence the fraction in the expression above is uniformly lower-bounded by $1-1/e$.
Therefore, the algorithm's expected weight is at least $(1-1/e)\sum_u\sum_{v,\pi}w_{uv}p_{uv}(\pi)\ssx_v(\pi)$, which in turn is at least $(1-1/e)\kappa\cdot\OPT(G)$ by presumption on the given LP solution $\ssx_v(\pi)$, completing the proof.
\Halmos\end{proof}

\begin{proof}{Proof of \Cref{thm:prophetvw}.}
Given a feasible solution $\ssx_v(\pi)$ to LP~\eqref{lp:prophet} whose objective value 
is at least $\kappa\cdot\OPTP(G)$, we show that an online algorithm can form a matching whose weight is in expectation at least $\kappa(1-1/e)\cdot\OPTP(G)$.

We let $Q^t_v$, $Y^t_{uv}$, $Z^t_{uv}$, $Z^t_u$, and $N^t_u$ be as defined in the proof of \Cref{thm:prophethalf}.
The LP objective value can be written as 
\[
\sum_{u\in U}w_u\left(\sum_{v\in V}\sum_{\pi\in\cP}\ssx_v(\pi)p_{uv}(\pi)\right).
\]

We will show that \Cref{alg:prophetiidvw} in this case matches each offline vertex $u$ with probability at least $(1-1/e)\sum_{v\in V}\sum_{\pi\in\cP}\ssx_v(\pi)p_{uv}(\pi)$. Notice that
a vertex $u$ is left unmatched only if $Z^1_u=\cdots=Z^T_u=0$.  Since $Z^1_u,\ldots,Z^T_u$ are independent, the probability of this is
\begin{align*}
\prod_{t=1}^T(1-\Pr[Z^t_u=1])
&=\prod_{t=1}^T\left(1-\sum_{v\in V}\bE[Z^t_{uv}]\right) \\
&=\prod_{t=1}^T\left(1-\sum_{v\in V}q_{tv}\sum_{\pi\in\cP}p_{uv}(\pi)\frac{\ssx_v(\pi)}{q_v}\right) \\
&\le\prod_{t=1}^T\exp\left(-\sum_{v\in V}q_{tv}\sum_{\pi\in\cP}p_{uv}(\pi)\frac{\ssx_v(\pi)}{q_v}\right) \\
&=\exp\left(-\sum_{v\in V}\sum_{\pi\in\cP}p_{uv}(\pi)\ssx_v(\pi)\sum_{t=1}^T\frac{q_{tv}}{q_v}\right) \\
&=\exp\left(-\sum_{v\in V}\sum_{\pi\in\cP}p_{uv}(\pi)\ssx_v(\pi)\right)
\end{align*}
Therefore, the probability of \Cref{alg:prophetiidvw} matching an offline vertex $u$ is at least
$$1-\exp(-\sum_{v\in V}\sum_{\pi\in\cP}p_{uv}(\pi)\ssx_v(\pi))$$
which we know is at least $(1-1/e)\sum_{v\in V}\sum_{\pi\in\cP}\ssx_v(\pi)p_{uv}(\pi)$ by the same analysis as at the end of the proof of \Cref{thm:prophetiid}.
Therefore, the total expected weight accrued by the algorithm across all offline vertices $u\in U$ is at least $(1-1/e)\sum_{u\in U}\sum_{v\in V}\sum_{\pi\in\cP}\ssx_v(\pi)p_{uv}(\pi)$, which in turn is at least $(1-1/e)\kappa\cdot\OPTP(G)$ by presumption on the given LP solution $\ssx_v(\pi)$, completing the proof.
\Halmos\end{proof}

\begin{proof}{Proof of \Cref{thm:arbpat}.}
In our analysis, we define $S_{\pat}$ to be the random indicator
variable corresponding to the event that the probing process has
continued to the $\pat^{\text{th}}$ attempt. Additionally, $X_{j\pat}$
is the indicator variable for the event that vertex $j$ is probed in
the $\pat^{\text{th}}$ attempt, \emph{including simulated probes}.

We claim that $\bE[S_{\pat}] = s^{*}_{\pat}$ for all $\pat$. This can
be seen to be true by definition in~\eqref{eq:sUpdateOne}--\eqref{eq:sUpdateTwo} and the
fact that we simulate probes even when a vertex $j$ is unavailable.
Further, we can immediately see that
$\bE[X_{j\pat}] = \bE[X_{j\pat} | S_{\pat}=1]\Pr[S_{\pat}=1] =
\frac{x^{*}_{j\pat}}{s^{*}_{\pat}} s^{*}_{\pat} = x^{*}_{j\pat}$ as long as $\sss_\pat>0$, and that both $\bE[X_{j\pat}]$ and $\ssx_{j\pat}$ must be 0 if $\sss_\pat=0$.  Therefore, $\bE[X_{j\pat}]=\ssx_{j\pat}$ for all $j$ and $\pat$.

The algorithm's expected weight matched only counts non-simulated probes, and can be written
\begin{align} \label{eq:arbalg}
\sum_{j=1}^n w_jp_j\sum_{\pat=1}^n\bE\Big[X_{j\pat}\prod_{\pat'<\pat}(1-X_{j\pat'})\Big].
\end{align}
We will analyze the expectation in~\eqref{eq:arbalg} for an arbitrary $j$ and $\pat$.
We assume $\sss_\pat>0$, since otherwise $X_{j\pat}=0$ w.p.~1 and the expectation must be 0.

\textbf{Step 1: Using conditional independence.}
The terms inside the expectation in~\eqref{eq:arbalg} are generally not independent, since $X_{j\pat}=1$ (i.e.\ probing $j$ on attempt $\pat$) is only possible if $S_{\pat}=1$ (i.e.\ the online vertex is still available on attempt $\pat$), which is affected by whether $X_{j\pat'}=1$ for $\pat'<\pat$.
Nonetheless, we can decompose the expectation as follows:
\begin{align*}
\bE\Big[X_{j\pat}\prod_{\pat'=1}^{\pat-1}(1-X_{j\pat'})\Big]
&=\bE[X_{j\pat}]\bE\Big[\prod_{\pat'=1}^{\pat-1}(1-X_{j\pat'})\Big|X_{j\pat}=1\Big]
\\ &=\ssx_{j\pat}\prod_{\pat'=1}^{\pat-1}\bE[(1-X_{j\pat'})|X_{j\pat}=1]
\\ &=\ssx_{j\pat}\prod_{\pat'=1}^{\pat-1}\bE[(1-X_{j\pat'})|S_{\pat'+1}=1].
\end{align*}
The second equality holds because $\{X_{j\pat'}:\pat'<\pat\}$ were only correlated through their dependence on the availability of the online vertex, but conditional on $X_{j\pat}=1$ which reveals that the online vertex is available for all attempts $\pat'<\pat$, whether $j$ is probed on any attempt is determined by an independent coin flip.
The third equality holds because from the perspective of $X_{j\pat'}$, conditioning on $X_{j\pat}=1$ is equivalent to conditioning on $S_{\pat'+1}=1$.
Indeed, the only future information which distorts the likelihood of having probed $j$ on attempt $\pat'$ is whether the online vertex survived to attempt $\pat'+1$; any events after that have no interdependence with $X_{j\pat'}$.

For an arbitrary $\pat'$, noting that $S_{\pat'+1}=1$ implies $S_{\pat'}=1$, we can evaluate
\begin{align} \label{eq:temp}
\bE[(1-X_{j\pat'})|S_{\pat'+1}=1]
&=\frac{\Pr[X_{j\pat'}=0\cap S_{\pat'+1}=1|S_{\pat'}=1]}{\Pr[S_{\pat'+1}=1|S_{\pat'}=1]}.
\end{align}
Now, since $X_{j\pat'}=0$ and $X_{j\pat'}=1$ are mutually exclusive and collectively exhaustive events,
\begin{align*}
\Pr[S_{\pat'+1}=1|S_{\pat'}=1]
&=\Pr[X_{j\pat'}=0\cap S_{\pat'+1}=1|S_{\pat'}=1]+\Pr[X_{j\pat'}=1\cap S_{\pat'+1}=1|S_{\pat'}=1]
\\ &\le\Pr[X_{j\pat'}=0\cap S_{\pat'+1}=1|S_{\pat'}=1]+\ssx_{j\pat'}/\sss_{\pat'}
\end{align*}
where the inequality uses the fact that $\Pr[X_{j\pat'}=1|S_{\pat'}=1]=\ssx_{j\pat'}/\sss_{\pat'}$.
Substituting into~\eqref{eq:temp},
\begin{align*}
\bE[(1-X_{j\pat'})|S_{\pat'+1}=1]
&\ge1-\frac{\ssx_{j\pat'}/\sss_{\pat'}}{\Pr[S_{\pat'+1}=1|S_{\pat'}=1]}
\\&=1-\frac{\ssx_{j\pat'}}{\sss_{\pat'+1}}.
\end{align*}

In combination with the fact that expectations are non-negative, we can substitute back into expression~\eqref{eq:arbalg} to conclude that for any node $j$,
\begin{align} \label{eq:tempTwo}
\sum_{\pat=1}^n\bE\Big[X_{j\pat}\prod_{\pat'=1}^{\pat-1}(1-X_{j\pat'})\Big]
\ge\sum_{\pat:\sss_{\pat}>0}\ssx_{j\pat}\prod_{\pat'=1}^{\pat-1}\max\{1-\frac{\ssx_{j\pat'}}{\sss_{\pat'+1}},0\}.
\end{align}

\textbf{Step 2: Using the improved constraints.}
Now that in~\eqref{eq:tempTwo} we have bounded the algorithm's expected weight matched as an expression of variables from the LP's optimal solution, we can use the improved constraints~\eqref{eq:arblp-1pernode} to derive a bound relative to the LP's optimal objective value.

For this part of the proof, fix an arbitrary $j$ and we will omit the subscript $j$.
To ensure that we don't divide by 0 on the RHS of~\eqref{eq:tempTwo}, let $\opat$ denote the maximum index $\pat$ for which $\ssx_{\pat}>0$; note it must be the case that $\sss_{\opat}>0$.
We can then apply~\eqref{eq:arblp-1pernode} on the values of $\sss_{\pat'+1}$ to see that the RHS of~\eqref{eq:tempTwo} is lower-bounded by
\begin{align} \label{eqn:desiredResult}
\sum_{\pat=1}^{\opat}\ssx_{\pat}\prod_{\pat'=1}^{\pat-1}\max\{1-\frac{\ssx_{\pat'}}{\ssx_{\pat'+1}+\cdots+\ssx_{\opat}},0\}.
\end{align}
 
\begin{lemma} \label{lem:arbPatTechnical}
The algorithm's probability of probing any offline node in a non-simulated fashion is at least half the total from the LP optimal solution, when both sums are truncated at the index $\opat$ defined above.  That is,
$
\sum_{\pat=1}^{\opat}\ssx_{\pat}\prod_{\pat'=1}^{\pat-1}\max\{1-\frac{\ssx_{\pat'}}{\ssx_{\pat'+1}+\cdots+\ssx_{\opat}},0\}\ge\frac{1}{2}\sum_{\pat=1}^{\opat}\ssx_{\pat}.
$
\end{lemma}
The proof of \Cref{lem:arbPatTechnical} is presented next.
It directly completes the present proof from expression~\eqref{eqn:desiredResult}.
\Halmos\end{proof}

The analysis in Theorem~\ref{thm:arbpat} is tight.
Indeed, consider an example with $n=2$ offline vertices, where the first has $w_1=1$ and $p_1=\epsilon$ for some small $\epsilon>0$, while the second has $w_2=0$ and $p_2=1$.
The patience is deterministically 2, i.e. $q_1=q_2=1$.
The following defines an optimal LP solution: $s_1=1$, $x_{11}=\frac{1}{2(1-\epsilon)}$, $x_{21}=\frac{1-2\epsilon}{2(1-\epsilon)}$, $s_2=1/2$, $x_{12}=\frac{1-2\epsilon}{2(1-\epsilon)}$, $x_{22}=0$, which has objective value $\epsilon$ (all coming from the first vertex).
However, the algorithm can only probe the first vertex in a non-simulated fashion on attempt 1, since conditional on it surviving to attempt 2, the first vertex must have already been probed on attempt 1.
As a result, the algorithm probes the first vertex with a total probability of $\frac{1}{2(1-\epsilon)}$, matching expected weight $\frac{\epsilon}{2(1-\epsilon)}$ which approaches half of the LP's objective value as $\epsilon\to0$.

%%% Local Variables:
%%% mode: latex
%%% TeX-master: "follow_your_star_OR"
%%% End:

\begin{proof}{Proof of \Cref{lem:arbPatTechnical}.}
To simplify notation in the proof of~\eqref{eqn:desiredResult}, we let $\ssx_{>\pat'}$ denote $\ssx_{\pat'+1}+\cdots+\ssx_{\opat}$ for any $\pat'$; we also let $[\cdot]^+$ denote $\max\{\cdot,0\}$.
If $\opat=1$ then the desired result is trivial; if $\opat=2$ then~\eqref{eqn:desiredResult} equals $\ssx_1+\ssx_2[1-\ssx_1/\ssx_2]^+=\max\{\ssx_1,\ssx_2\}$ which also leads to the desired result that it is at least $\frac{1}{2}\sum_{\pat=1}^{\opat}\ssx_{\pat}$.
Hereafter, we assume $\opat\ge3$.  The following can then be derived:
\begin{align}
\sum_{\pat=1}^{\opat}\ssx_{\pat}\prod_{\pat'=1}^{\pat-1}\left[1-\frac{\ssx_{\pat'}}{\ssx_{>\pat'}}\right]^+
&=\sum_{\pat=1}^{\opat-2}\ssx_{\pat}\prod_{\pat'=1}^{\pat-1}\left[1-\frac{\ssx_{\pat'}}{\ssx_{>\pat'}}\right]^+
+\left(\ssx_{\opat-1}+\ssx_{\opat}\left[1-\frac{\ssx_{\opat-1}}{\ssx_{\opat}}\right]^+\right)\prod_{\pat'=1}^{\opat-2}\left[1-\frac{\ssx_{\pat'}}{\ssx_{>\pat'}}\right]^+ \notag
\\ &=\sum_{\pat=1}^{\opat-2}\ssx_{\pat}\prod_{\pat'=1}^{\pat-1}\left[1-\frac{\ssx_{\pat'}}{\ssx_{>\pat'}}\right]^+
+\max\{\ssx_{\opat-1},\ssx_{\opat}\}\frac{[\ssx_{>\opat-2}-\ssx_{\opat-2}]^+}{\ssx_{>\opat-2}}\prod_{\pat'=1}^{\opat-3}\left[1-\frac{\ssx_{\pat'}}{\ssx_{>\pat'}}\right]^+ \notag
\\ &\ge\sum_{\pat=1}^{\opat-2}\ssx_{\pat}\prod_{\pat'=1}^{\pat-1}\left[1-\frac{\ssx_{\pat'}}{\ssx_{>\pat'}}\right]^+
+\frac{\ssx_{>\opat-2}-\ssx_{\opat-2}}{2}\prod_{\pat'=1}^{\opat-3}\left[1-\frac{\ssx_{\pat'}}{\ssx_{>\pat'}}\right]^+ \label{eq:leftOff}
\end{align}
where the inequality applies the facts that $\frac{\max\{\ssx_{\opat-1},\ssx_{\opat}\}}{\ssx_{>\opat-2}}\ge1/2$ and $[\ssx_{>\opat-2}-\ssx_{\opat-2}]^+\ge\ssx_{>\opat-2}-\ssx_{\opat-2}$, after acknowledging that all terms in the product are initially non-negative.

Now, for any $k=\opat-2,\opat-1,\ldots,2$, the following can be derived:
\begin{align}
&\sum_{\pat=1}^k\ssx_{\pat}\prod_{\pat'=1}^{\pat-1}\left[1-\frac{\ssx_{\pat'}}{\ssx_{>\pat'}}\right]^+
+\frac{\ssx_{>k}-\ssx_k}{2}\prod_{\pat'=1}^{k-1}\left[1-\frac{\ssx_{\pat'}}{\ssx_{>\pat'}}\right]^+ \label{eq:indStart}
\\&=\sum_{\pat=1}^{k-1}\ssx_{\pat}\prod_{\pat'=1}^{\pat-1}\left[1-\frac{\ssx_{\pat'}}{\ssx_{>\pat'}}\right]^+
+\ssx_k\prod_{\pat'=1}^{k-1}\left[1-\frac{\ssx_{\pat'}}{\ssx_{>\pat'}}\right]^+ \notag
+\frac{\ssx_{>k}-\ssx_k}{2}\prod_{\pat'=1}^{k-1}\left[1-\frac{\ssx_{\pat'}}{\ssx_{>\pat'}}\right]^+
\\&=\sum_{\pat=1}^{k-1}\ssx_{\pat}\prod_{\pat'=1}^{\pat-1}\left[1-\frac{\ssx_{\pat'}}{\ssx_{>\pat'}}\right]^+
+\frac{\ssx_{>k-1}}{2}\frac{[\ssx_{>k-1}-\ssx_{k-1}]^+}{\ssx_{>k-1}}\prod_{\pat'=1}^{k-2}\left[1-\frac{\ssx_{\pat'}}{\ssx_{>\pat'}}\right]^+ \notag
\\&\ge\sum_{\pat=1}^{k-1}\ssx_{\pat}\prod_{\pat'=1}^{\pat-1}\left[1-\frac{\ssx_{\pat'}}{\ssx_{>\pat'}}\right]^+
+\frac{\ssx_{>k-1}-\ssx_{k-1}}{2}\prod_{\pat'=1}^{k-2}\left[1-\frac{\ssx_{\pat'}}{\ssx_{>\pat'}}\right]^+. \label{eq:indEnd}
\end{align}
When $k=\opat-2$, expression~\eqref{eq:indStart} equals~\eqref{eq:leftOff}.  We have inductively established that this is at least the value of expression~\eqref{eq:indEnd} when $k=2$, which is $\ssx_1+\frac{\ssx_{>1}-\ssx_1}{2}=\frac{1}{2}(\ssx_1+\cdots+\ssx_{\opat})$, completing the proof.
\Halmos\end{proof}

%%% Local Variables:
%%% mode: latex
%%% TeX-master: "follow_your_star_OR"
%%% End:

\begin{proof}{Proof of \Cref{thm:constantHazard}.}
\newcommand{\quit}{q}
\newcommand{\q}{\quit}
Let $\quit_{i} = p_{i} + (1-p_{i})r_{i}$ be the probability that probing terminates after attempting to match $i$ (which occurs either when the match is successful, or when the match fails and the patience is subsequently exhausted); the probability that another match can be attempted is then $1-\quit_{i}$.

%By an exchange argument, we show that probing in decreasing order of $w_{i}p_{i} / q_{i}$ provides an optimal ordering.

% To see this, first suppose wlog that $w_{1}p_{1} / q_{1} \ge w_{2}p_{2} . q_{2} \ge \dots \ge w_{m}p_{m}/q_{m}$.

% Next, let $\pi$ be an optimal permutation, so that probing in the order of
% $u_{\pi(1)}, u_{\pi(2)}, \dots, u_{\pi(m)}$ achieves the maximum possible
% expected reward. Let $\bE[\pi]$ denote this expected reward.
Suppose we have an optimal strategy which probes in the order $u_{1}, u_{2}, \dots, u_{m}$ and achieves the maximum expected reward. We denote this reward by $R^{*}$, and note that
\begin{equation}
\label{eq:chr-opt}%
R^{*} = \sum_{i=1}^{m} \left[ \prod_{j=1}^{i-1}(1-\quit_{j}) w_{i} p_{i}\right]
\end{equation}

If the permutation satisfies $w_{i}p_{i} / \quit_{i} \ge w_{i+1}p_{i+1} / \quit_{i+1}$ for all $i\in[m-1]$, then we are done.
% If $\pi$ is the identity permutation, then $\pi$ probes in the order
% $u_{1}, u_{2}, \dots, u_{m}$, i.e., in decreasing order of
% $w_{i}p_{i} / \quit_{i}$, and we are done.
Suppose to the contrary that there is some $k\in[m-1]$ for which
$w_{k}p_{k} / \quit_{k} < w_{k+1} p_{k+1} / \quit_{k+1}$. Consider the alternate strategy where we swap the order of probing $k$ and $k+1$. That is, we probe in the order $u_{1}, u_{2}, \dots, u_{k-1}, u_{k+1}, u_{k}, u_{k+2}, \dots, u_{m}$. Denote the expected reward of this new strategy by $R'$. We have
\begin{equation}
\label{eq:chr-optswap}%
R' = \sum_{i=1}^{k-1}\left[ \prod_{j=1}^{i-1}(1-\quit_{j}) w_{i}p_{i}\right] + \prod_{j=1}^{k-1}(1-\quit_{j}) \left( w_{k+1}p_{k+1} + (1-\quit_{k+1})w_{k}p_{k} \right) + \sum_{i=k+1}^{m} \left[ \prod_{j=1}^{i-1}(1-\quit_{j}) w_{i}p_{i}\right]
\end{equation}
Subtracting \eqref{eq:chr-optswap} from~\eqref{eq:chr-opt} yields
\begin{equation*}
R^{*} - R' = \prod_{j=1}^{k-1}(1-\quit_{j})\big( w_{k}p_{k} + (1-\quit_{k}) w_{k+1}p_{k+1} \big)
-
\prod_{j=1}^{k-1}(1-\quit_{j}) \big( w_{k+1}p_{k+1} - (1-\quit_{k+1}) w_{k}p_{k} \big)
\end{equation*}
which is equivalent to
\begin{align*}
R^{*} - R' &= \left[ \prod_{j=1}^{k-1}(1-\quit_{j}) \right] \bigg(w_{k}p_{k} + (1-q_{k})w_{k+1}p_{k+1} - w_{k+1}p_{k+1} - (1-q_{k+1})w_{k}p_{k} \bigg) \\
&= \left[ \prod_{j=1}^{k-1}(1-\quit_{j}) \right] \bigg( q_{k+1}w_{k}p_{k} - q_{k}w_{k+1}p_{k+1} \bigg) < 0
\end{align*}
where the inequality holds since, by assumption,
$w_{k}p_{k} / q_{k} < w_{k+1}p_{k+1} / q_{k+1}$, and thus
$w_{k}p_{k} q_{k+1} < w_{k+1}p_{k+1} q_{k}$.

However, this means that $R^{*} < R'$, contradicting the assumption that $R^{*}$ is optimal.
\Halmos
\end{proof}
%%% Local Variables:
%%% mode: latex
%%% TeX-master: "follow_your_star_OR"
%%% End:

\begin{proof}{Proof of \Cref{lem:itemarrivals-large}.}
  For a large item $i$, let $A_{i,t}$ be the event that $i$ arrives before or at
  time $t$. Let $B_{i,t}$ be the event that $i$ is offered before time $t$. Let
  $C_{i,t}$ be the event that we have successfully sold some item in
  $\LRG-\{i\}$ before time $t$.

  We note that for an item $i$ and time step $t$:
  \begin{equation}
    \label{eq:i-probedat-t}
    \Pr(i\text{ is probed at time }t) = \Pr(A_{i,t}\land \bar{B}_{i,t} \land \bar{C}_{i,t})\frac{\rho x_{i}}{\pat}
  \end{equation}
  Next, we observe that the probability that $i$ has arrived by time $t$ is given by
  \begin{equation}
    \label{eq:i-arrive-t}
    \Pr(A_{i,t}) = 1 - (1-q_{i})^{t}.
  \end{equation}
  Next, we consider $\Pr(\bar{B}_{i,t} \cap \bar{C}_{i,t})$, which is given by:
  \begin{equation}
    \label{eq:BitCit}
    \Pr(\bar{B}_{i,t}\cap \bar{C}_{i,t}) = \prod_{t'=1}^{t-1} \Pr(\bar{B}'_{i,t'}\cap \bar{C}'_{i,t'})
  \end{equation}
  where $B'_{i,t'}$ is the event that $i$ is offered at time $t'$ and
  $C'_{i,t'}$ is the event that some item in $\LRG-\{i\}$ is offered at time
  $t'$. Next, we note that
  $\Pr(\bar{B}'_{i,t'}\cap \bar{C}'_{i,t'}) = 1 - \Pr(B'_{i,t'} \lor C'_{i,t'}) \ge 1 - \Pr(B'_{i,t'}) - \Pr(C'_{i,t'})$.
  Then,
  \[
    \Pr(B'_{i,t'}) \le \frac{\rho x_{i}}{\pat} \le \frac{\rho}{\pat}
  \]
  and
  \[
    \Pr(C'_{i,t}) \le \sum_{j} \frac{\rho x_{j}}{\pat}p_{j}
    = \frac{\rho}{\pat}\sum_{j} x_{j}p_{j}  \le \frac{\rho}{\pat}
  \]
  where in the final inequality, we use constraint~\eqref{eq:lp-1match}. These
  together give
  \begin{equation*}
    \Pr(\bar{B}'_{i,t'} \cap \bar{C}'_{i,t'}) \ge 1 - \frac{2\rho}{\pat}
  \end{equation*}
  which combined with~\eqref{eq:BitCit} gives~\eqref{eq:barbbarc1minus2rho}.
  \begin{equation}
    \label{eq:barbbarc1minus2rho}
    \Pr(\bar{B}_{i,t'} \cap \bar{C}_{i,t'}) \ge \left( 1 - \frac{2\rho}{\pat} \right)^{t-1} \ge \left(1 - \frac{2\rho}{\pat}\right)^{\pat}
  \end{equation}
  Combining~\eqref{eq:i-arrive-t} and~\eqref{eq:barbbarc1minus2rho}
  with~\eqref{eq:i-probedat-t} gives us~\eqref{eq:iprobedattbound}.

  \begin{equation}
    \label{eq:iprobedattbound}
    \Pr(i\text{ is probed at time }t)
    \ge \left(1 - (1-q_{i})^{t} \right) \left(1-\frac{2\rho}{\pat}\right)^{\pat}\frac{\rho x_{i}}{\pat}
  \end{equation}

We now use~\eqref{eq:iprobedattbound} to derive a bound on $\Pr(i\text{ gets
    probed})$.
  \begin{align*}
    \Pr(i\text{ gets probed})
    &\ge \frac{\rho x_{i}}{\pat}\left(1 - \frac{2\rho}{\pat}\right)^{\pat}\sum_{t=1}^{\pat} \left(1 - (1-q_{i})^{t}\right) \\
    &= \frac{\rho x_{i}}{\pat}\left(1 - \frac{2\rho}{\pat}\right)^{\pat} \left(\pat - \sum_{t=1}^{\pat}(1-q_{i})^{t}\right) \\
    &\ge \frac{\rho x_{i}}{\pat}\left(1 - 2\rho\right) \left(\pat - \sum_{t=1}^{\pat}(1-q_{i})^{t}\right)
  \end{align*}
  Then, we note that
  \[
    \sum_{t=1}^{\pat}(1-q_{i})^{t} = (1-q_{i}) \frac{1-(1-q_{i})^{\pat}}{q_{i}}
  \]
  For a large item $i$, we have
  \[
    (1-q_{i}) \frac{1-(1-q_{i})^{\pat}}{q_{i}} \le \frac{1-c/\pat}{c/\pat}\left(1 - (1-c/\pat)^{\pat}\right)
    = \frac{\pat}{c} \left(1-\frac{c}{\pat}\right)(1 - (1-c/\pat)^{\pat})
  \]

  where the inequality follows from the fact that the first expression is
  decreasing in $q_{i}$ (and, for large items, $q_{i} \ge c/\pat$).

  This finally gives us a bound for large $i$:

  \begin{equation}\label{eq:large-iprobedbound1}
    \Pr(i\text{ gets probed})
    \ge \rho (1-2\rho)\left(1 - \frac{1}{c}\left(1-c/\pat\right)\left(1-(1-c/\pat)^{\pat}\right)\right) x_{i}
  \end{equation}
  We argue that the expression on the RHS is decreasing in $\pat$. To see this,
  let
  \[
    f_{c}(\pat) = \left(1 - \frac{c}{\pat}\right)\left(1-\left(1-\frac{c}{\pat}\right)^{\pat}\right)
  \]
  so that we can rewrite equation~\eqref{eq:large-iprobedbound1} as
  \[
    \Pr(i\text{ gets
      probed}) \ge \rho(1-2\rho)\left(1 - \frac{1}{c}f_{c}(\pat)\right)x_{i}
  \]
  To show that the RHS of~\eqref{eq:large-iprobedbound1} is decreasing in
  $\pat$, it suffices to show that $f_{c}(\pat)$ is increasing. Taking the
  derivative, we get:
\begin{equation}
  \label{eq:large-fcTderivative}
  \begin{split}
    \frac{df}{d\pat}
    &=
    \left(\frac{c}{\pat^{2}}\right) \left(1 - \left(1-\frac{c}{\pat}\right)^{\pat}\right) +
    \left(1 - \frac{c}{\pat}\right) \left(1-\frac{c}{\pat}\right)^{\pat}\ln\left(1 - \frac{c}{\pat}\right)\frac{c}{\pat^{2}} \\
    &= \frac{c}{\pat^{2}}\left( 1 - \left(1-\frac{c}{\pat}\right)^{\pat} + \left(1-\frac{c}{\pat}\right)^{\pat+1}\ln\left(1-\frac{c}{\pat}\right) \right) \\
    &= \frac{c}{\pat^{2}}\left( 1 - \left(1-\frac{c}{\pat}\right)^{\pat} \left( 1 - \left(1-\frac{c}{\pat}\right)\ln\left(1-\frac{c}{\pat}\right) \right) \right) \\
  \end{split}
\end{equation}
We wish to show that~\eqref{eq:large-fcTderivative} is nonnegative.
First, we observe that
\begin{equation} \label{eq:large-fcT-deriv-ln-bound}
  \begin{split}
    1 - \left(1-\frac{c}{\pat}\right)\ln\left(1-\frac{c}{\pat}\right)
    &\le 1 - \ln\left(1 - \frac{c}{\pat}\right) \\
    &< 1 + \frac{\frac{c}{\pat}}{1 - \frac{c}{\pat}}
    = \frac{1}{1-\frac{c}{\pat}}
  \end{split}
\end{equation}
where the second inequality follows from the fact that $\ln(1+x) > x / (1+x)$
when $x>-1$ (and taking $x=-c/\pat$). Combining~\eqref{eq:large-fcT-deriv-ln-bound}
with~\eqref{eq:large-fcTderivative}, we get:
\begin{equation*}
  \frac{df}{d\pat}
  \ge
  \frac{c}{\pat^{2}}\left( 1 - \left(1-\frac{c}{\pat}\right)^{\pat} \left( \frac{1}{1-\frac{c}{\pat}} \right) \right)
  = \frac{c}{\pat^{2}}\left( 1 - \left(1-\frac{c}{\pat}\right)^{\pat-1} \right)
  > 0
\end{equation*}
Therefore, $f_{c}(\pat)$ is increasing in $\pat$, and the RHS
of~\eqref{eq:large-iprobedbound1} is decreasing in $\pat$. As a result, the
expression is minimized when $\pat\to\infty$, so we can simplify our bound further.
  \begin{equation} \label{eq:largeiprobedbound}%
    \Pr(i\text{ gets probed})
    \ge \rho (1-2\rho)\left(1 - \frac{1}{c}\left(1 - e^{-c}\right)\right)x_{i}
  \end{equation}

  Thus, the total expected weight produced by $\pi_{\textsc{Large}}$ is at least
  \begin{equation}
    \label{eq:LRGbound}
    \begin{aligned}
      \sum_{i\in\LRG} w_{i}p_{i}\Pr(i\text{ gets probed})
      &\ge \sum_{i\in \LRG} w_{i} p_{i} \left(1-\frac{1}{c}\left(1 - e^{-c}\right)\right) \rho(1-2\rho) \\
      &\ge \left(1 - \frac{1}{c}\left(1 - e^{-c}\right)\right)\rho(1-2\rho) (1-\varphi)\mathrm{LP}
    \end{aligned}
  \end{equation}
  where the first inequality comes from~\eqref{eq:largeiprobedbound} and the
  second comes from the assumption.
  \Halmos
\end{proof}

%%% Local Variables:
%%% mode: latex
%%% TeX-master: "follow_your_star_OR"
%%% End:

\begin{proof}{Proof of \Cref{lem:itemarrivals-small}.}
  Since $\sum_{i\in \LRG} w_{i}x_{i}p_{i} < (1-\varphi) \mathrm{LP}$, we must
  have that $\sum_{i\in \SML} w_{i}x_{i}p_{i} \ge \varphi\mathrm{LP}$.

  As before, we consider the probability that an item $i$ is probed at time step
  $t$. Let $A_{i,t}$ be the event that $i$ arrives at time $t$, $B_{t}$ be the
  event that an item is successfully sold before time $t$, and $C_{i,t}$ be the
  event that some other small item $j\ne i$ arrives at time $t$ and wants to be
  probed.
  \begin{align*}
    \Pr(i\text{ probed at time } t)
    &= \Pr(A_{i,t}\cap \bar{B}_{t} \cap \bar{C}_{i,t}) \frac{x_{i}}{1 - (1-q_{i})^{\pat}} \\
    &= \Pr(A_{i,t}) \Pr(\bar{B}_{t} \cap \bar{C}_{i,t} \mid A_{i,t}) \frac{x_{i}}{1 - (1-q_{i})^{\pat}} \\
    &= \Pr(A_{i,t}) \Pr(\bar{B}_{t} \mid A_{i,t}) \Pr(\bar{C}_{i,t} \mid A_{i,t}\cap \bar{B}_{t}) \frac{x_{i}}{1 - (1-q_{i})^{\pat}} \\
    &= \Pr(A_{i,t}) \left(1 - \Pr(B_{t} \mid A_{i,t})\right) \Pr(\bar{C}_{i,t} \mid A_{i,t} \cap \bar{B}_{t}) \frac{x_{i}}{1 - (1-q_{i})^{\pat}}
  \end{align*}

  The probability of $i$ arriving precisely at time $t$ is
  \[
    \Pr(A_{i,t}) = q_{i}(1 - q_{i})^{t-1}.
  \]

  For the probability that an item is successfully sold before time $t$, since
  we condition on $i$ arriving at time $t$ we only need to consider the
  probability of successfully selling some item $j\ne i$:
  \begin{align*}
    \Pr(B_{t} \mid A_{i,t})
    &\le \sum_{j\in \SML-\{i\}} \left(\sum_{t'=1}^{t-1} q_{j}(1-q_{j})^{t'-1} \frac{x_{j}}{1 - (1-q_{j})^{\pat}}p_{j} \right) \\
    &= \sum_{j\in \SML-\{i\}} (1 - (1 - q_{j})^{t-1})\frac{x_{j}}{1 - (1-q_{j})^{\pat}}p_{j} \\
    &= \sum_{j\ne \SML-\{i\}} \frac{1 - (1 - q_{j})^{t-1}}{1 - (1-q_{j})^{\pat}} x_{j}p_{j}
  \end{align*}
  Next, we note that the expression
  \[
    \frac{1 - (1-q_{j})^{t-1}}{1 - (1-q_{j})^{\pat}}
  \]
  is increasing in $q_{j}$ and so we have
  \[
    \Pr(B_{t} \mid A_{i,t})
    \le \sum_{j\ne \SML-\{i\}} \frac{1 - (1-c/\pat)^{t-1}}{1-(1-c/\pat)^{\pat}} x_{j}p_{j}
  \]
  since $q_{j} \le c/\pat$ for small $j$. Using LP constraint~\eqref{eq:lp-1match}
  then gives~\eqref{eq:smallBtbound}.
  \begin{equation}
    \label{eq:smallBtbound}
    \Pr(B_{t} \mid A_{i,t})
    \le \frac{1 - (1-c/\pat)^{t-1}}{1-(1-c/\pat)^{\pat}}
  \end{equation}

  For the probability that another item $j\ne i$ arrives and wants to be probed, we have:
  \begin{equation*}
    \Pr(\bar{C}_{i,t})
    \ge \prod_{j\in \SML-\{i\}} \left(1 - (1-q_{j})^{t-1}q_{j} \frac{x_{j}}{1 - (1-q_{j})^{\pat}}\right)
    \ge \prod_{j\in \SML-\{i\}} \left(1 - q_{j} \frac{x_{j}}{1 - (1-q_{j})^{\pat}}\right)
  \end{equation*}
  We wish to lower bound this product. Let
  $z_{t,j} := q_{j} \frac{x_{j}}{1 - (1-q_{j})^{\pat}}$. Notice that since
  $q_{j} \le c/\pat$ (because $j$ is small) and $x_{j}/(1 - (1-q_{j})^{\pat}) \le 1$
  (by constraint~\eqref{eq:lp-xibound}), we have that $z_{j} \le c/\pat$. Further:
  \[
    \sum_{j\in \SML-\{i\}} z_{j}
    = \sum_{j\in \SML-\{i\}} q_{j} \frac{x_{j}}{1 - (1-q_{j})^{\pat}}
    \le \sum_{j\in \SML-\{i\}} \frac{c}{\pat} \frac{x_{j}}{1 - (1-\frac{c}{\pat})^{\pat}}
    \le \frac{c}{1 - (1-\frac{c}{\pat})^{\pat}}
  \]
  A lower bound is therefore given by the solution to the following minimization
  problem:
  \begin{align*}
    \min &\prod_{j\in \SML-\{i\}} (1 - z_{j}) \\
    \text{s.t. } & z_{j} \le c/\pat, \quad \forall j\in \SML-\{i\} \\
    & \sum_{j\in \SML-\{i\}} z_{j} \le \frac{c}{1 - (1-\frac{c}{\pat})^{\pat}}
  \end{align*}

  Lemma~\ref{lem:small-minprob-bound} (the precise statement and proof of which
  is presented after the present proof) gives us a lower bound on the solution
  to this minimization problem. Using $b=c/\pat$ and
  $S=c / (1 - (1-c/\pat)^{\pat})$, Lemma~\ref{lem:small-minprob-bound} tells us
  that
  \begin{equation}
    \label{eq:small-barCitbound}
    \Pr(\bar{C}_{i,t} \mid A_{i,t} \cap \bar{B}_{t}) \ge
    \left(1-\frac{c}{\pat}\right)^{\frac{\pat}{1 - (1-c/\pat)^{\pat}}}
  \end{equation}

  We can now put everything together to get
  \[
    \Pr(i\text{ probed at time }t)
    \ge q_{i}(1-q_{i})^{t-1}\left( 1
      - \frac{1 - (1-\frac{c}{\pat})^{t-1}}{1 - (1-\frac{c}{\pat})^{\pat}}
    \right)
    \left(1-\frac{c}{\pat}\right)^{\frac{\pat}{1 - (1-c/\pat)^{\pat}}}
    \frac{x_{i}}{1 - (1-q_{i})^{\pat}}.
  \]
  By summing over all time steps $t$, we can get a bound on the probability that a small item $i$ gets probed:
  \begin{align*}
    \Pr(i\text{ gets probed})
    &\ge \sum_{t=1}^{\pat} q_{i}(1-q_{i})^{t-1}\frac{x_{i}}{1 - (1-q_{i})^{\pat}}
      \left( 1
      - \frac{1 - (1-\frac{c}{\pat})^{t-1}}{1 - (1-\frac{c}{\pat})^{\pat}}
      \right)
      \left( 1 - \frac{c}{\pat} \right)^{\frac{\pat}{1 - (1-c/\pat)^{\pat}}} \\
    &= \sum_{t=1}^{\pat} q_{i}(1-q_{i})^{t-1}\frac{x_{i}}{1 - (1-q_{i})^{\pat}}
      \left( \frac{1-(1-\frac{c}{\pat})^{\pat} - (1 - (1-\frac{c}{\pat})^{t-1})}{1 - (1 - \frac{c}{\pat})^{\pat}} \right)
      \left( 1 - \frac{c}{\pat} \right)^{\frac{\pat}{1 - (1-c/\pat)^{\pat}}} \\
    &\ge \left(\frac{1}{\pat} \sum_{t=1}^{\pat} \frac{1-(1-\frac{c}{\pat})^{\pat} - (1 - (1-\frac{c}{\pat})^{t-1})}{1 - (1 - \frac{c}{\pat})^{\pat}} \right)
      \left( 1 - \frac{c}{\pat} \right)^{\frac{\pat}{1 - (1-c/\pat)^{\pat}}}
      \sum_{t=1}^{\pat}q_{i}(1-q_{i})^{t-1}\frac{x_{i}}{1 - (1-q_{i})^{\pat}} \\
    &\ge \left(\frac{1}{\pat} \sum_{t=1}^{\pat} \frac{1-(1-\frac{c}{\pat})^{\pat} - (1 - (1-\frac{c}{\pat})^{t-1})}{1 - (1 - \frac{c}{\pat})^{\pat}} \right)
      \left( 1 - \frac{c}{\pat} \right)^{\frac{\pat}{1 - (1-c/\pat)^{\pat}}}
      x_{i} \\
    &= \frac{1}{\pat} \frac{1}{1 - (1-\frac{c}{\pat})^{\pat}}
      \left( 1 - \frac{c}{\pat} \right)^{\frac{\pat}{1 - (1-c/\pat)^{\pat}}}
      \sum_{t=1}^{\pat} \left( 1- \left(1-\frac{c}{\pat}\right)^{\pat} - \left(1 - \left( 1-\frac{c}{\pat} \right)^{t-1} \right) \right)
      x_{i}
  \end{align*}
%   \wnote{I think the third line should be formally justified using the
    % rearrangement inequality?} \nnote{I had thought about this and was convinced
    % it was valid, but I'm not sure it follows (at least not trivially) from the
    % % rearrangement inequality? Or else, I am not seeing why it's immediate from
    % that. What was the justification you had in mind, Will?}
    Now, note that
  \begin{align*}
    &\sum_{t=1}^{\pat} \left( 1-(1-\frac{c}{\pat})^{\pat} - (1 - (1-\frac{c}{\pat})^{t-1}) \right) \\
    &= \sum_{t=1}^{\pat}\left( -(1-\frac{c}{\pat})^{\pat} + (1 - \frac{c}{\pat})^{t-1}  \right) \\
    &= -\pat(1 - \frac{c}{\pat})^{\pat} + \sum_{t=1}^{\pat}(1-\frac{c}{\pat})^{t-1} \\
    &= -\pat(1 - \frac{c}{\pat})^{\pat} + \frac{\pat}{c}(1 - (1-\frac{c}{\pat})^{\pat})
  \end{align*}
  which gives
  \[
    \Pr(i\text{ gets probed})
    \ge \left( 1 - \frac{c}{\pat} \right)^{\frac{\pat}{1 - (1-c/\pat)^{\pat}}}
    \left( \frac{1}{c} - \frac{(1-c/\pat)^{\pat}}{1 - (1-c/\pat)^{\pat}} \right) x_{i}
    \ge \left( 1 - \frac{c}{\pat} \right)^{\frac{\pat}{1 - (1-c/\pat)^{\pat}}}
    \left(\frac{1}{c} - \frac{e^{-c}}{1 - e^{-c}} \right) x_{i}
  \]
  since $(1-c/\pat)^{\pat}$ is maximized when $\pat\to \infty$ so that $(1-c/\pat)^{\pat} \le e^{-c}$. Further, we note that
  \[
    \left(1 - \frac{c}{\pat}\right)^{\frac{\pat}{1 - (1-c/\pat)^{\pat}}}
  \]
  is increasing in $\pat\ge 2$ when $c\in(0,1)$, and so we have
  \[
    \left(1 - \frac{c}{\pat}\right)^{\frac{\pat}{1 - (1-c/\pat)^{\pat}}} \ge
    \left(1 - \frac{c}{2}\right)^{\frac{2}{1 - (1-c/2)^{2}}}
  \]
  which allows is to get our final bound:
  \begin{align*}
    \sum_{i\in\SML} w_{i} p_{i} \Pr(i\text{ gets probed})
    &\ge \sum_{i\in\SML} w_{i} p_{i} x_{i} \left(\frac{1}{c} - \frac{e^{-c}}{1 - e^{-c}}\right)
    \left(1 - \frac{c}{2}\right)^{\frac{2}{1 - (1-c/2)^{2}}} \\
    &\ge \varphi
    \left(1 - \frac{c}{2}\right)^{\frac{2}{1 - (1-c/2)^{2}}}
    \left(\frac{1}{c} - \frac{e^{-c}}{1 - e^{-c}} \right)\mathrm{LP}
  \end{align*}
  as desired.
  \Halmos
\end{proof}

\begin{lemma}
  \label{lem:small-minprob-bound}
  Consider the following optimization problem over variables $z_{j}$ for
  $j\in J$, with fixed positive constants $b,S$:
  \begin{align}
    F = \min &\prod_{j\in J} (1 - z_{j}) \label{eq:small-minprob-objective} \\
    \textup{s.t. } & z_{j} \le b, \quad \forall j\in J \\
         & \sum_{j\in J} z_{j} \le S
  \end{align}
  Then, the solution satisfies
  \[
    F \ge (1 - b)^{S/b}
  \]
\end{lemma}
\begin{proof}{Proof of \Cref{lem:small-minprob-bound}.}
  First, we claim that in the optimal solution to the above program, at most one
  $j$ has $z_{j} \in (0,b)$; the rest must all be equal to either $0$ or $c/\pat$.
  To see this, note that if we have two items $j,k$ for which both
  $z_{j}\in(0,b)$ and $z_{k}\in(0,b)$, then the objective value is
  $Z\cdot (1-z_{j})(1-z_{k})$, where $Z = \prod_{l\in \SML-\{i,j,k\}}(1-z_{l})$.
  Assume wlog that $z_{j} \ge z_{k}$. Then, we consider the alternate assignment
  $z_{j}' = z_{j}+\epsilon$, $z_{k}' = z_{k} - \epsilon$, where
  $\epsilon = \min\left\{ b - z_{j}, z_{k}\right\}$. Clearly, since
  $0< z_{k} \le z_{j} < b$, we have $0 < \epsilon < b$. We can then see that
  \begin{equation*}
    \begin{aligned}
      \left(1-z_{j}'\right) \left(1-z_{k}'\right)
      &= \left(1 - z_{j} - \epsilon\right) \left(1 - z_{k}' + \epsilon\right) \\
      &= 1 - z_{j} - z_{k} + z_{j}z_{k} + \epsilon z_{k} - \epsilon z_{j} + \epsilon^{2} \\
      &= (1-z_{j})(1-z_{k}) - \epsilon (z_{j} - z_{k} + \epsilon) < (1-z_{j})(1-z_{k})
    \end{aligned}
  \end{equation*}
  since $\epsilon>0$ and $z_{j} \ge z_{k}$. This alternate assignment would
  therefore have an objective value of
  $Z (1-z_{j}')(1-z_{k}') < Z(1-z_{j})(1-z_{k})$.

  Further, note from the definition of $\epsilon$ that we have $z'_{j} \le b$
  and $z'_{k}\ge 0$, with at least one of these being equality. Certainly, we
  also have $z_{j}>0$ and $z_{k}< b$, and it is clear that
  $z_{j}' + z_{k}' = z_{j} + z_{k}$; thus, the new assignment is still feasible.
  This means that if we have two items $j,k$ with $0 < z_{k} \le z_{j} < b$,
  then this is not an optimal assignment since we can decrease the objective
  value by either increasing $z_{j}$ to $b$ or decreasing $z_{k}$ to $0$.

  We can therefore see that the objective is minimized when $z_{j} = b$ for as
  many items as possible. Let $\nu = \lfloor S / b \rfloor$, and
  $\eta = S - b \nu$. The optimal solution sets $z_{j} = b$ for $\nu$ items,
  $z_{j} = \eta$ for a single item, and $z_{j}=0$ for all other items. This
  gives an objective value of
  \[
    F = {\left(1 - b\right)}^{\nu}(1-\eta)
  \]
  and we wish to show that this is lower bounded by $(1-b)^{S/b}$.
  Noting that $S/b = \nu + \eta b$, we wish to show that
  \begin{equation*}
    {\left( 1 - b \right)}^{\nu}(1-\eta) \ge {\left( 1 - b \right)}^{\nu + \eta / b}
  \end{equation*}
  Simplifying this expression, we find it is equivalent to
  \begin{equation}
    \label{eq:etacoverTineq}
    {\left( 1 - \eta \right)}^{1/\eta} \ge {\left(1 - b\right)}^{1/b}
  \end{equation}

  The Taylor series expansion of the natural log gives that
  \[
    \frac{1}{x} \ln(1-x) = -\frac{1}{x}\sum_{r=1}^{\infty} \frac{x^{r}}{r!} = -\sum_{r=1}^{\infty} \frac{x^{r-1}}{r!}
  \]
  From this, it is clear that since $\eta \le b$, we must have
  \[
    \sum_{r=1}^{\infty}\frac{\eta^{r-1}}{r!} \le \sum_{r=1}^{\infty} \frac{b^{r-1}}{r!}
  \]
  and therefore
  $\frac{1}{\eta}\ln(1-\eta) \ge \frac{1}{b}\ln\left(1-\frac{1}{b}\right)$,
  which implies~\eqref{eq:etacoverTineq} as desired.
  \Halmos
\end{proof}

%%% Local Variables:
%%% mode: latex
%%% TeX-master: "follow_your_star_OR"
%%% End:

%%% Local Variables:
%%% mode: latex
%%% TeX-master: "follow_your_star_OR"
%%% End:

\begin{proof}{Proof of \Cref{thm:itemarrivals}}
  We use $\rho=1/4$ and $\varphi=0.37$. For this $\rho,\varphi$, strategy
  $\pi_{\textsc{Large}}$ achieves a ratio of
  \[
    (1-0.37)\left(\frac{1}{8}\right)\left(1 - \frac{1}{c}\left(1 - e^{-c}\right)\right)
  \]
  and strategy
  $\pi_{\textsc{Small}}$ achieves a ratio of
  \[
    0.37
    \left(1 - \frac{c}{2}\right)^{\frac{2}{1 - (1-c/2)^{2}}}
    \left(\frac{1}{c} - \frac{e^{-c}}{1 - e^{-c}} \right)
  \]
  The equation
  \[
    (1-0.37)\left(\frac{1}{8}\right)\left(1 - \frac{1}{c}\left(1 - e^{-c}\right)\right)
    =
    0.37
    \left(1 - \frac{c}{2}\right)^{\frac{2}{1 - (1-c/2)^{2}}}
    \left(\frac{1}{c} - \frac{e^{-c}}{1 - e^{-c}} \right)
  \]
  has a solution at $c\approx 0.9249$, where both $\pi_{\textsc{Large}}$ and
  $\pi_{\textsc{Small}}$ achieve an approximation ratio of approximately $0.0273714 > 0.027$.
  \Halmos
\end{proof}

%%% Local Variables:
%%% mode: latex
%%% TeX-master: "follow_your_star_OR"
%%% End:

\begin{proof}{Proof of \Cref{thm:mehtalpstochgap}.}
Consider the complete bipartite graph $\bpgraph$ with $|U|=|V|=n$ and $p_{uv}=1/n$ for all $(u,v)\in U\times V$. We state the following result that is implied by \citet{karp1981maximum}.
\begin{lemma}[Theorem~14 of~\citet{bb95randgraph}; implied by \citet{karp1981maximum}]  \label{lem:randgraph}
Let $G$ be a random bipartite graph with both partitions of size $n$ and where each edge exists independently with probability $p=1/n$. Let $\gamma$ be the solution to the equation $\gamma = e^{-\gamma}$. Then, the largest independent set of $G$ has size $n(2\gamma+\gamma^{2}))[1+o(1)]$ with probability $1-o(1)$.
\end{lemma}
\Cref{lem:randgraph} implies that, almost surely as $n\to\infty$, there exists an independent set of size $n(2\gamma+\gamma^2)$, and hence there exists a vertex cover of size $2n-n(2\gamma+\gamma^2)\approx 0.544n$ (found by taking all the vertices not in the independent set).
Therefore almost surely no matching can have size greater than $0.544n$.
It follows that no online or offline algorithm can achieve an expected matching size greater than
\[
\frac{(1-o(1))0.544n+o(1)n}{n}=0.544+o(1),
\]
completing the proof that the stochasticity gap is at least 0.544. 
\Halmos\end{proof}

\begin{proof}{Proof of \Cref{thm:halfub}.}
We complete the proof of \Cref{thm:halfub} by showing that an online algorithm cannot earn more than $k+o(k)$.

An adversary in the online setting may expose all
vertices of $V_{n}$ before any of the vertices of $V_{0}$ to the
greedy algorithm. \textup{SimpleGreedy} choosing arbitrarily may in the
worst case choose to probe edges of $E_{0}$ first, preventing some
vertices of $V_{0}$ from being matched later. We consider the
case where SimpleGreedy chooses an edge $(u,v_{i})\in E_{0}$ for each online vertex
$v_{i}\in V_{n}$ if any $u \in U_0$ is available at $v_i$'s arrival. We calculate the
expected size of the matching produced by this strategy.

Let $M$ be a random variable corresponding to the size of the final
matching, and let $M_{0}$ and $M_{n}$ be random variables
corresponding to the number of matched vertices in $V_{0}$ and
$V_{n}$, respectively. Then, the expected size of the matching is
\[
\bE[M] = \bE[M_{n}] + \bE[M_{0}]
= k + \bE[M_{0}].
\]

We now consider $\bE[M_{0}]$. If $l<k$ vertices of $V_{n}$
are matched successfully, then when the vertices of $V_{0}$ arrive
online, there will only be $k-l$ vertices of $U_{0}$ remaining to be
matched. Since $|V_{0}|=kn^{2}$, greedy will almost surely match all
of them successfully. Thus, we get (as $n\to\infty$)
\begin{equation*}
\begin{split}
\bE[M_{0}] &= \sum_{l=0}^{k-1}
\binom{n}{l}(1-k/n)^{n-l}(k/n)^{l}(k-l)
\sim \sum_{l=0}^{k-1}(k-l)\frac{n^{l}}{l!}e^{-k(n-l)/n}\frac{k^{l}}{n^{l}} \\
&\sim \sum_{l=0}^{k-1}(k-l)\frac{e^{-k}k^{l}}{l!}  =
k\sum_{l=0}^{k-1}\frac{e^{-k}k^{l}}{l!} - \sum_{l=1}^{k-1}\frac{e^{-k}k^{l}}{(l-1)!} \\
&= k\left[ \sum_{l=0}^{k-1}\frac{e^{-k}k^{l}}{l!} -
\sum_{i=0}^{k-2}\frac{e^{-k}k^{l}}{l!}\right] =
k\cdot\frac{e^{-k}k^{k-1}}{(k-1)!} =
k\cdot\frac{e^{-k}k^{k}}{k!}.
\end{split}
\end{equation*}
Finally, we observe that for large $k$,
$e^{-k}k^{k}/k! \sim (2\pi k)^{-1/2}$, due to Stirling's formula.
Therefore, $\bE[M]=k+o(k)$.
%Thus, we get a competitive ratio of
%\begin{equation}
%\label{eq:cr}
%\frac{\bE[M]}{\mathrm{OPT}} =
%\lim_{k\to\infty}\frac{k + \sqrt{k/(2\pi)}}{2k} = \frac{1}{2}
%\end{equation}
\Halmos\end{proof}

\begin{proof}{Proof of \Cref{thm:unknownpatience-badcr}.}
Fix a positive integer $k$ and let $m = \omega(k)$ be large.
% ~\nnote{I added this $m=\omega(k)$ bit so that $k/m\to 0$ asymptotically, which we use at the end of the proof.}
An algorithm which knows the patience in advance may adopt the
following strategy.
% If the patience of $v$ is $1$, probe $(u_{0},v)$
% to earn an expected reward of $1$. Otherwise, if the patience is
% $m^{2i}$ for some $i\in\set{1,2,\dots,k-1}$, probe all
% $m^{2i}$ edges $(u,v)$ for which $u\in U_{i}$ (i.e.\ each edge
% incident to a vertex of $U_{i}$).
For any $i=0,\ldots,k$, if the patience of $v$ is $m^{2i}$, then repeatedly probe edges for offline vertices of type $i$, which have weight $m^i$ and probability $m^{-2i}$, until a success or until patience expires.
If any of these edges is
successfully matched, a reward of $m^{i}$ is achieved, so the
expected reward in this case is
$[1 - (1 - m^{-2i})^{m^{2i}}] m^{i}\ge(1-1/e)m^i$. Thus, the expected
reward of this strategy is at least
\begin{align*}
&\sum_{i=0}^{k-1}(m^{-i}-m^{-i-1})(1-1/e)m^i+m^{-k}(1-1/e)m^k
\\ &= \sum_{i=0}^{k-1} (1-1/m)(1-1/e) + (1-1/e)
%\\ &= k(1-1/m)(1-1/e) + (1-1/e)
\\ &=(k+1)(1-1/e)-k/m(1-1/e)
\\ &=(1-1/e)(k+1-k/m).
\end{align*}
% which is $\Theta(k)$ as $m\to\infty$. \nnote{I worked this out, and I think it's
%   basically $\sim k(1 - 1/\euler) - 1/m (k + 1 - 1/\euler) \sim k(1-1/\euler)$, but I think it's fine to just leave it at this? (I think it's not too hard to see at a glance that the summation approaches $k\cdot 1/\euler$, and the other subtracted terms go to zero, as $m\to\infty$).}
%As $m\to\infty$, this approaches
%$1/m^{k} + (1-1/m)k(1 - 1/e) \sim k(1-1/e) = \Theta(k)$.

We now show that an online algorithm cannot achieve an expected reward greater
than $1$.
%% NEW PROOF BELOW
% \textcolor{Green}{To do this, we consider the optimal expected reward achievable
%   by an online algorithm on a slightly modified version of our graph. In the
%   modified graph, we simply add $m^{2k}$ dummy vertices $U_{\varnothing}$, and
%   for each dummy vertex $u\in U_{\varnothing}$, the edge $e = (u, v)$ has
%   probability $p_{e} = 0$. Probing these edges achieves no reward, and
%   contributes nothing to the expected reward of the online algorithm. Thus, no
%   algorithm can ever benefit from probing these edges, and their presence does
%   not affect the optimal strategy or its expected reward. However, their
%   presence will be useful in the inductive argument below.}
We split the online probing strategy into \emph{stages}, where in stage $i$ it is known
that the patience is at least $m^{2i}$, for all $i=0,\ldots,k$.  That is, stage 0 begins on the first probe, and stage $i$ begins with the $m^{2(i-1)}+1$'th probe for all $i=1,\ldots,k$.
% The probing starts at stage $0$, when
% nothing has been probed and it is only known that the patience is at least 1. If
% the first probe fails and the patience has not been exhausted, then we enter
% stage $1$ where it is known that the patience is at least $m^{2}$. Similarly,
% after $m^{2}$ failed probes, if the patience has not been exhausted, then we
% enter stage $2$ where it is known that the patience is at least $m^{4}$. The
% final stage is stage $k-1$, where it is known that the patience has the maximum
% possible value of $m^{2(k-1)}$.
We will let $V(i)$ denote the expected reward
received from stage $i$ onward, conditioned on reaching stage $i$ (with the online vertex still unmatched). Note that the
expected reward of the offline algorithm is then given by $V(0)$.

We proceed by backward induction. First, consider the case where the algorithm
has reached the final stage $i=k$. This occurs after $m^{2(k-1)}$ unsuccessful probes,
so that there are $m^{2(k-1)}(m^{2} - 1)$ remaining probes before
the patience is exhausted. In this case, the expected reward is at most
$m^{k}$, since the maximum weight of any edge is $m^{k}$.

Our inductive hypothesis is $V(i) \le m^{i}+(k-i)m^{i-1}$, which we already demonstrated holds in the base case $i=k$.
Now, consider an arbitrary stage $i < k$.  At the beginning of stage $i$, the
algorithm has made $m^{2(i-1)}$ probes, and the patience is known to be at least
$m^{2i}$. The stage ends after $m^{2i} - m^{2(i-1)}$ unsuccessful probes (it may
end sooner if any probe is successful), at which point the algorithm either
learns that the patience is exactly $m^{2i}$ (and can make no more probes), or
learns that the patience is at least $m^{2(i+1)}$ (and enters stage $i+1$).

We make the following observations about the optimal online algorithm at stage $i$.  The inductive hypothesis implies that $V(i+1)\le m^{i+1}+(k-(i+1))m^i$.  Moreover, conditional on the patience being at least $m^{2i}$, the probability of the patience being at least $m^{2(i+1)}$ (i.e., the next stage existing) is $1/m$.  Therefore, $V(i)$ cannot be greater than in a hypothetical universe where the online algorithm can make $m^{2i} - m^{2(i-1)}$ probes, and if none of them are successful, then collects a deterministic reward of $\frac{1}{m}(m^{i+1}+(k-(i+1))m^i)=m^{i}+(k-(i+1))m^{i-1}$.

In this hypothetical universe, the online algorithm should never probe an edge with weight $m^{i'}$ with $i'\le i$, since it would collect a better reward from waiting for stage $i+1$.  On the other hand, if it probes only edges with weight $m^{i'}$ with $i'>i$ during its $m^{2i} - m^{2(i-1)}$ probes, then $V(i)$ can be no greater than
\begin{align} \label{eqn:hypUni}
m^{2i}m^{-(i+1)}+m^{i}+(k-(i+1))m^{i-1}
% &=m^{i-1}+m^{i}+(k-(i+1))m^{i-1}
% \\ &=m^{i}+(k-i)m^{i-1}
\end{align}
since the online algorithm has $m^{2i} - m^{2(i-1)}\le m^{2i}$ probes and the expected value of each probe can be at most $m^{-2i'}m^{i'}=m^{-i'}\le m^{-(i+1)}$, with the final term $m^{i}+(k-(i+1))m^{i-1}$ representing the online algorithm's reward if none of the probes are successful (expression~\eqref{eqn:hypUni} assumes this reward is always collected, which is clearly a valid upper bound).  Expression~\eqref{eqn:hypUni} equals $m^{i}+(k-i)m^{i-1}$, completing the induction.

Therefore, we have established that $V(0)\le 1+k/m$.  The ratio between the expected earnings of the best online algorithm relative to the offline (which knows the patience realization in advance) is at most
\begin{align*}
\frac{1+k/m}{(1-1/e)(k+1-k/m)}
\end{align*}
which approaches $\frac{1}{(1-1/e)(k+1)}$ as $m\to\infty$, completing the proof.
\Halmos\end{proof}

\section{Relation to the Online Assortment Problem} \label{sec:relation_to_assortment}

The work of~\cite{GNR} considers a model in which online vertices represent customers and offline vertices represent products, and a merchant wishes to offer products to consumers so as to maximize profit. This setting differs from ours in that the merchant offers a collection of several products all at once. Then, the customer chooses to either purchase some product (or multiple products at once) based on products offered or purchase nothing. By contrast, in our model the algorithm (the "merchant" in our setting) attempts one match at a time, stopping when a successful match occurs or the number of unsuccessful attempts equals the patience constraint.

In the setting of~\cite{GNR}, each customer $v$ has a "general choice model" $\phi_{v}(S, u)$ that specifies the probability that customer $v$ purchases item $u$ when offered the set $S$ of items. More generally, since the model considers that $v$ may purchase more that one item, $\phi_{v}(S, S')$ is used to denote the probability that $v$ will purchase exactly the items $S'$ when offered $S$ (and then $\phi_{v}(S, u)$ is defined to be $\sum_{S' : u\in S'}\phi_{v}(S, S')$). It is assumed that the customer will only purchase products that were offered as part of the assortment $S$ (that is, $\phi_{v}(S, S') = 0$ if $S'\not\subseteq S$). 

The algorithm they propose for their model can be viewed as a greedy algorithm which presents an online-arriving customer $v$ with the set $S$ that maximizes the expected profit of the items $v$ purchases. Doing so would guarantee a competitive ratio of at least $0.5$, though this maximization step is not necessarily solvable in polynomial time for arbitrary choice models.
%(they present only a specific family of choice models for which this step can be solved in polynomial time).

Their results do not immediately extend to our setting, as their stochastic model is somewhat different. Extending their results to our setting requires a reduction from our sequential probing with the probe-commit model to this \emph{all-at-once} model by construction of appropriate choice models $\phi$. Further, such a reduction would not necessarily yield a polynomial-time result without also designing an algorithm for solving the aforementioned maximization in polynomial time.

One contribution of the present work is \Cref{alg:dpgreedy}, which indeed can be viewed as
greedily maximizing the expected weight (or profit) of $v$'s match (or
purchase). However, without also constructing a reduction from our sequential
probing model to this all-at-once model, the result of~\cite{GNR} does not extend to
give a competitive ratio of $0.5$ for our problem. Rather, in the present work,
we present a clean, self-contained analysis of \Cref{alg:dpgreedy} to achieve a competitive
ratio of $0.5$ for our problem without relying on the results of~\cite{GNR} or the
similar framework of~\cite{cheung2022inventory} based on "actions".

\section{Proof of \Cref{prop:potentialBased} from \citet{CSL16}} \label{sec:potential_based}

The potential-based framework of \citet{CSL16} finds for each $v\in V$ a policy $\pi$ that is $\kappa$-approximate to the problem of $\max_{\pi\in\cP}\sum_{u\in U}(w_{uv}-\alpha_u)p_{uv}(\pi)$, updates the dual variables $\alpha_u$ accordingly, and then repeats iteratively.
The analysis uses a potential function motivated by the multiplicative weights method \cite{arora2012multiplicative}.
The final solution returned is the average of all the policies found over the iterations, and the total number of iterations needed to average into a feasible and $(\kappa-\epsilon)$-approximate LP solution is polynomial in $1/\epsilon$.

Although it is stated in the context of assortment optimization, Theorem~3.4 from \cite{CSL16} (and its generalization in Appendix~G of~\cite{CSL16}) directly implies our \Cref{prop:potentialBased}, after the following observations about the correspondence with our context of online matching with patience.
In assortment optimization, there are an exponential number of "assortments" $S$ (unordered subsets) of offline vertices that could be offered to an online vertex of type $v$, after which there is a probability $p_{uv}(S)$ of each $u\in S$ being matched.
The only ingredient needed in \cite{CSL16} is that given any set of (possibly negative) adjusted weights $w'_{uv}$ for an online type $v$, an assortment which solves $\max_S\sum_{u\in S}w'_{uv}p_{uv}(S)$ within a factor of $\kappa$ can be found in polynomial time.
The framework is unchanged if in our setting, we consider policies $\pi$ which are \emph{ordered} subsets of offline vertices instead.
Moreover, our black boxes for finding optimal or approximately optimal policies $\pi$ can ignore any negative adjusted weights, since it is never beneficial to probe a negative edge.
Finally, the general packing constraints and flexible products allowed in \cite{CSL16} are generalizations which can be ignored, and as a result the statement of Theorem~3.4 from \cite{CSL16} implies our \Cref{prop:potentialBased}.

\end{APPENDICES}

%\theendnotes

% References here (outcomment the appropriate case)

% CASE 1: BiBTeX used to constantly update the references
%   (while the paper is being written).
%\bibliographystyle{informs2014} % outcomment this and next line in Case 1
%\bibliography{bibliography} % if more than one, comma separated

% CASE 2: BiBTeX used to generate mypaper.bbl (to be further fine tuned)

%If you don't use BiBTex, you can manually itemize references as shown below.

%%%%%%%%%%%%%%%%%
\end{document}